\definecolor{orcidlogocol}{HTML}{A6CE39}
\newtheorem{theorem}{Theorem}[section]
\newtheorem{lemma}[theorem]{Lemma}
\newtheorem{cor}[theorem]{Corollary}
\newtheorem{prop}[theorem]{Proposition}
\newtheorem{conjecture}[theorem]{Conjecture}
\newtheorem{obs}[theorem]{Observation}
\newenvironment{customthm}[1]
  {\innercustomthm}
  {\endinnercustomthm}
\theoremstyle{definition}
\newtheorem{definition}[theorem]{Definition}
\theoremstyle{remark}
\newtheorem{remark}[theorem]{Remark}
\numberwithin{equation}{section}
\DeclareMathAlphabet{\mathpzc}{OT1}{pzc}{m}{it}
\newcommand{\be}{\begin{equation}}
\newcommand{\ee}{\end{equation}}
\newcommand{\bea}{\begin{eqnarray}}
\newcommand{\eea}{\end{eqnarray}}
\title{Stoichiometric recipes for periodic oscillations in reaction networks}
\author{Alexander Blokhuis\thanks{Instituto IMDEA Nanociencia, Madrid, Spain \texttt{alexander.willem@imdea.org}}\;, Peter F. Stadler\thanks{Universität Leipzig, Germany; Max Planck Institute for Mathematics in the Sciences, Leipzig, Germany; Institute for Theoretical Chemistry, University of Vienna, Austria; Facultad de Ciencias, Universidad National de Colombia; The Santa Fe Institute, New Mexico, USA.}\;,
Nicola Vassena\thanks
{Universität Leipzig, Germany, \texttt{nicola.vassena@uni-leipzig.de}}
 }
\date{\today}
\begin{document}

\maketitle

\begin{abstract}
 Oscillatory chemical reactions are functional components in a variety of biological contexts. In chemistry, the construction and identification of even rudimentary oscillators remain elusive and lack a general framework. Using parameter-rich kinetics – a methodology enabling the disentanglement of parametric dependencies from structural analysis – we investigate the stoichiometry of chemical oscillators. We introduce the concept of \emph{oscillatory cores}: minimal subnetworks that guarantee the potential for oscillations in any reaction network containing them. These cores fall into two classes, depending on whether they involve positive or negative feedback. In particular, the latter class unveils a family of oscillators - yet to be synthesized - that require a minimum number of reaction steps to exhibit oscillations, a phenomenon we refer to as the \emph{principle of length}. We identify several mechanisms through which catalysis promotes oscillations: (I) furnishing instability (e.g. autocatalysis), (II) lifting dependencies, (III) lowering length thresholds. Notwithstanding this mechanistic ubiquity, we show that oscillators can also be realized without employing any catalysis. Our results highlight branches of chemistry where oscillators are likely to arise by chance, suggest new strategies for their design, and point to novel classes of oscillators yet to be realized experimentally.\\
{\footnotesize \textbf{Keywords:} Reaction networks $|$ Oscillators $|$ Parameter-rich kinetics $|$ Autocatalysis $|$ Principle of length}
\end{abstract}

\newpage

\tableofcontents

\section{Introduction}

Chemical oscillators were first observed more than a century ago in the Morgan \cite{Morgan:1916} and Bray–Liebhafsky \cite{Bray:1921}
reactions. Lying dormant for more than fifty years, this phenomenon gained
broader interest in the 1970s as an intriguing example of nonlinear
dynamical systems \cite{Noyes:1972}. Beyond classical examples such as the
Belousov--Zhabotinsky reactions \cite{Zhabotinsky:64}, new chemical
oscillators continue to be developed
\cite{Kovacs2007,Semenov2015,Semenov2016,Semenov2021}. This remains an
arduous task, however, due to the lack of a clear structural understanding
of the mechanisms underlying oscillations. While chemical reaction network
theory \cite{Fei19} has yielded results such as the celebrated
deficiency-zero theorem that guarantees stability, little is known that
connects network structure to oscillatory behavior. A structural
characterization of autocatalysis \cite{blokhuis20} and the development of
the concept of \emph{unstable cores} \cite{VasStad23} (which in particular
includes \emph{autocatalytic cores}) have provided sufficient conditions for
instability. Which of these unstable motifs are essential for realizing
specific types of nonlinear dynamics still remains an open question
\cite{Gosh2024,Ivan25}.

It is often thought that the most natural way to analyze the dynamics of a
chemical reaction network is through the use of the law of mass action. At first
glance, one might expect mass-action systems to be easier to analyze, given
that only one parameter appears in each reaction rate. Paradoxically  - but
not quite - the opposite is true: in mass-action systems, the fluxes and
reactivities - i.e., the evaluated reaction functions and their first
derivatives - are necessarily intertwined at a steady-state through their parametric dependencies and coproduction \cite{blokhuis2025datadimensionchemistry}. These intricate relationships makes it difficult to extract
clear criteria or even heuristics from the stoichiometry of the system alone.
\emph{Parameter-rich} models with monotone rate functions $\mathbf{r}$, on the
other hand, provide sufficient flexibility to specify the coordinates of a
(positive) steady-state  $\bar{\mathbf{x}}$ and the non-zero numerical values of
the reactivities
at $\bar{\mathbf{x}}$ independently of each other, see Sec.~\ref{sec:prichmain} below for a more formal definition.  Although parameter-richness might appear to be somewhat restrictive, it is readily satisfied by important kinetic models such as Michaelis--Menten \cite{MM13}, Hill \cite{Hill10} and generalized mass-action kinetics \cite{Muller:12}, see \cite{VasStad23, blokhuis2025datadimensionchemistry}. This approach has thus intrinsic validity as a
design framework for oscillators in biochemical systems.

 Motivated by this, we investigate stoichiometric recipes - simple network motifs that ensure the potential for periodic oscillations in any reaction network containing them. The parameter-rich framework allows us to formulate conditions directly at the stoichiometric level, providing accessible criteria for initial analysis. More detailed kinetic aspects, such as thermodynamic feasibility, must be assessed case by case and lie beyond our current scope. Recipe \ref{recipe:1main} involves an unstable-positive feedback within a stable subnetwork; the feedback may be autocatalytic or not. Recipe \ref{recipe:2main} relies on an unstable-negative feedback, which is always non-autocatalytic: oscillations may appear only when a sufficient number of reaction steps are present (\emph{Principle of length}, see Sec.~\ref{sec:principlelength}). Finally and independently, we observe that instability always leads to oscillations if multistationarity is excluded. This broader mechanism is captured in Recipe \ref{recipe:0main}. See Fig.~\ref{fig:cores} for a first bird-eye overview on the three recipes.

The paper is organized as follows. Section~\ref{sec:mathresmain} presents the main mathematical result in concise form and without proofs. Section~\ref{sec:principlelength} introduces and focuses on the \emph{principle of length} for negative-feedback oscillators. The roles of catalysis are discussed in Section~\ref{sec:catalysis}, and perspectives on constructing actual oscillators in the laboratory are presented in Section~\ref{sec:buildingosci}. Section~\ref{sec:discussion} concludes with a wrap-up discussion. The proofs of the main results stated in Section~\ref{sec:mathresmain} are given in Section~\ref{sec:proofs}.

The Supplementary Material (SM) consists of three parts. Part~I presents the background mathematical framework in greater detail, including proofs of straightforward statements as well as results already established elsewhere, for self-containment. Part~II provides a detailed analysis of the examples discussed in the main text and introduces additional examples to further support our work. Part~III explicitly illustrates how to identify periodic solutions \emph{in silico} for parameter-rich systems.

\paragraph{Acknowledgments.}A.B. acknowledges fruitful conversations with Thomas Hermans. A.B. acknowledges the EU (MSCA-PF-2023 "KENA"
  no. 101155395). N.V. and P.F.S. ackowledge the MATOMIC consortium,
  funded by the Novo Nordisk Foundation, grant NNF21OC0066551. Research in
  the Stadler labs is supported by the German Federal Ministry of Education
  and Research BMBF through DAAD project 5761681433 (SECAI, School of
  Embedded Composite AI).

\section{Mathematical Results}\label{sec:mathresmain}

\subsection{D-Hopf matrices and oscillations} 

Throughout this contribution we assume a finite set of chemical reactions
$R$ on a finite set $M$ of chemical species.  For ease of presentation we
exclude here explicit catalysis, i.e., in any given reaction, a species
does not appear as both a reactant and a reaction product. We refer to the Supplementary Material (SM) for a presentation of the theory in full generality, without such an assumption. The reaction
network $\mathbf{\Gamma}=(M,E)$ is therefore described by its
\emph{stoichiometric matrix} $\mathbb{S}$ with entries $\mathbb{S}_{mj}<0$
if $\mathsf{m}\in M$ is a reactant in reaction $j\in E$,
$\mathbb{S}_{mj}>0$ if $\mathsf{m}$ is a product of $j$, and
$\mathbb{S}_{mj}=0$ if $\mathsf{m}$ does not take part in $j$.

\begin{figure} 
\centering
\includegraphics{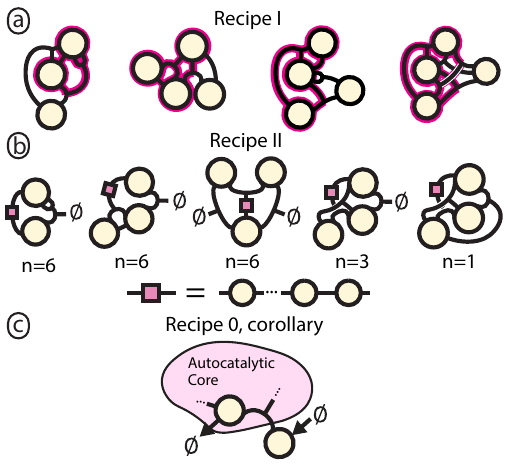}
\caption{a) Via Recipe \ref{recipe:1main}, oscillatory cores obtained by extending autocatalytic cores to stable subnetworks. b) Via Recipe \ref{recipe:2main}, oscillatory cores based on negative-feedback unstable cores, which are always nonautocatalytic. c) A simple setup to make an oscillator following Recipe \ref{recipe:0main}: to an autocatalytic core, a replenishment of an off-core reactant consumed by an autocatalyst is added, along with a degradation step for the same autocatalyst.}
\label{fig:cores}
\end{figure}

We consider well-mixed, spatially homogeneous reaction networks, whose time
evolution for the concentration vector $\mathbf{x}=\mathbf{x}(t)\ge0$ is described by the system of ordinary
differential equations (ODE): \begin{equation}\label{eq:maineqmain}
\dot{\mathbf{x}} = \mathbb{S}\mathbf{r}(\mathbf{x}).   
\end{equation}  The nonnegative reaction rates
$\mathbf{r}(\mathbf{x})\ge0$ depends on the vector $\mathbf{x}$ of concentrations.  We consider only so-called monotone reaction rates, which satisfy the following natural assumptions: each rate $r_j(\mathbf{x}) \ge 0$ depends only on the concentrations of its reactants and is zero if and only if at least one of their concentrations is zero. If all of them are present at non-zero concentrations, moreover, the rate increases with the concentrations. That is, the \emph{reactivity}
$R_{jm}\coloneqq \partial r_j/\partial x_m$ is positive for all $\mathsf{m}\in X$
with $\mathbb{S}_{mj}<0$ and $R_{jm}=0$ otherwise, provided $x_m>0$ for all
$m\in X$. The fact that the non-zero reactivities $R_{jm}$ for any positive
concentration vector $\mathbf{x}>0$ are determined by $\mathbb{S}$, and thus by the
structure of the network $\mathbf{\Gamma}$ alone, plays a key role in this
work.

We focus on \emph{consistent} reaction networks \cite{Ang07}, which admit at least one positive steady-state, i.e., whose stoichiometric matrix $\mathbb{S}$ has at least one positive kernel vector $v>0$: 
\begin{equation}
\mathbb{S}v=0.   \end{equation}

The theorem of Hartman and Grobman \cite{Hsubook} then ensures that the local behavior of the dynamical system 
\eqref{eq:maineqmain}
near a hyperbolic steady state $\bar{\mathbf{x}}$ is entirely determined by the spectrum of its linearization - the Jacobian matrix 
\begin{equation}
G(\bar{\mathbf{x}}) = \mathbb{S} R(\bar{\mathbf{x}}), 
\end{equation}
where the \emph{reactivity matrix} $R(\bar{\mathbf{x}})$ encodes all reactivities $R_{jm}$, i.e., the partial derivatives of the reaction rates. In particular, Hurwitz-stability of $G(\bar{\mathbf{x}})$, i.e. all of its eigenvalues have negative real-part, implies local dynamical stability of $\bar{\mathbf{x}}$. 
Accordingly, we identify the onset of periodic orbits by locating \emph{Hopf bifurcation points} $\bar{\mathbf{x}}_H>0$, where the Jacobian loses hyperbolicity due to the presence of purely imaginary eigenvalues. Instead of relying on the classical approach based on \emph{local} Hopf bifurcation \cite{GuHo84}, we leverage the theory on \emph{global} Hopf bifurcation \cite{Fiedler85PhD} (see Fiedler's Thm.~\ref{thm:fiedlermain}), which provides a more computationally tractable criterion for the existence of periodic orbits.
To this aim, we build on the linear algebra concept of $D$-stability. A matrix $A$ is said to be \emph{$D$-stable} if $AD$ is Hurwitz-stable for every positive diagonal matrix $D$, and \emph{$D$-unstable} if there exists a positive diagonal matrix $D$ such that $AD$ is Hurwitz-unstable. We now recall the concept of \emph{inertia} of a matrix \cite{ostrowski62}.

\begin{definition}[Inertia of a matrix]
The inertia of an $n \times n$ square matrix $A$ is a nonnegative triple 
    $$\operatorname{inertia}(A) \coloneqq (\sigma^-_A,\sigma^+_A,\sigma^0_A),$$
where $\sigma^-_A$, $\sigma^+_A$, and $\sigma^0_A$ represent the number of eigenvalues of $A$ with negative real part, positive real part, and zero real part, respectively. The eigenvalues are counted according to their algebraic multiplicities so that $\sigma^+_A + \sigma^-_A + \sigma^0_A = n$.
\end{definition}

We introduce some necessary notation: let $\kappa$ denote any selection of $k \leq n$ indices from $\{1,\dots,n\}$. $A[\kappa]$ represents the principal submatrix of $A$, obtained by retaining the columns and rows with common indices given by those in $\kappa$; its determinant is referred to as \emph{a $k$-principal minor} of $A$. We state a new definition of $D$-Hopf matrix as follows.

\begin{definition}[D-Hopf matrix]\label{def:dhopfmain}
A $n\times n$ matrix $A$ is called \emph{D-Hopf} if there exists an invertible $k$-principal submatrix $A[\kappa]$ of $A$, and two positive $k\times k$ diagonal matrices $D_1$ and $D_2$ such that 
\begin{equation}\label{eq:changeofinertiamain}
\operatorname{inertia}A[\kappa]D_1\neq\operatorname{inertia}A[\kappa]D_2.
\end{equation}
\end{definition}
The motivation for introducing $D$-Hopf matrices lies in their link to purely imaginary eigenvalues. 
\begin{lemma}\label{lem:Dhopfmain}
If $A$ is $D$-Hopf, then there exists positive diagonal matrix $D$ such that
$AD$ has purely imaginary eigenvalues.
\end{lemma}
See the SM, Lemma~\ref{lem:Dhopfpurely}, for a proof. Our main mathematical result, leveraging global Hopf bifurcation and the structure of reaction networks, lifts Lemma \ref{lem:Dhopfmain} from linear algebra to nonlinear dynamics.
\begin{theorem}\label{thm:mainmain}
Under a mild nondegeneracy condition, if there exists a parameter choice for which the Jacobian $G(\bar{\mathbf{x}})=\mathbb{S}R(\bar{\mathbf{x}})$ is $D$-Hopf at a positive steady-state $\bar{\mathbf{x}}>0$, then the system \eqref{eq:maineqmain} admits periodic orbits.
\end{theorem}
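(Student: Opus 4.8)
The plan is to convert the purely linear-algebraic $D$-Hopf hypothesis into a genuine Hopf bifurcation for the nonlinear flow by building a one-parameter family of Jacobians that all arise from admissible parameter choices. The starting observation is that parameter-richness makes right-multiplication by a positive diagonal matrix an admissible operation: if $R(\bar{\mathbf{x}})$ is a reactivity matrix with the sign pattern dictated by $\mathbb{S}$, then so is $R(\bar{\mathbf{x}})D$ for every positive diagonal $D$ on the species, since this only rescales the nonzero entries without altering their signs. Because matrix multiplication is associative, $G(\bar{\mathbf{x}})D=\mathbb{S}R(\bar{\mathbf{x}})D=\mathbb{S}\bigl(R(\bar{\mathbf{x}})D\bigr)$ is therefore realizable as the Jacobian of the system at a positive steady state, the location $\bar{\mathbf{x}}$ being fixed independently by the remaining kinetic freedom. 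This identifies the diagonal scalings appearing in the definition of $D$-Hopf with actual moves in parameter space.

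First I would use the $D$-Hopf data to set up the bifurcation path. Let $G[\kappa]$ be the invertible principal submatrix and $D_1,D_2$ the positive diagonals on $\kappa$ for which the spectra of $G[\kappa]D_1$ and $G[\kappa]D_2$ have different sign patterns. Choose the segment $s\mapsto D_\kappa(s)=(1-s)D_1+sD_2$ of positive diagonals joining $D_1$ to $D_2$, and extend it to the full species set by $D(s)=\mathrm{diag}\bigl(D_\kappa(s),\varepsilon I\bigr)$ with $\varepsilon>0$ small and fixed. In the block splitting $\kappa,\kappa^{c}$ the scaled Jacobian reads
\be
G(\bar{\mathbf{x}})D(s)=\begin{pmatrix} G[\kappa]\,D_\kappa(s) & \varepsilon B\\ C\,D_\kappa(s) & \varepsilon\,G[\kappa^{c}]\end{pmatrix},
\ee
which degenerates to a block lower-triangular matrix as $\varepsilon\to0$. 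Its spectrum therefore splits into a \emph{fast} part converging to the spectrum of $G[\kappa]D_\kappa(s)$ and a \emph{slow} part of $|\kappa^{c}|$ eigenvalues collapsing to the origin.

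The core of the argument is then the crossing count. Because $G[\kappa]$ is invertible and $D_\kappa(s)$ stays positive, $G[\kappa]D_\kappa(s)$ is invertible for all $s$, so its eigenvalues never pass through $0$; the change of sign pattern between $s=0$ and $s=1$ forces a conjugate pair to cross the imaginary axis at nonzero height, which is precisely the purely imaginary eigenvalue produced by Lemma~\ref{lem:Dhopfmain}. For $\varepsilon$ small the fast eigenvalues of $G(\bar{\mathbf{x}})D(s)$ inherit this crossing, while the nondegeneracy hypothesis keeps the $|\kappa^{c}|$ slow eigenvalues bounded away from the imaginary axis, on a definite side, near the crossing parameter. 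Consequently the net change in the number of eigenvalues with positive real part along the path is nonzero and is realized through purely imaginary, nonzero eigenvalues. This is exactly the topological input required by global Hopf bifurcation (Thm.~\ref{thm:fiedler}): a nonzero crossing number at an isolated center with nonvanishing frequency yields a global continuum of periodic orbits emanating from the stationary branch, and in particular periodic orbits exist along the path.

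I expect the main obstacle to be the rigorous bookkeeping of the slow eigenvalues and the verification of the global Hopf hypotheses rather than the crossing itself. One must ensure that the $\varepsilon$-collapse to the origin does not create spurious imaginary crossings that could cancel the Hopf index, that $\bar{\mathbf{x}}$ can genuinely be held positive and hyperbolic away from the single crossing along the entire path, and that the center produced by Lemma~\ref{lem:Dhopfmain} is isolated with a well-defined, nonzero $S^{1}$-equivariant degree. It is here that the ``mild nondegeneracy condition'' enters: it should be formulated so as to exclude a simultaneous zero eigenvalue, i.e.\ a steady-state bifurcation masquerading as Hopf, and to guarantee the isolatedness needed for Fiedler's theorem, after which the existence of periodic orbits follows without any transversality or genericity computation.
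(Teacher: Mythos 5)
Your proposal follows essentially the same route as the paper: parameter-richness turns right-multiplication of $R$ by a positive species-diagonal matrix into an admissible parameter move (this is Lemma~\ref{lem:mainGD}), a path of positive diagonals joins $D_1$ to $D_2$, the path is extended to the remaining species at a small scale $\varepsilon$, and Fiedler's global Hopf theorem (Thm.~\ref{thm:fiedler}) converts invertibility plus a net inertia change into periodic orbits; the fast-eigenvalue crossing argument is the one in Lemma~\ref{lem:Dhopfpurely} and Lemma~\ref{lem:dynreduction}.

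The gap is in your treatment of the $|\kappa^{c}|$ slow eigenvalues, which is exactly where the nondegeneracy condition must do its work, and your construction cannot deliver what you claim of it. If the network carries $n>0$ conservation laws, then $\operatorname{rank}\bigl(G(\bar{\mathbf{x}})D(s)\bigr)\le\operatorname{rank}\mathbb{S}=|M|-n$ for every positive diagonal $D(s)$, so the Jacobian you hand to Fiedler's theorem is singular along the \emph{entire} path and condition 3 fails identically; the paper first reduces the dynamics to a stoichiometric compatibility class, a step you never perform. Even after such a reduction the problem persists: for your matrix $\begin{pmatrix} G[\kappa]D_\kappa(s) & \varepsilon B\\ C\,D_\kappa(s) & \varepsilon\,G[\kappa^{c}]\end{pmatrix}$ the slow eigenvalues are, to leading order, $\varepsilon$ times the eigenvalues of the Schur complement $G[\kappa^{c}]-C\,G[\kappa]^{-1}B$ (the factor $D_\kappa(s)$ cancels), and since $\det G=\det G[\kappa]\cdot\det\bigl(G[\kappa^{c}]-C\,G[\kappa]^{-1}B\bigr)$, this Schur complement is singular whenever the full $G$ is --- automatic in the presence of conservation laws, and possible without them, because the $D$-Hopf hypothesis makes only $G[\kappa]$ invertible, not $G$. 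So the slow spectrum is not ``bounded away from the imaginary axis on a definite side'': part of it can sit at zero for all $s$ and all $\varepsilon$, and no diagonal rescaling of the fixed matrix $G$ can repair a rank deficiency. Note also that the paper's nondegeneracy condition is not a property of the given $G$; it asserts the existence of \emph{some other} parameter choice with an invertible $(|M|-n)\times(|M|-n)$ CS-matrix (Lemma~\ref{lem:nondegcs}). To exploit it, the proof of Lemma~\ref{lem:dynreduction} perturbs \emph{individual reactivities} $R_{jm}$ at two scales --- $\varepsilon$ on the entries of that chosen CS-matrix, $\varepsilon^{\eta}$ with $\eta$ large on everything else --- so that the determinant of the reduced Jacobian is dominated by a single nonzero CS term along the whole path. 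This is a strictly finer perturbation than any right-diagonal scaling (different reactions consuming the same species receive different scales), and it is what makes Fiedler's invertibility hypothesis, and hence an uncancelled net inertia change at the endpoints (which is all condition 4 requires --- no isolated center or $S^{1}$-degree computation is needed), actually verifiable.
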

The proof is presented in Sec.~\ref{sec:proofs}. The nondegeneracy condition simply requires that the system's dimension equals the number of species minus the number of independent linear conservation laws, see Def.~\ref{def:nondegnetmain} for details. Theorem~\ref{thm:mainmain} also holds for mass-action kinetics, but to derive sufficient conditions that can be easily interpreted in terms of the stoichiometry, without any algebraic pre-processing, we assume parameter-rich kinetics from this point onward.

\subsection{Parameter-rich kinetics and stoichiometric interpretations}\label{sec:prichmain}
Concrete rate models
$\mathbf{r}(\mathbf{x})$, and thus reactivities, typically depend on a set of positive parameters
$\mathbf{p}>0$. We write $\mathbf{r}(\mathbf{x};\mathbf{p})$ to emphasize this fact.  Parameter-rich kinetic models are defined as follows.
\begin{definition}\label{def:prichmain}
 A parametric kinetic model $\mathbf{r}(\mathbf{x};\mathbf{p})$ is  said to be
\emph{parameter-rich} if, for any prescribed steady-state concentration
$\bar{\mathbf{x}}>0$ and any assignment of the non-zero reactivities $R_{jm}>0$, there is a choice of parameters $\bar{\mathbf{p}}$ such that:
\begin{equation}
  \begin{cases}
    \mathbb{S} \mathbf{r}(\bar{\mathbf{x}}; \bar{\mathbf{p}}) = 0, \\
    \dfrac{\partial r_j(\mathbf{x};\bar{\mathbf{p}})}{\partial x_m}
    \bigg|_{\mathbf{x}=\bar{\mathbf{x}}} = R_{jm}.
  \end{cases}
\end{equation}   
\end{definition}
Therefore, under the assumption of parameter-rich kinetics, $R(\bar{\mathbf{x}})$ is
interpreted as a non-negative matrix with non-zero entries determined entirely by
$\mathbb{S}$, whose numerical value can be chosen independently of
$\bar{\mathbf{x}}$. Thus we simply write $R$ instead of $R(\bar{\mathbf{x}})$ and call
$R$ the \emph{symbolic reactivity matrix}. Correspondingly, the Jacobian $G=\mathbb{S}R$ also becomes a symbolic
matrix. The parameter-rich framework thus permits to study changes in the
spectrum of the \emph{symbolic Jacobian} $G$ directly. This amounts to
assessing the spectral properties of $G$ as we scan all possible
choices of the reactivities $R$. This approach allows us to
derive simple, stoichiometry-based sufficient recipes for the onset of
periodic orbits. Since Michaelis--Menten kinetics, the Hill model, and
generalized mass-action kinetics all fall within the class of
parameter-rich models \cite{VasStad23}, this framework serves as a sound basis for understanding biochemical oscillators.

\subsubsection{Child Selections and Cores}
The parameter-rich framework allows the values of reactivities to be chosen so that certain subsets of reactions and reactants dominate the linearized dynamics. This, in turn, enables the analysis of the spectrum of symbolic Jacobians by isolating $k \times k$ stoichiometric submatrices that - under suitable parameter choices - approximate $k$ dominant eigenvalues of the full Jacobian. The same strategy has previously been used to identify stoichiometric sources of network instability \cite{VasStad23}. We consider subsets of $k$ reactants $\kappa
\subseteq M$ and $k$ reactions $E_\kappa\subseteq E$ such that there is 1-1 correspondence (i.e. a bijection $J$) between species 
$\mathsf{m} \in \kappa$ and a reaction $j=J(m) \in E_\kappa$ where $\mathsf{m}$ participates as a reactant. We call the triple $\pmb{\kappa}=(\kappa,E_\kappa,J)$ a \emph{Child-Selection triple} (CS) and we define the associated \emph{Child-Selection matrix} (CS-matrix) $\mathbb{S}[\pmb{\kappa}]$ as:
\begin{equation}
\mathbb{S}[\pmb{\kappa}]_{mn}:=\mathbb{S}_{mJ(n)}.
\end{equation}
In the absence of explicit catalysis, i.e. species appearing both as a reactant and a product of a reaction, CS-matrices correspond to square submatrices of the stoichiometric matrix $\mathbb{S}$ that, after an appropriate column permutation, yield matrices $\mathbb{S}[\pmb{\kappa}]$ with a strictly negative diagonal.

Again following \cite{VasStad23}, an analysis of the characteristic polynomial of the symbolic Jacobian  $G = \mathbb{S}R$, using the Cauchy--Binet formula, reveals the following: by tuning all reactivities $R_{jm}$ with $(\mathsf{m},j) \neq (\mathsf{m}, J(\mathsf{m}))$ sufficiently small, the product  
\begin{equation}\label{eq:Dmain}
\mathbb{S}[\pmb{\kappa}] \cdot \operatorname{diag}(R_{m_1J(m_1)}, \dots, R_{m_kJ(m_k)}), \quad \kappa=\{m_1,...,m_k\}
\end{equation}
approximates the spectrum of the principal submatrix $G[\kappa]$ associated to the set $\kappa$ and, moreover,  approximates the $k$ dominant eigenvalues of the full Jacobian $G$.  In particular, then, the existence of $D$-unstable (and in particular of any Hurwitz-unstable) CS-matrix $\mathbb{S}[\pmb{\kappa}]$ is sufficient for the existence of a parameter choice such that the network admits a Hurwitz-unstable steady-state. To make this contribution self-contained, we prove this statement as Prop.~\ref{pro:csunstable} in the SM. The formulation \eqref{eq:Dmain} as a product with a positive diagonal matrix is pivotal in our emphasis on $D$-Hopf matrices as well.

The fact that any restriction $\pmb{\kappa'}=(\kappa'\subset \kappa, E_{\kappa'}\subset E_\kappa, J)$ of a CS $\pmb{\kappa}$ is itself a CS makes minimal  CS well-defined. At the level of the associated CS-matrices, the restriction $\mathbb{S}[\pmb{\kappa}']$ appears as a principal submatrix of $\mathbb{S}[\pmb{\kappa}]$. 
We can then define \emph{cores} in the network based on the following matrix definition.
\begin{definition}
Let $\mathbb{P}$ be a matrix property. A $\mathbb{P}$-core is a
CS-matrix $\mathbb{S}[\pmb{\kappa}]$ with property $\mathbb{P}$ that
does not have a proper principal submatrix with property $\mathbb{P}$.
\end{definition}
Unstable cores (resp. D-unstable cores) are CS-matrices that are minimal
with the property of being Hurwitz-unstable (resp. D-unstable)
\cite{VasStad23}. A $k\times k$ unstable core $\mathbb{S}[\pmb{\kappa}]$ is called
an \emph{unstable-positive feedback} if
$\operatorname{sign}\operatorname{det}\mathbb{S}[\pmb{\kappa}]=(-1)^{k-1}$ and
\emph{unstable-negative feedback} if
$\operatorname{sign}\operatorname{det}\mathbb{S}[\pmb{\kappa}]=(-1)^{k}.$ This definition generalizes the convention \cite{Tyson2002} by which feedback loops are built up from products of coefficients in the Jacobian as they appear in the characteristic polynomial. Whether a feedback loop is positive or negative is determined by the sign of this product. 
Unstable-positive feedbacks always have one single real positive
eigenvalues, while unstable-negative feedbacks only possess a conjugate
complex pair of eigenvalues with positive real part, but no real positive
eigenvalues.  Finally, a notion of autocatalytic cores was introduced in
\cite{blokhuis20} to classify minimal autocatalytic subsystems in reaction
networks. Autocatalytic cores then turned out to be a special case of
unstable-positive feedbacks. More precisely, the autocatalytic
cores in the sense of \cite{blokhuis20} are exactly the
unstable cores in the sense of \cite{VasStad23} whose restricted
stoichiometric matrix is a Metzler matrix, i.e., all off-diagonal elements
are non-negative (while the diagonal has only negative entries).

\begin{figure}
\centering
\includegraphics{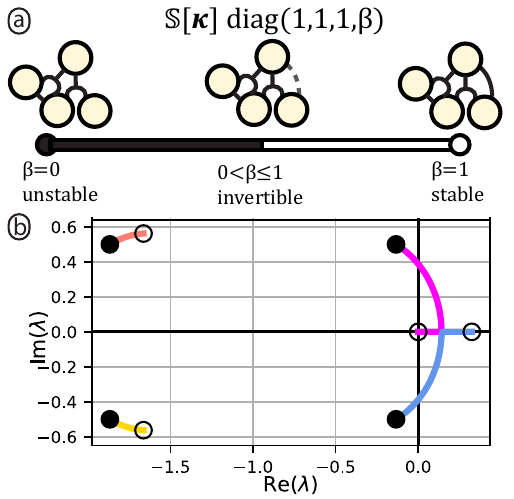}
\caption{Recipe \ref{recipe:1main} illustrated for the motif $\ce{X}\rightarrow\ce{Y}+\ce{Z}$; $\ce{Y}\rightarrow \ce{Z}$; $\ce{Z}+\ce{W}\rightarrow \ce{X}$; $\ce{W}\rightarrow \ce{X}$. The first three reactions on species $(\ce{X},\ce{Y},\ce{Z})$ form an autocatalytic core with stoichiometric matrix as in \eqref{eq:autcorII}. Any consistent reaction network that includes such motif has the capacity for periodic oscillations. The bifurcation process follows a parameter $\beta\in(0,1]$. At $\beta=1$ the associated CS-matrix is Hurwitz stable, while at $\beta\approx0$ is unstable due to the presence of the autocatalytic core. The product matrix is invertible throughout: change of stability happens at purely imaginary eigenvalues. a) The bifurcation process is intuitively illustrated in the network as a removal of the reactivity of one reaction. b) The trajectory of the eigenvalues is displayed where a purely imaginary crossing can be seen.}
\label{fig:recipe_1}
\end{figure}

\subsubsection{Oscillatory Cores}
The concept of cores as minimal topological substructures conveying the
potential for qualitative dynamical properties of reaction networks is
investigated here for oscillatory behavior.
\begin{definition}\label{def:Oscillatorycoremain}
 A $D$-Hopf CS-matrix
$\mathbb{S}[\pmb{\kappa}]$ is called an \textbf{oscillatory core} if none of its principal submatrices is $D$-Hopf.   
\end{definition}
The existence of any
$D$-Hopf CS-matrix $\mathbb{S}[\pmb{\kappa}]$, irrespective of its size, is
sufficient for the assumptions of Thm.~\ref{thm:mainmain} to hold, under
parameter-rich kinetics, guaranteeing the 
emergence of periodic orbits, as the following corollary to Thm.~\ref{thm:mainmain} states.
\begin{cor}\label{cor:dhopfoscillationsmain}
  Let $\pmb{\Gamma}$ be a network with parameter-rich kinetics.  If
  any CS-matrix $\mathbb{S}[\pmb{\kappa}]$ is $D$-Hopf, then the system
  admits periodic solutions.
\end{cor}
See Sec.~\ref{sec:proofscor} for a proof. While the $D$-Hopf property can be in general computationally difficult to
verify, there exist several explicit and tractable configurations in which
it can be asserted directly. Here, we focus on two complementary and
illustrative cases.  Consider any $k\times k$ CS-matrix
$\mathbb{S}[\pmb{\kappa}]$ with
$\operatorname{sign}\operatorname{det}\mathbb{S}[\pmb{\kappa}]=(-1)^k,$ and
let $\mathbb{S}[\pmb{\kappa}-1]$ indicate any of its $(k-1)\times (k-1)$
principal submatrix, i.e. without loss of generality:
\begin{equation}
\mathbb{S}[\pmb{\kappa}]=\begin{pmatrix}
    \mathbb{S}[\pmb{\kappa}-1] & \vdots\\
    ... & -
\end{pmatrix}.\end{equation}
If the Hurwitz-stability of $\mathbb{S}[\pmb{\kappa}]$ and $\mathbb{S}[\pmb{\kappa}-1]$ differs, then $\mathbb{S}[\pmb{\kappa}]$ is $D$-Hopf (See Cor.~\ref{cor:Ocores1} and Cor.~\ref{cor:Ocores2b} in the SM for a straightforward proof) and thus \emph{the network admits periodic orbits}. Naturally, two complementary configurations arise, depending on whether $\mathbb{S}[\pmb{\kappa}]$ is Hurwitz-stable and $\mathbb{S}[\pmb{\kappa}-1]$ is Hurwitz-unstable, or \textit{vice versa}.

Generalizing the above perspective, we define  \textbf{Oscillatory core of class I} as follows.
\begin{definition}[Oscillatory Cores of Class I]\label{def:Ocores1main} An \textbf{Oscillatory Core of Class I} is a CS-matrix $\mathbb{S}[\pmb{\kappa}]$ that is minimal with the property of being Hurwitz-stable and possessing an unstable-positive feedback as a principal submatrix $\mathbb{S}[\pmb{\kappa}']$. 
\end{definition}
Since autocatalysis is a specific
instance of unstable-positive feedback, oscillatory cores of class I
in particular include the autocatalytic examples. 
However, as shown in \cite{VasStad23}, unstable-positive feedback can also
occur in non-autocatalytic forms, so autocatalysis is a natural but not
strictly necessary component: examples with non-autocatalytic oscillatory cores of class I can be found in \cite[Example B]{VasStad23} for parameter-rich kinetics and \cite[Example I]{Vassena2025} for mass-action kinetics. In this paper, as an illustration, we constructed five examples of oscillatory cores of class I by extending the five autocatalytic cores from \cite{blokhuis20} into Hurwitz-stable matrices. See Table \ref{tab:posmain} for an overview and Fig.~\ref{fig:recipe_1} for one detailed case, where we enlarged the $3\times3$ autocatalytic core
\begin{equation}\label{eq:autcorII}
\mathbb{S}[\pmb{\kappa}']=
    \begin{pmatrix}
        -1 & 0 & 1\\
        1 & -1 & 0 \\
        1 & 1 & -1
    \end{pmatrix},
\end{equation}
corresponding to reactions $\ce{X}\rightarrow \ce{Y}+\ce{Z}$, $\ce{Y}\rightarrow \ce{Z}$, $\ce{Z}\rightarrow \ce{X}$, to a  $4\times 4$ Hurwitz-stable CS-matrix:
\begin{equation}
\mathbb{S}[\pmb{\kappa}]=
    \begin{pmatrix}
        -1 & 0 & 1 & 1\\
        1 & -1 & 0 & 0\\
        1 & 1 & -1 & 0\\
        0 & 0 & -1 & -1
    \end{pmatrix},
\end{equation}
corresponding to reactions $\ce{X}\rightarrow\ce{Y}+\ce{Z}$; $\ce{Y}\rightarrow \ce{Z}$; $\ce{Z}+\ce{W}\rightarrow \ce{X}$; $\ce{W}\rightarrow \ce{X}$. A complete analysis of these five constructions is provided in the SM, Sec.~\ref{sec:recipeI}. The presence of an oscillatory core of class I defines \textbf{Recipe \ref{recipe:1main}} for oscillations:
\begin{customthm}{I}\label{recipe:1main}
 Assume that the network possesses an Oscillatory Core of Class I. Then the system admits nonstationary periodic orbits.   
\end{customthm}

\begin{table}[h!]
    \centering
    \begin{tabular}{c|c|c|c|c}        Osc. Core (I,a)  & {\scriptsize$\vcenter{\hbox{$\begin{pmatrix}
        -1 & 2 & 1\\
         1 & -1 & 1\\
         0 & -1 & -1\\
    \end{pmatrix}$}}$}& {\scriptsize$
\begin{cases}
\ce{X} + ... &\rightarrow \quad\ce{Y} + ...\\
\ce{Y} + \ce{Z} + ... &\rightarrow \quad 2 \ce{X} + ...\\
\ce{Z}+...&\rightarrow \quad \ce{X}+\ce{Y}
\end{cases}
$}&$\vcenter{\hbox{\includegraphics[scale=0.3]{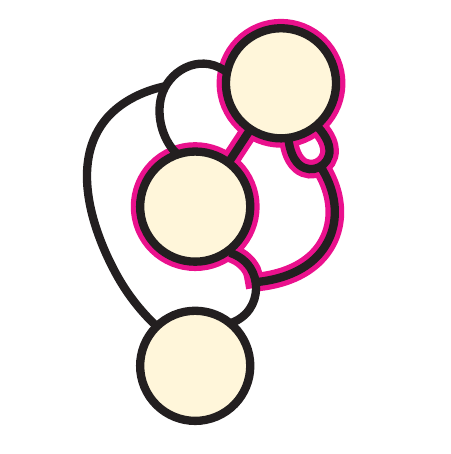}}}$ &$\vcenter{\hbox{\includegraphics[scale=0.2]{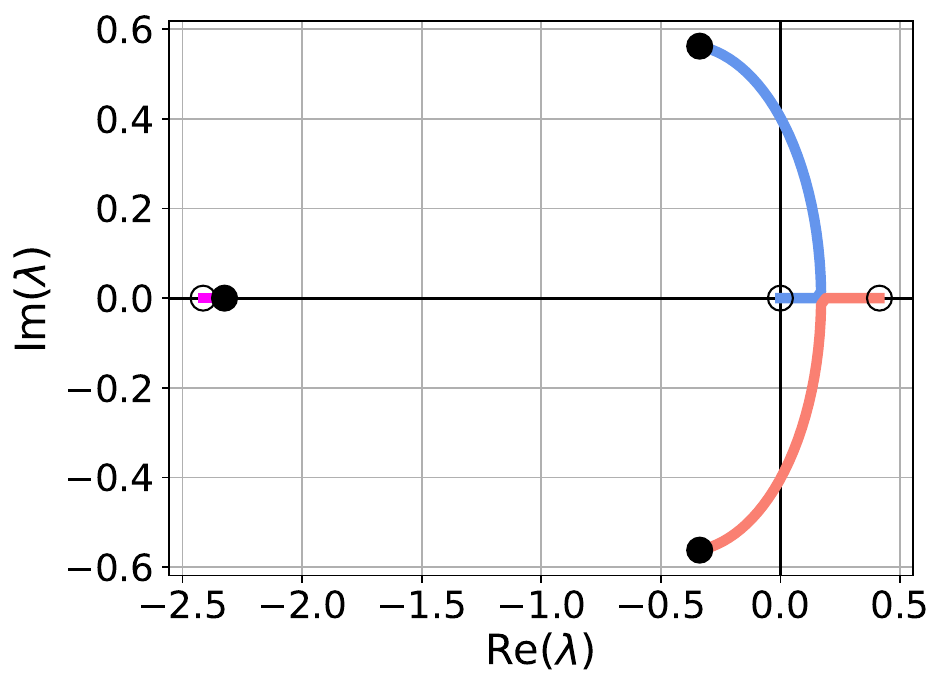}}}$ \\
        \hline   
    \\
              Osc.  Core (I,b)  & {\scriptsize $\vcenter{\hbox{$\begin{pmatrix}
        -1 & 0 & 1 & 1\\
         1 & -1 & 0& 0\\
         1 & 1 & -1& 0\\
         0 & 0 & -1 & -1
        \end{pmatrix}$}}$}  & {\scriptsize $\begin{cases}
\ce{X} + ... &\rightarrow \quad\ce{Y} + \ce{Z} + ...\\
\ce{Y}  + ... &\rightarrow \quad  \ce{Z} + ...\\
\ce{Z}+\ce{W}+...&\rightarrow \quad \ce{X}+...\\
\ce{W}+... &\rightarrow \quad \ce{X}+...
\end{cases}$} &$\vcenter{\hbox{
       \includegraphics[scale=0.3]{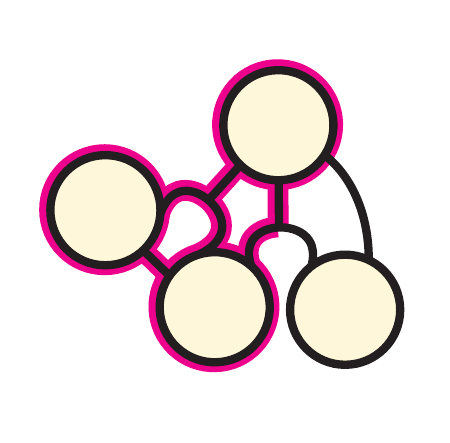}}}$ &$\vcenter{\hbox{\includegraphics[scale=0.2]{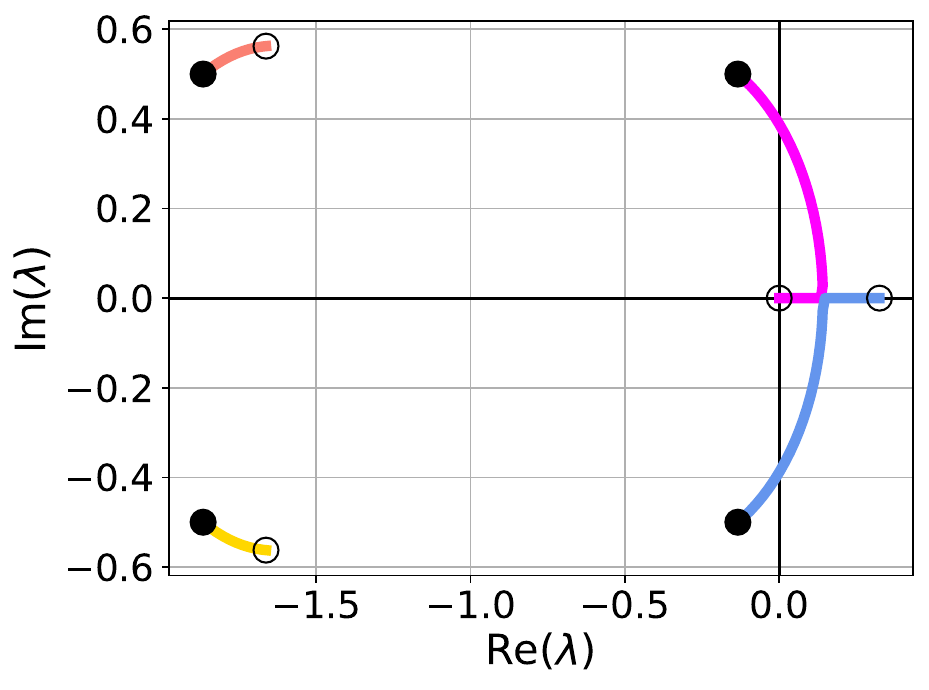}}}$ \\
        \hline \\
              Osc. 
               Core (I,c) & {\scriptsize $\vcenter{\hbox{$\begin{pmatrix}
        -1 & 1 & 1 & 2\\
         1 & -1 & 0& 0\\
         1 & 0 & -1& 0\\
         0 & 0 & -1 & -1 
        \end{pmatrix}$}}$} & {\scriptsize $\begin{cases}
\ce{X} + ... &\rightarrow \quad\ce{Y} + \ce{Z} +...\\
\ce{Y}  + ... &\rightarrow \quad  \ce{X} + ...\\
\ce{Z}+\ce{W}+...&\rightarrow \quad \ce{X}+...\\
\ce{W}+... &\rightarrow \quad 2\ce{X}+...
\end{cases}$} &  $\vcenter{\hbox{\includegraphics[scale=0.3]{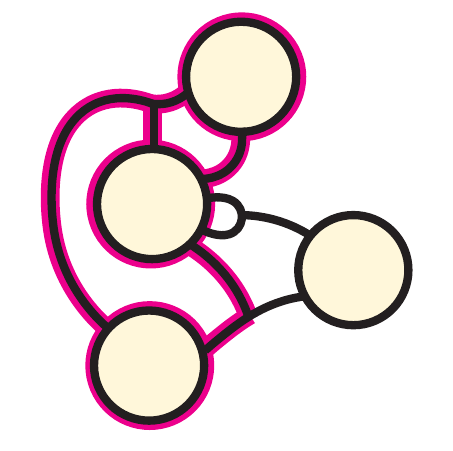}}}$ &$\vcenter{\hbox{\includegraphics[scale=0.2]{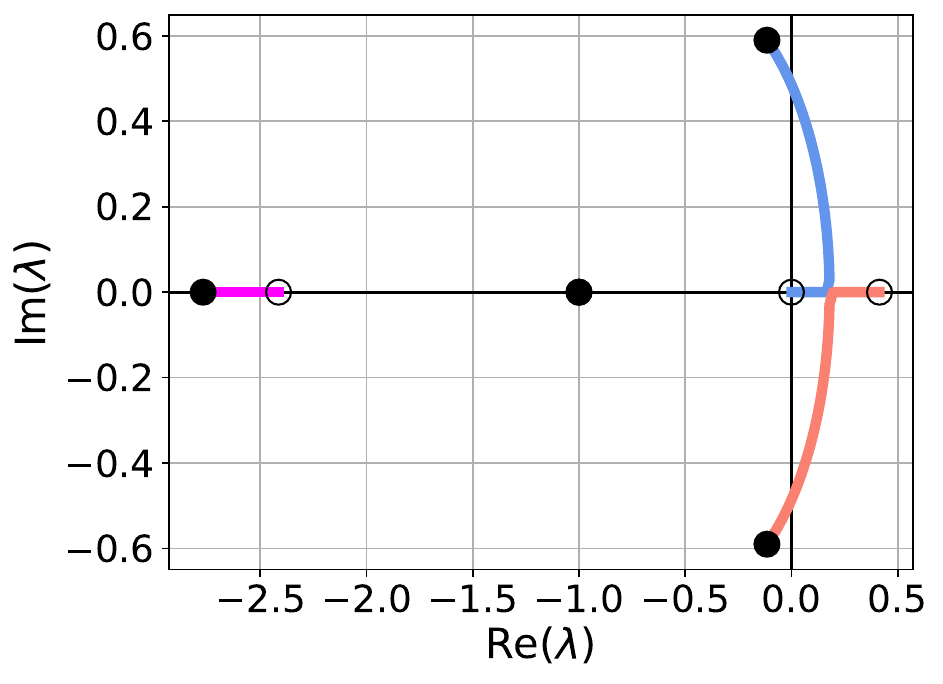}}}$    \\
         \hline\\
              Osc. Core (I,d) & {\scriptsize $\vcenter{\hbox{$\begin{pmatrix}
        -1 & 1 & 1 & 1\\
         1 & -1 & 0& 1\\
         1 & 1 & -1& 0\\
         0 & -1 & -1 & -1 
        \end{pmatrix}$}}$} &{\scriptsize $\begin{cases}
\ce{X} + ... &\rightarrow \quad\ce{Y} + \ce{Z}+ ...\\
\ce{Y} + \ce{W} + ... &\rightarrow \quad  \ce{X} + \ce{Z} +...\\
\ce{Z}+\ce{W}+...&\rightarrow \quad \ce{X}+...\\
\ce{W}+... &\rightarrow \quad \ce{X}+\ce{Y}+...
\end{cases}$ }& $\vcenter{\hbox{\includegraphics[scale=0.3]{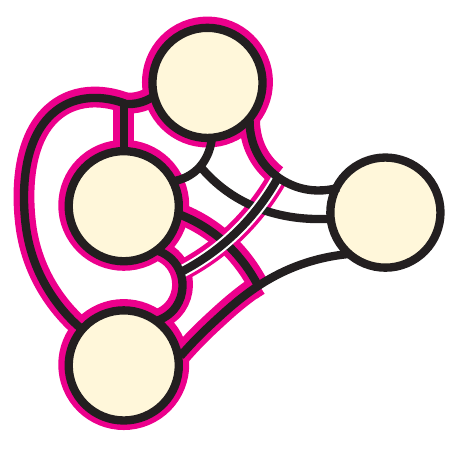}}}$ &$\vcenter{\hbox{\includegraphics[scale=0.2]{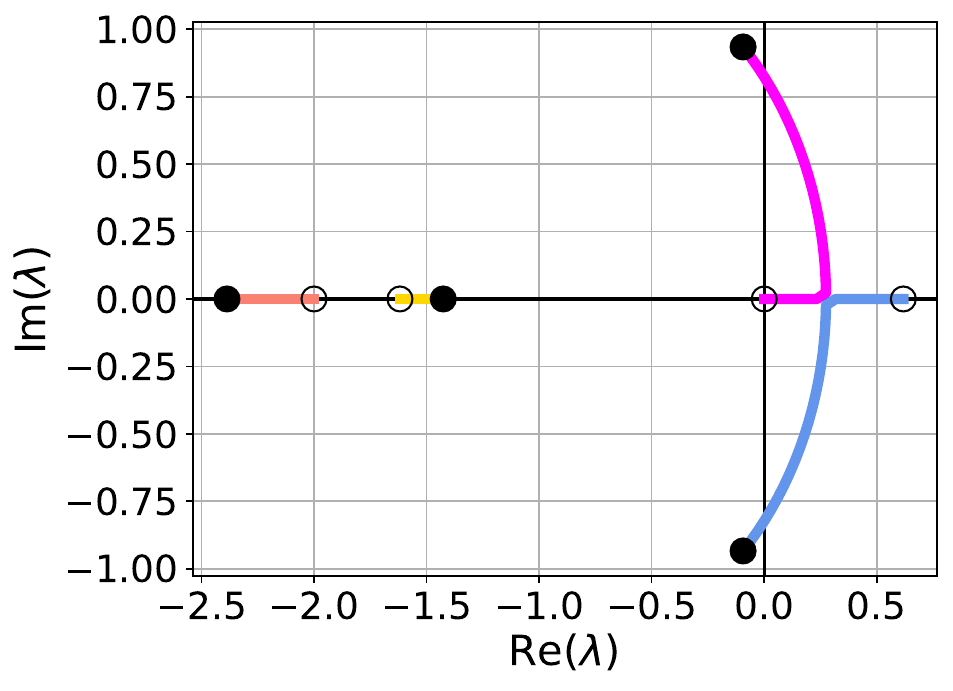}}}$   \\
        \hline \\
               Osc. Core (I,e) & {\scriptsize  $\vcenter{\hbox{$\begin{pmatrix}
        -1 & 1 & 1 & 1\\
         1 & -1 & 1& 1\\
         1 & 1 & -1& 0\\
         -1 & -1 & -1 & -1 
        \end{pmatrix}$}}$}  & {\scriptsize $\begin{cases}
\ce{X} + \ce{W}+ ... &\rightarrow \quad\ce{Y} + \ce{Z} +...\\
\ce{Y} + \ce{W} + ... &\rightarrow \quad  \ce{X} + \ce{Z} +...\\
\ce{Z}+\ce{W}+...&\rightarrow \quad \ce{X}+\ce{Y}+...\\
\ce{W}+... &\rightarrow \quad \ce{X}+\ce{Y}+...
\end{cases}$}& $\vcenter{\hbox{\includegraphics[scale=0.3]{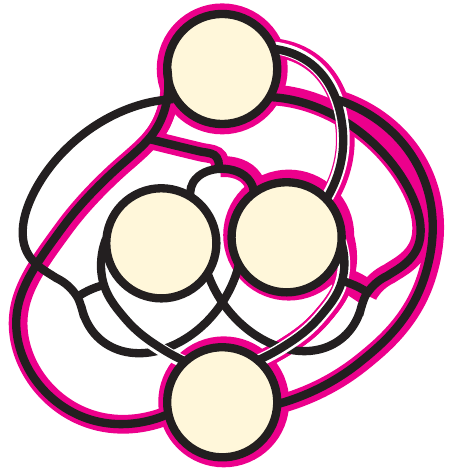}}}$ &$\vcenter{\hbox{\includegraphics[scale=0.2]{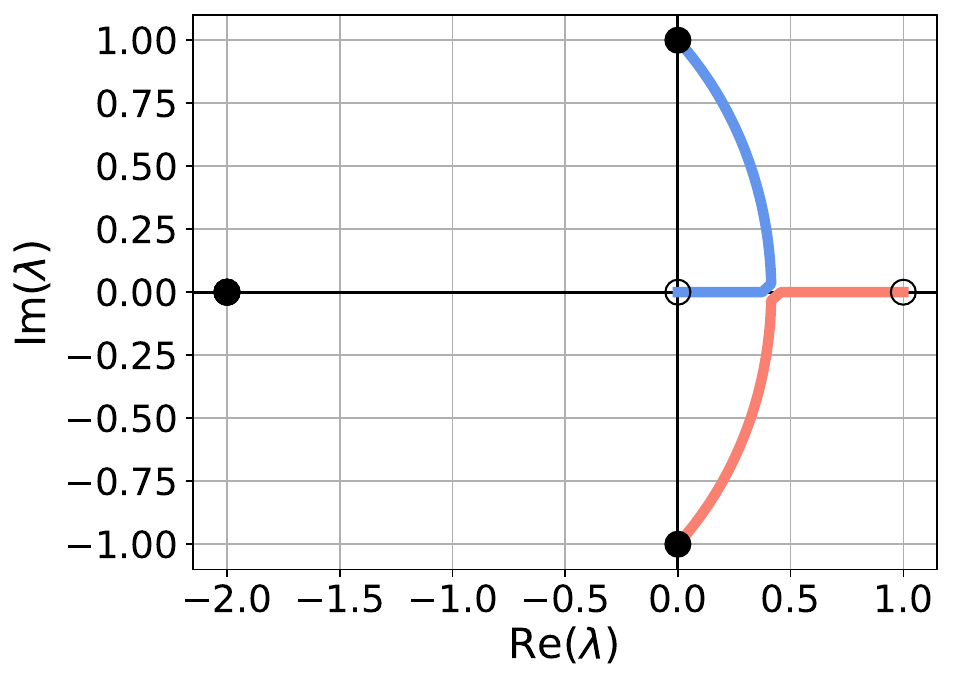}}}$  \\
    \end{tabular}
\caption{Oscillatory Cores (I,a) to (I,e). An autocatalytic core is highlighted in magenta in each CRN, corresponding to a) Type I, b) Type II, c) Type III, d) Type IV, e) Type V, following the classification in Ref. \cite{blokhuis20}. 
Eigenvalues are plotted for $\mathbb{S} D(\beta)$, with $D(\beta):=\operatorname{diag}(1,...,1,\beta)$ and $0<\beta<1$. Endpoints $\beta=0,1$ are marked with circles ($\beta=0$ hollow, $\beta=1$ bold). $\beta=0$ corresponds to elimination of the final column of $\mathbb{S}$, i.e. the reaction not highlighted in magenta.}
\label{tab:posmain}
\end{table}

In turn, an \textbf{Oscillatory core of class II} involves unstable-negative feedback and it is defined as follows:
\begin{definition}[Oscillatory Cores of Class II]\label{def:Ocores2main} An \textbf{Oscillatory Core of Class II} is a $k\times k$ CS-matrix $\mathbb{S}[\pmb{\kappa}]$ which is an unstable-negative feedback and  it has a $(k-1)\times (k-1)$ Hurwitz-stable principal submatrix $\mathbb{S}[\pmb{\kappa}-1]$. 
\end{definition}

We remark again that an unstable-negative feedback never possesses real-positive eigenvalues but only complex conjugate pairs of eigenvalues with positive real part.  Moreover, unstable-negative feedbacks are never autocatalytic \cite{VasStad23}, i.e., this class of oscillatory cores never involve autocatalysis. To illustrate, we presented five examples of oscillatory cores of class II by constructing a negative-feedback analogous of the five autocataytic cores from \cite{blokhuis20}. See Table~\ref{tab:negmain} for an overview and Fig.~\ref{fig:recipe_2} for one detailed case, where we started again from the autocatalytic core depicted above \eqref{eq:autcorII}, and changed the first row into negative to turn the positive feedback into negative:
\begin{equation}\label{eq:negII}
  \begin{pmatrix}
        -1 & 0 & -1\\
        1 & -1 & 0 \\
        1 & 1 & -1
    \end{pmatrix},
\end{equation}
corresponding to reactions $\ce{X}\rightarrow \ce{Y}+\ce{Z}$, $\ce{Y}\rightarrow \ce{Z}$, $\ce{X}+\ce{Z}\rightarrow $, which however is Hurwitz-stable. To obtain an unstable-negative feedback, we split the first reaction 
\begin{equation}
    \ce{X}\rightarrow \ce{Y}+\ce{Z}
\end{equation}
into sufficiently many intermediates, in this case six,
\begin{equation}
\ce{X}\rightarrow\ce{X}_1\rightarrow\ce{X}_2\rightarrow\ce{X}_3\rightarrow\ce{X}_4\rightarrow\ce{X}_5\rightarrow\ce{X}_6 \rightarrow\ce{Y}+\ce{Z},
\end{equation} leading to an unstable-negative feedback:$$
\mathbb{S}[\pmb{\kappa}]=
\begin{pmatrix}
-1 & 0 & 0 & 0 & 0 & 0 & 0 & 0 & -1 \\
1 & -1 & 0 & 0 & 0 & 0 & 0 & 0 & 0 \\
0 & 1 & -1 & 0 & 0 & 0 & 0 & 0 & 0 \\
0 & 0 & 1 & -1 & 0 & 0 & 0 & 0 & 0 \\
0 & 0 & 0 & 1 & -1 & 0 & 0 & 0 & 0 \\
0 & 0 & 0 & 0 & 1 & -1 & 0 & 0 & 0 \\
0 & 0 & 0 & 0 & 0 & 1 & -1 & 0 & 0 \\
0 & 0 & 0 & 0 & 0 & 0 & 1 & -1 & 0 \\
0 & 0 & 0 & 0 & 0 & 0 & 1 & 1 & -1
\end{pmatrix}, 
$$
with eigenvalues approximately
($\mathbf{+0.0094 \pm 0.39i}$, $-1.82$, $-1.79\pm 0.5i$, $-1.28 \pm 0.98i$,$-0.53\pm 0.95i)$. It is worth noting that the negative-feedback instability may appear only upon inclusion of a sufficient number of intermediate steps: we call this feature \textbf{`principle of length'} and we discuss it further in Sec.~\ref{sec:principlelength}.  
A complete analysis of the five oscillatory cores of class II we constructed is provided in the SM, Sec.~\ref{sec:recipeII}. The presence of an oscillatory core of class II defines what we refer to as \textbf{Recipe \ref{recipe:2main}} for oscillations:
\begin{customthm}{II}\label{recipe:2main}
Assume that the network possesses an Oscillatory Core of Class II. Then the system admits nonstationary periodic orbits.    
\end{customthm}

\begin{table}[h!]
    \centering
    \begin{tabular}{c|c|c|c}
    Oscillatory Core (II,a) &         $\begin{pmatrix}
        -1 & -2\\
         1 & -1\\
        \end{pmatrix}$ 
        $n=6$ 
        &  $\vcenter{\hbox{\includegraphics[scale=0.3]{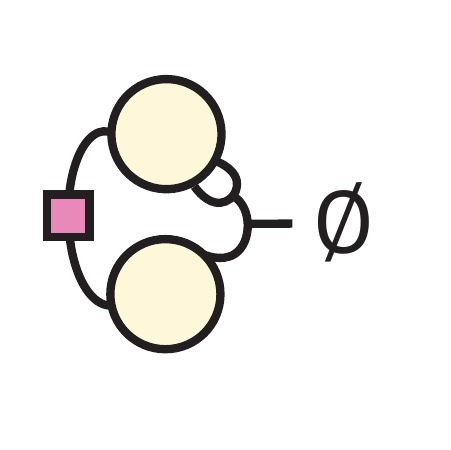}}}$
        &$\vcenter{\hbox{\includegraphics[scale=0.3]{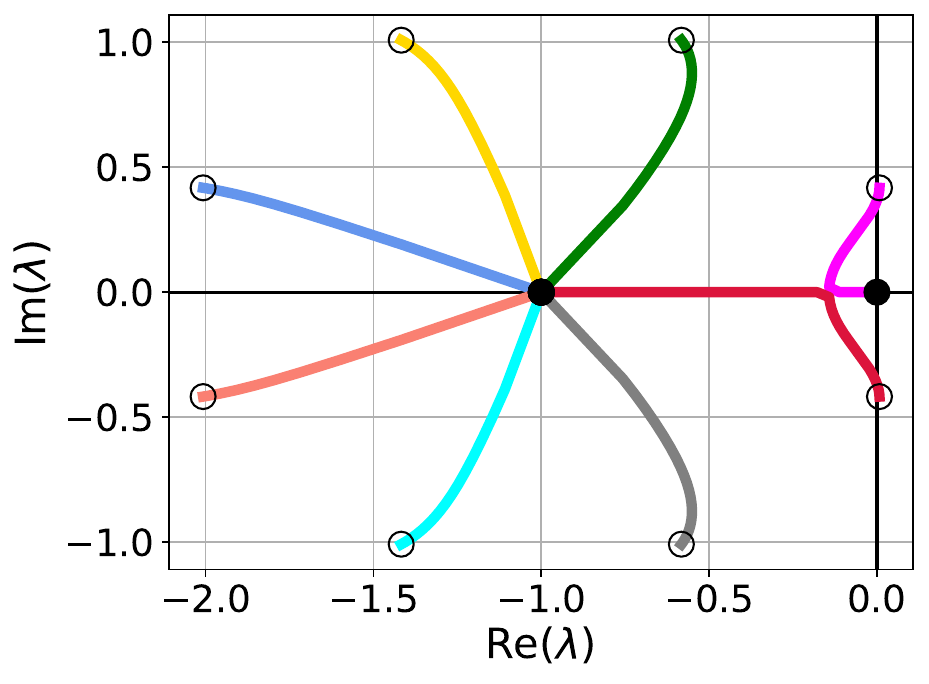}}}$ \\
        \hline      
    \\
      Oscillatory Core (II,b) &  $\begin{pmatrix}
        -1 & 0 & -1\\
         1 & -1 & 0\\
         1 & 1 & -1
        \end{pmatrix}$
         $n=6$ &  $\vcenter{\hbox{\includegraphics[scale=0.3]{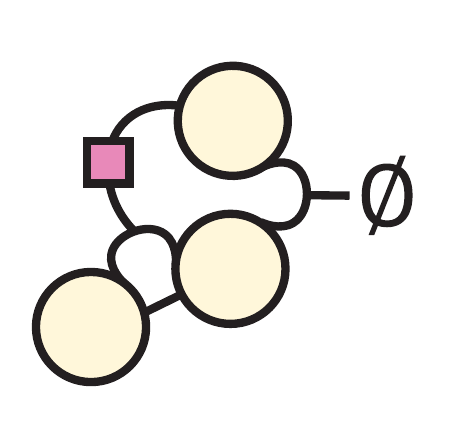}}}$ 
         
 &$\vcenter{\hbox{\includegraphics[scale=0.3]{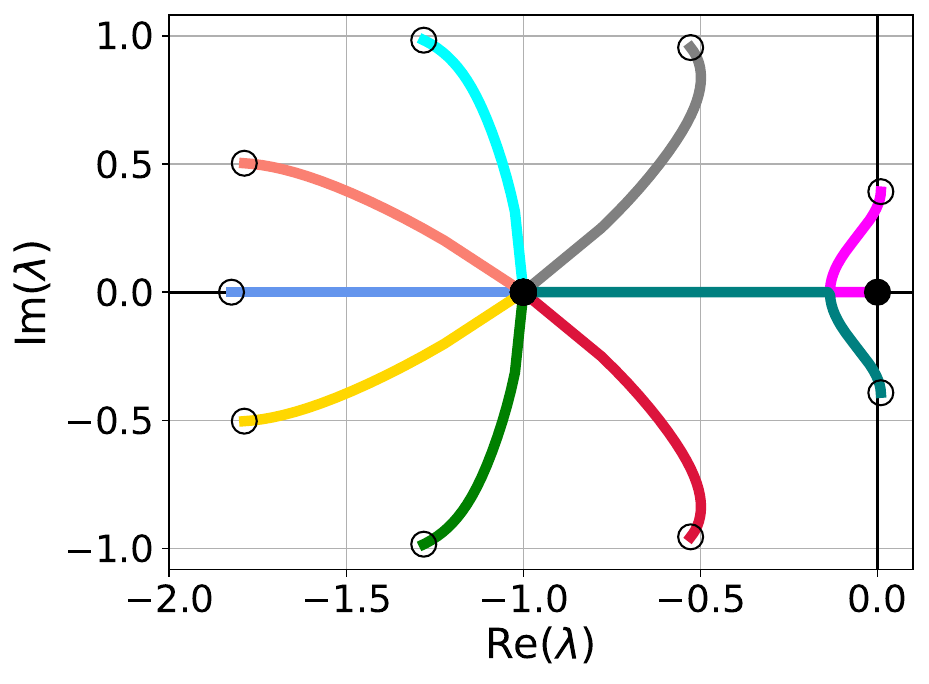}}}$ \\
        \hline      \\
       Oscillatory Core (II,c)& $\begin{pmatrix}
        -1 & -1 & -1\\
         1 & -1 & 0\\
         1 & 0 & -1
        \end{pmatrix}$ 
        $n=6$
         &  $\vcenter{\hbox{\includegraphics[scale=0.3]{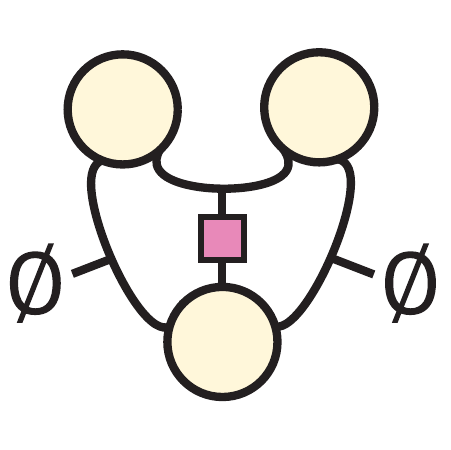}}}$   
          &$\vcenter{\hbox{\includegraphics[scale=0.3]{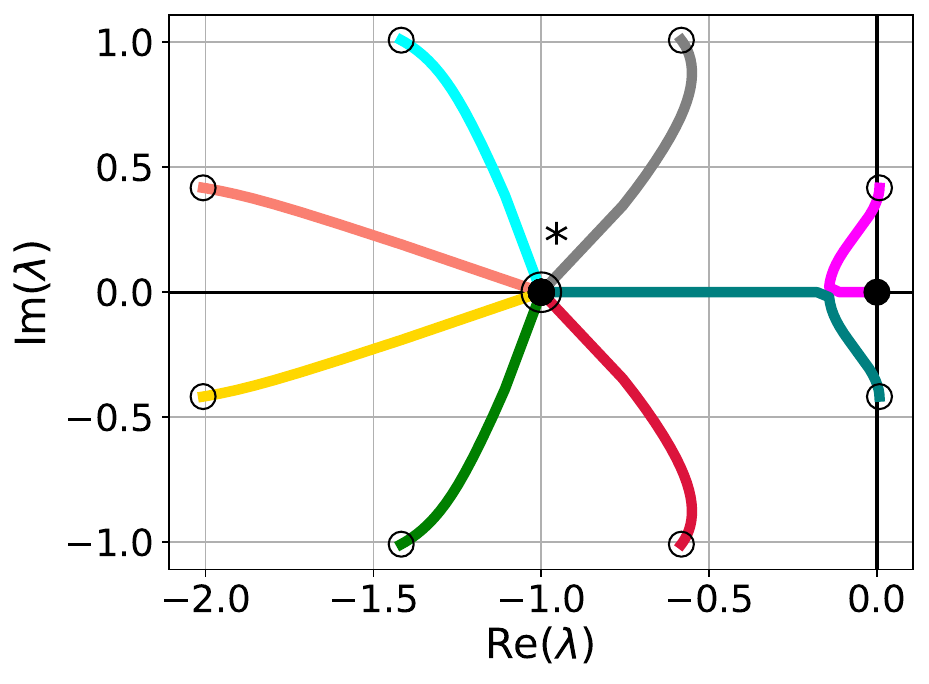}}}$ \\
        \hline      \\
         Oscillatory Core (II,d)&  $\begin{pmatrix}
        -1 & -1 & -1\\
         1 & -1 & 0\\
         1 & 1 & -1
        \end{pmatrix}$ 
       $n=3$
        &  $\vcenter{\hbox{\includegraphics[scale=0.3]{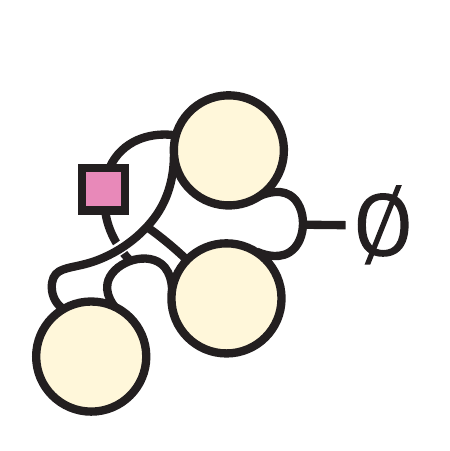}}}$  
         &$\vcenter{\hbox{\includegraphics[scale=0.3]{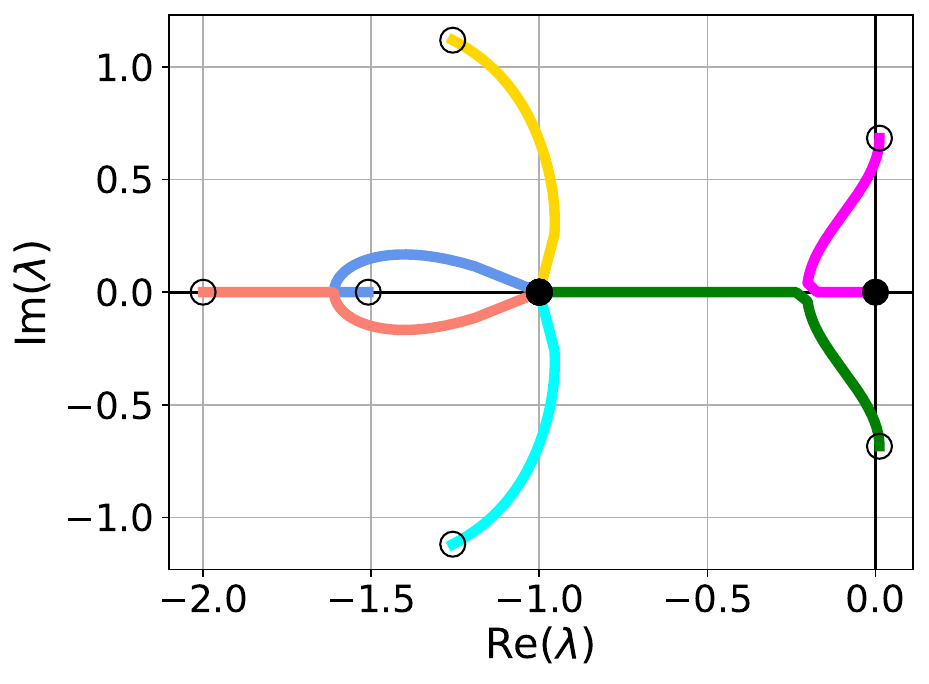}}}$ \\
        \hline    \\
       Oscillatory Core (II,e)&  $\begin{pmatrix}
        -1 & -1 & -1\\
         1 & -1 & 1\\
         1 & 1 & -1
        \end{pmatrix}$ 
     $n=1$  &  $\vcenter{\hbox{\includegraphics[scale=0.3]{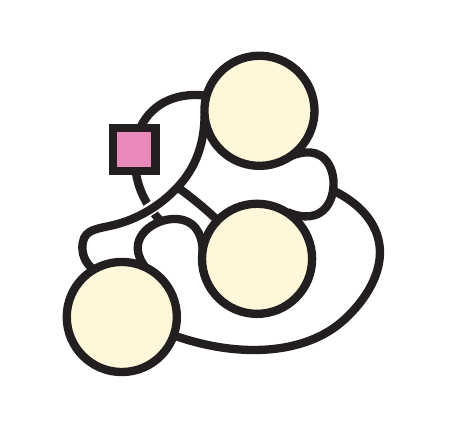}}}$   &$\vcenter{\hbox{\includegraphics[scale=0.3]{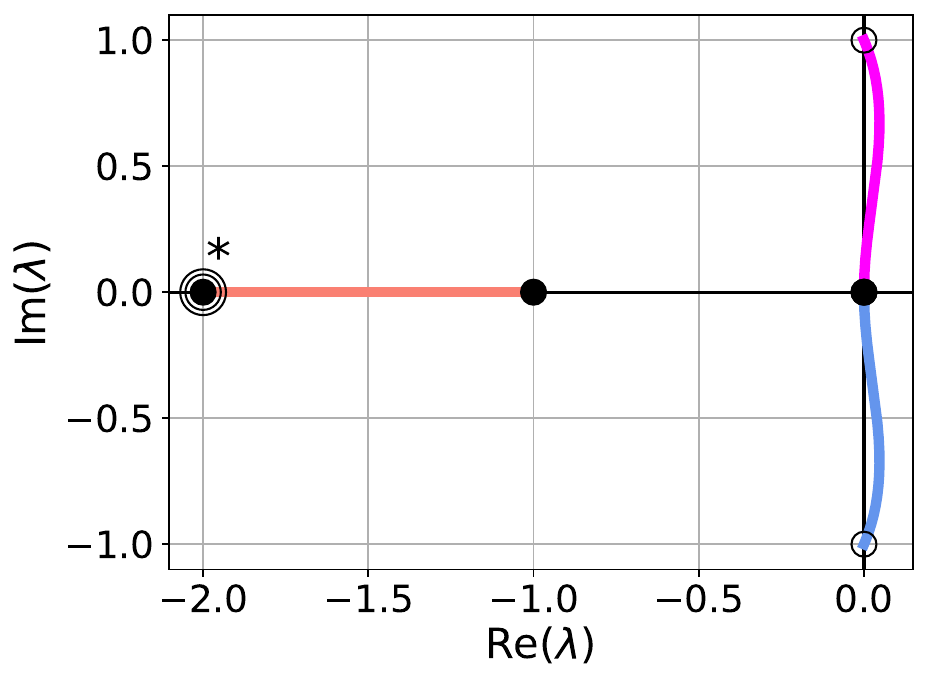}}}$
    \end{tabular}
\caption{Oscillatory cores (II,a) to (II,e). A pink square in the hypergraph denotes a succession of monomolecular steps, which according to the principle of length need to exceed some minimum number of $n$ steps to admit oscillations. This critical number $n$ is given for each core. Eigenvalues are plotted for the oscillatory cores, which are - as in Table I - multiplied by a diagonal unit matrix with one variable element $\beta$, corresponding to a unimolecular reaction.
An eigenvalue that remains fixed is indicated with an asterisk for Type IIc (-1) and Type IIe (-2) corresponding to an eigenvector $(0,...0,1,-1)^T$: the  reactions associated to such an eigenvector are unaltered by multiplication with our diagonal matrix. }
\label{tab:negmain}
\end{table}

The advantage of considering oscillatory cores as in Recipe \ref{recipe:1main} and Recipe \ref{recipe:2main}
is that the underlying structural mechanism triggering oscillations can be
encapsulated within a single CS-matrix $\mathbb{S}[\pmb{\kappa}]$.
This level of abstraction is beneficial in understanding qualitative
features of oscillators. The bifurcation process itself can be visualized
in both cases by simply multiplying $\mathbb{S}[\pmb{\kappa}]$ by a
diagonal matrix of the form $\operatorname{diag}(1,1,1,1,1,\beta)$, where
the entry $\beta$ is chosen appropriately based on the stability properties
of the principal submatrices of $\mathbb{S}[\pmb{\kappa}] $, see Fig.~\ref{fig:recipe_1} and Fig.~\ref{fig:recipe_2}.  While for $\beta=1$ the product
reduces to $\mathbb{S}[\pmb{\kappa}]$ itself, choosing $\beta$ sufficiently
small induces a change in the stability of the product, while its
invertibility is preserved for all choices of $\beta>0$.  By
  Lemma~\ref{lem:Dhopfmain}, any change of stability implies purely
imaginary eigenvalues, and thus oscillations by
  Thm.~\ref{thm:mainmain}.

\begin{figure}
\centering
\includegraphics{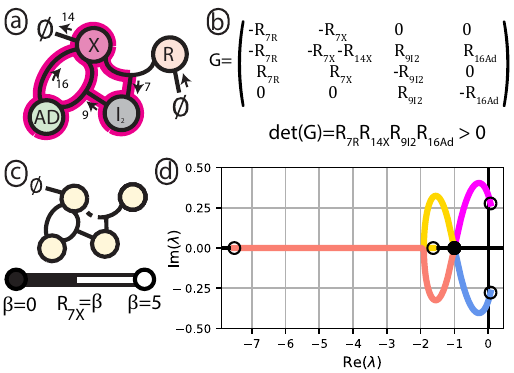}
\caption{An example of Recipe \ref{recipe:0main}. a) A subnetwork of the Activation-Inhibition  model from \cite{Nguyen18} with autocatalysis highlighted in magenta. b) The Jacobian $G$ is  invertible for all parameter choices. c) Keeping the other reactivities fixed at 1, we vary $R_{7X}=\beta\in(0,5]$ d) the trajectories of the eigenvalues of $G$ illustrate the Hopf bifurcation.}
\label{fig:recipe_0}
\end{figure}

\subsection{Recipe \ref{recipe:0main}: oscillations without employing oscillatory cores}\label{sec:recipe0main} 
A broader scan of the literature (see Sec.~\ref{sec:literature} of the
SM) on chemical oscillators reveals that in many cases the mechanism
triggering oscillations cannot be captured so compactly by a single
CS-matrix. Instead, it emerges from a more intricate interplay among
multiple components of the network. To account for these situations, we
also formulate a more general, less matrix-centered condition, see
Fig.~\ref{fig:cores}c and Fig.~\ref{fig:recipe_0}. We state it here an informally and intuitively. For a rigorous statement with proof we refer to Sec.~\ref{sec:recipe0main} below.

\begin{customthm}{0}\label{recipe:0main}
  A transition between regions of parameter space with stable and unstable
  steady states, under the assumption that the Jacobian remains invertible
  throughout.
\end{customthm}

As a comparison, the capacity for multistationarity has long been
associated with the possibility of Jacobian singularity \cite{BaPa16}. In
contrast, for monostationary networks -- where this capacity is absent --
Recipe \ref{recipe:0main} effectively equates the capacity to switch between stability and
instability with the potential to exhibit periodic behavior.

\paragraph{A minimalist template.}
The more abstract statement of Recipe \ref{recipe:0main} can be nevertheless applied to
readily construct new classes of oscillators. A possible minimalist
template (Fig.~\ref{fig:cores}c) for the construction of oscillators
consists of (i) an autocatalytic core, (ii) replenishment of an off-core
reactant consumed by an autocatalyst, (iii) degradation of the same
autocatalyst species, offering a surprisingly simple template for building
oscillators. See Sec.~\ref{sec:recipe0basicconstr} in the SM for a
detailed analysis of this construction.

Under mass-action, however, such template in Fig.~\ref{fig:cores}c is not
quite enough due to inherent parametric dependencies. Yet, these
constraints can be lifted by a slight further decoration of the reaction
network, for instance through an independent second degradation mechanism,
or by replacing first order degradation by a catalytic cycle
(Fig.~\ref{fig:recipe0_MA_main}). This result highlights how parameter-rich
kinetics provides a first facile means to study the oscillatory question in
general without extensive algebraic pre-processing. Subsequently, it might
be possible as well to identify what is needed to accommodate the result
under the constraints of mass-action.

\begin{figure} 
\centering
\includegraphics{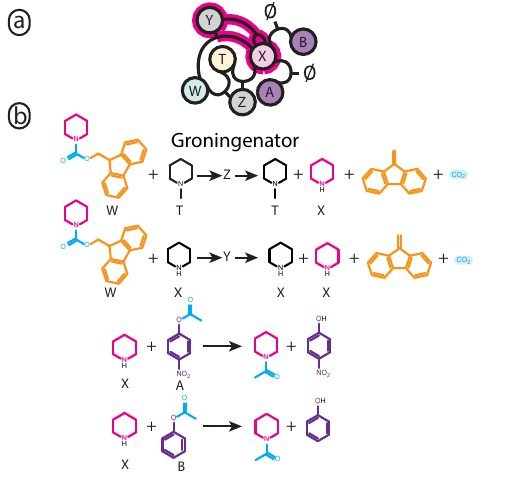}
\caption{b) Proposed reactions with molecular structures for the Groningenator \cite{Harmsel2023}. a) A subnetwork, with autocatalytisis highlighted in magenta. A capacity to oscillate can be demonstrated through Recipe \ref{recipe:0main}. Under mass-action, both degradation steps are necessary (See the SM sec~\ref{sec:activatorinhibitor}).}
\label{fig:groningenator_example}
\end{figure}

\paragraph{Recipe \ref{recipe:0main} in the literature.}
Fig.~\ref{fig:recipe_0} illustrates an oscillatory mechanism via Recipe \ref{recipe:0main},
detected in an Activation-Inhibition model \cite{Nguyen18}. In this example,
the symbolic Jacobian remains invertible for any choice of positive
reactivities. Notably, tuning the reactivity $R_{7x}$ -- which is involved
in an autocatalytic core -- does not affect the value of the determinant,
yet it can amplify the role of autocatalysis, rendering it dominant. As a
result, the system can transition to instability through the emergence of
purely imaginary eigenvalues, leading to periodic orbits. See also
Sec.~\ref{sec:activatorinhibitor} in the SM.

Fig \ref{fig:groningenator_example}b shows a set of proposed irreversible
reactions describing a recent small-molecule oscillator developed in
Groningen \cite{Harmsel2023}, which we refer to as the
\emph{Groningenator}. Upon omission of irreversibly produced side products
we obtain the reaction network in Fig \ref{fig:groningenator_example}a. The
Groningenator provides a further illustration of Recipe \ref{recipe:0main}. Again, it is
autocatalysis that ensures instability.  Degradation can counterbalance
autocatalysis, but under mass-action too simple degradation does not enable
oscillations. Here, both degradation pathways are essential: removing
either one of them (i.e., removing $\ce{A}$ or $\ce{B}$) removes the
capacity for oscillations, see also the SM, Sec.~\ref{sec:groning}.

Other literature examples have been analyzed in the SM under these lenses:
a Brussellator \cite{Prigogine1968} (Sec.~\ref{sec:bruss} in the SM), an
Oregonator \cite{Field1974} (Sec.~\ref{sec:oreg} in the SM), and a
general motif ubiquitous in mathematical epidemiology and Lotka--Volterra
models (Sec.~\ref{ex:ME} in the SM).

\begin{figure}
\centering
\includegraphics{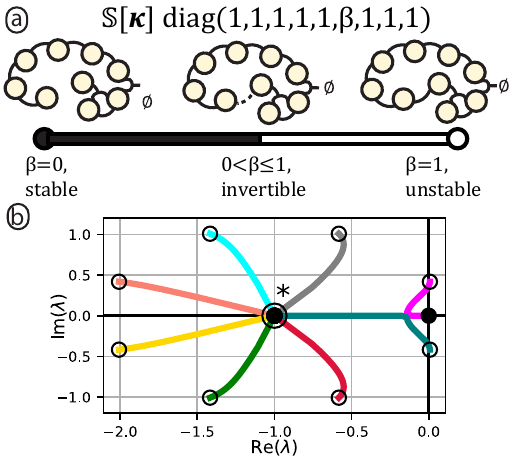}
\caption{Recipe \ref{recipe:2main} illustrated for the motif $\ce{X}\rightarrow\ce{X}_1\rightarrow...\rightarrow \ce{X}_6\rightarrow \ce{Y}+\ce{Z}$; $\ce{Y}\rightarrow \ce{Z}$; $\ce{X}+\ce{Z}\rightarrow \;$. Any consistent reaction network that includes such motif has the capacity for periodic oscillations. The bifurcation process follows a parameter $\beta\in[0,1]$. At $\beta=1$ the associated CS-matrix is Hurwitz-unstable, while at $\beta\approx0$ is Hurwitz-stable. The product matrix is invertible throughout: change of stability happens at purely imaginary eigenvalues. a) The bifurcation process is intuitively illustrated in the network as a removal of the reactivity of one reaction. b) The trajectory of the eigenvalues is displayed, a purely imaginary crossing can be seen.}
\label{fig:recipe_2}
\end{figure}


\section{The principle of length}\label{sec:principlelength} 

Our theory implies that certain periodic oscillations may emerge only (!) upon the inclusion of a sufficient number of intermediate steps. Specifically, there exist non-oscillatory reaction networks containing a reaction
\begin{equation}
\ce{X} \quad \rightarrow \quad \ce{Y},
\end{equation}
which become oscillatory when this reaction is split into sufficiently many intermediate steps
\begin{equation}\label{eq:intermediates}
\ce{X} \quad \rightarrow \quad \ce{X}_1 \quad \rightarrow \quad \cdots \quad \rightarrow \quad \ce{X}_n \quad \rightarrow \quad \ce{Y},
\end{equation} 
making it then possible to choose kinetic constants that induce oscillations. We refer to this phenomenon as the \textbf{principle of length}. The principle challenges a commonly held axiom in working chemistry: that the presence or absence of intermediate steps does not fundamentally alter a system’s behavior.

We found that oscillatory cores of class II frequently follow the
\emph{principle of length}: they are structured as a negative feedback with
a variable number of intermediate steps of similar duration, and
oscillations arise only when the number of steps exceeds a certain critical
threshold. The underlying mathematical intuition is that instability in
such unstable-negative feedbacks depends not on the sign of the determinant
of the associated CS-matrix, but on the actual computation of the
eigenvalues, which is sensitive to the number of inserted intermediates. In
contrast, the instability of unstable-positive feedbacks (as in the
oscillatory cores of class I) is typically associated with a real positive
eigenvalue that can be detected through a determinant condition, which is
insensitive to the insertion of intermediate steps as
\eqref{eq:intermediates}.  Finally, it is worth noting that the critical
number of steps may depend on whether the network is endowed with
parameter-rich kinetics or mass-action: e.g.\ the oscillatory core in
Fig.~\ref{fig:recipe_2} requires at least $n_{PR}=6$ parameter-rich steps
but at least $n_{MA}=9$ mass-action steps \cite{Vassena2025}, due to the
additional presence of steady-state constraints.

On a theoretical basis, the principle of length has been previously
observed in the context of feedback loops \cite{MPSmith90,
  HofbauerHypercycle91, F19}, where a minimal loop length has been shown to
be a necessary condition for the emergence of oscillations. In the other
direction, a body of work by Banaji and co-authors on \emph{inheritance}
has focused on understanding which \emph{enlargements} of a network
preserve certain dynamical behaviors, such as multistationarity,
oscillations, and bifurcations; see \cite{ba23splitting} and references
therein. In particular, under mild technical conditions,
\cite{ba23splitting} shows that adding intermediates `preserves' the
original dynamical behavior. Yet, inheritance results are local: they
ensure that a larger network retains a feature of a smaller one only for
specific choices of parameters -- namely, when the added step is fast
enough to be essentially invisible (e.g. $\beta \approx 0$ in
Fig.~\ref{fig:recipe_2}). Thus, without contradiction, oscillations may
emerge for other parameter choices (e.g., $\beta \approx 1$ in
Fig.~\ref{fig:recipe_2}).

In experimental chemistry, to our knowledge, autocatalysis-free oscillators of this type have not been reported. However, a multiplicity of steps is not foreign to the literature: reactions are described on a variety of scales, going from overall reactions to detailed mechanisms involving many steps and intermediates. These steps tend to include vastly different timescales enabling a good approximation in terms of one or few rate-limiting steps. In this sense, chemical phenomenology appears robust to the introduction of many steps. This robustness no longer applies when we have many (slow) steps of similar timescale. Repeated steps of similar timescale can e.g. be realized in template-based polymerization, where they introduce a form of memory and delay \cite{lin2025biomasstransferautocatalyticreaction,blokhuis2020errthresh} and precise timing. This has been used to prepare synthetic molecular clocks \cite{Johnson-Buck2017}. In cell biology,  delayed negative-feedback due to introns is found to be essential for functional oscillations in the segmentation clock \cite{Takashima2011}.




\section{Roles of catalysis}\label{sec:catalysis}


Catalysis appears to play an ubiquitous role in chemical oscillators, and our results identify a variety of mechanisms by which catalysis promotes oscillations: 
\begin{enumerate}[label=\Roman*.]
  \item \textbf{Lifting constraints}: 
A natural and valid choice of parameter-rich kinetics is to use Michaelis--Menten rate laws, which effectively - though implicitly - treat all reactions as catalyzed. Conversely, transforming parameter-rich models into mass-action ones can be done by expressing certain reactions through catalyzed elementary steps. See Fig.~\ref{fig:recipe0_MA_main}, where a minimalist template for oscillations via Recipe \ref{recipe:0main} in parameter-rich kinetics is transformed into a mass-action model by replacing a linear, uncatalyzed degradation reaction with a catalyzed one - thereby enabling oscillations.
\item \textbf{Providing instability}: When a catalyst promotes a reaction in which it is itself a (net) product, instability may arise due to autocatalysis or other feedback-driven mechanisms. As a result, periodic oscillations can occur even in systems with seemingly innocuous net stoichiometry. See Sec.~\ref{ex:hidcat} in the SM for an example: a simple monomolecular loop $\ce{X}_1\rightarrow\ce{X}_2\rightarrow\ce{X}_3\rightarrow\ce{X}_4 \rightarrow \ce{X}_5 \rightarrow \ce{X}_1$ becomes oscillatory upon the addition of `hidden' catalysis, i.e. invisible to net stoichiometry.
\item \textbf{Length}: As discussed in Sec.~\ref{sec:principlelength} for the example in Fig.~\ref{fig:recipe_2}, the critical number $n$ of intermediate steps $\rightarrow\ce{X}_1\rightarrow \ce{X}_2\rightarrow ...\rightarrow \ce{X}_n\rightarrow$ required to sustain oscillations can depend on whether the kinetics is parameter-rich or mass-action. In particular, implementing catalytic steps using, for example, a Michaelis--Menten scheme can reduce the threshold length needed for oscillations.
\end{enumerate}
On the other hand, the existence of mass-action examples based on Recipe \ref{recipe:2main} - such as the one in Fig.~\ref{fig:recipe_2} with $n = 9$ intermediate steps \cite{Vassena2025} - shows as well that oscillations can also arise without any catalysis.

\section{Outlook: building new oscillators}\label{sec:buildingosci}

Through a variety of techniques, we have demonstrated different mechanisms by which oscillations can emerge and underlying structural requirements. These insights furnish design considerations for new chemical oscillators. 

Firstly, there is untapped oscillatory potential in supramolecular chemistry, because some structural ingredients from either recipe are inherent to it. The (slow) formation of oligomers may grant access to regimes where a principle of length may readily operate. Furthermore, the formation of supramolecular species can proceed autocatalytically, supramolecular phases (e.g. micelles) may exhibit catalysis of their own.  A small number of supramolecular oscillators has to date been reported, e.g. involving tubulin fibers \cite{Horio1986,Hess2017}, fibers and colloids of perylene diimide \cite{Leira-Iglesias2018}, dipeptide fibers with SDS micelles  \cite{Torigoe2024}, disulfide surfactant micelles  \cite{Howlett2022} and Palladium nanoparticles \cite{Donlon2014,Lodge2025}. 

Secondly, we find that the extension of an oscillator under parameter-richness to a mass-action oscillator requires the creation of enough free variables to lift constraints. This can e.g. be achieved by using catalyzed reactions instead, or having multiple independent pathways achieving a similar transformation (for instance the pair of degradation reactions in the Groningenator described in \cite{Harmsel2023}). 

Thirdly, the conditions by which a system is maintained out of equilibrium can strongly influence the capacity for oscillations. Oscillations in closed systems (or stretched in time under plug flow \cite{Hermans2015}) must eventually vanish as thermodynamic forces annul, but conversely, not all means of maintaining thermodynamic forces necessarily preserve oscillations. For instance, if an oscillating species that enters through a slow influx is instead fixed in concentration (chemostatted), this can remove a previously present oscillation. 
In designing oscillators using reservoirs (e.g., hydrophobic reactant in an oil layer \cite{Howlett2022}, oxygen in air \cite{Roelofs1983}), the timescale of exchange can thus become an important parameter. An oscillator employing benzaldehyde, cobalt, bromide, and oxygen under gas flow \cite{Jensen1983} has been adapted to slow chemostat conditions \cite{Roelofs1983}.

\begin{figure}[!tbhp]
\centering
\includegraphics{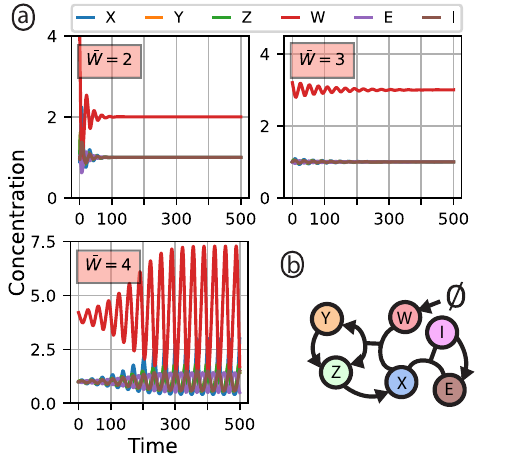}
\caption{Illustration of a simple oscillator via Recipe \ref{recipe:0main} for a mass-action network, following the minimalist template presented in Sec.~\ref{sec:recipe0main}.  a) Simulation of ODEs (see the SM, Sec.~\ref{sec:recipe0basicconstr}) with initial conditions $[\ce{X,Y,Z,W,E,I}] = (1,1,1,\bar{W}+0.2,1,1)$. b) The oscillator consists of an autocatalytic core (autocatalysts $\ce{X,Y,Z}$), a limiting reagent $\ce{W}$ supplied externally, and a catalyzed degradation reaction. 
For mass-action kinetics, fixing the concentration of $\ce{W}$ or having a linear, uncatalyzed degradation of $\ce{X}$ removes the capacity to oscillate.}
\label{fig:recipe0_MA_main}
\end{figure}

\section{Discussion}\label{sec:discussion}

We presented a stoichiometric framework for studying oscillations in chemical reaction networks. This structural analysis is enabled by parameter-rich kinetics, an approach that disentangles parametric dependencies from purely stoichiometric considerations. It leads to the structural notion of an oscillatory core: a network structure with the capacity to oscillate that contains no smaller substructure with this property. Many molecular properties (e.g., charge, acidity, aromaticity) can be rationalized in terms of functional groups: small substructures of the full molecular structure. For reaction networks, the same principle appears to hold, with several dynamical properties attributable to minimal modules (cores) \cite{blokhuis20,VasStad23}.
In particular, when \emph{unstable} cores become dominant, they induce a loss of steady-state stability, giving rise to a dichotomy between the onset of multistationarity and the emergence of periodic oscillations. For the latter case, oscillatory cores provide sufficient stoichiometric conditions.

Oscillatory cores come in two classes, mechanistically characterized by either positive feedback or negative feedback. The importance of autocatalysis has long been appreciated, and reported oscillators in synthetic chemistry thus predominantly rely on it. In contrast, oscillatory cores of class II -- introduced here -- never involve autocatalysis, but instead do require length: a minimum number of slow reaction steps is required for the capacity to oscillate (principle of length). Oscillators of this class have not yet been realized experimentally\footnote{We have not found synthetic experimental realizations of this class of oscillators in the chemistry literature.}, although theoretical examples are known \cite{hyver1978}. 

Throughout we identified several ways in which catalysis contributes to the emergence of periodic oscillations: it can induce the necessary instability, relax parametric dependencies, or reduce the length requirement. As an overall illustration, we constructed a minimalist template centered on an autocatalytic core, extended by the addition of an off-core reactant consumed by the autocatalyst together with a degradation mechanism for the same autocatalyst species. At the same time, we found that oscillators exist that involve no catalysis at all (i.e., neither allocatalysis \cite{blokhuis20} nor autocatalysis), for which synthetic counterparts remain to be prepared.

Our stoichiometric recipes and parameter-rich methodology broaden the toolkit for studying oscillations. We used these tools to provide qualitative and quantitative insights into  mechanisms, nonequilibrium thermodynamics and network structure underlying oscillations. Our approach furthermore predicts the existence of new types of chemical oscillators yet to be synthesized. The formalism lays the groundwork for investigating deeper questions about chemical oscillators, offering tools to start tackling them.


\section{Proof of the main results}\label{sec:proofs}

Most results on periodic orbits in parametric systems rely on proving the occurrence of a \emph{local} Hopf bifurcation \cite{GuHo84}. While alternative methods exist, they often require restrictive assumptions that limit their general applicability. For example, the Poincaré--Bendixson criterion is limited to two-dimensional systems or to networks with very specific monotone feedback architectures \cite{MPSmith90}. A notable case of applicability of this elegant geometric argument is the hypercycle \cite{HofbauerHypercycle91}. More elaborate bifurcation scenarios, such as Takens--Bogdanov bifurcations, also imply the existence of a local Hopf bifurcation but simultaneously require multistationarity through a zero-eigenvalue bifurcation. As such, they fail to capture many oscillatory networks that do not exhibit multiple steady states - such as those arising from Recipe \ref{recipe:2main}. In summary, Takens--Bogdanov bifurcations are sufficient but not necessary for a local Hopf bifurcation. Other frameworks, such as hybrid bifurcations \cite{APNHJ23}, demand even more articulate structural conditions, which are rarely encountered in typical reaction networks. 

However, even detecting a local Hopf bifurcation presents major computational obstacles. It demands the identification of a steady state whose Jacobian has a pair of purely imaginary eigenvalues, typically via a Routh--Hurwitz criterion, a task already infeasible for networks with more than a handful of species. Moreover, asserting the existence of periodic orbits further requires checking non-resonance and transversality conditions. 

To overcome these limitations, we have applied here results from \emph{global} Hopf bifurcation theory, as developed by Chow, Mallet--Paret, and Yorke \cite{ChowMPYorkeGH78}, and further refined by Fiedler \cite{Fiedler85PhD}, who provided a powerful yet accessible criterion for establishing the existence of periodic orbits. In essence, Fiedler's result (see Thm.~\ref{thm:fiedlermain} below) asserts that any parametrized family of steady states - along which the Jacobian remains invertible but undergoes a stability change - necessarily implies the existence of periodic solutions: one needs only to verify the overall Jacobian invertibility and identify two steady-states with different stability. A particularly effective aspect of this result is that it bypasses the need to compute any bifurcation point explicitly. In turn, however, this method cannot provide information about the stability of the resulting periodic orbits. We argue that this limitation is not critical, since experimental observations of damping periodic oscillations can be interpreted and modeled in different ways. On one hand, they may correspond to stable periodic orbits in the idealized model that appear damped in experiments due to noise, thermodynamic effects, or open-system conditions. On the other hand, these oscillations could also arise from the unstable manifold of an inherently unstable periodic orbit within the model itself.

\subsection{The preliminary nondegeneracy condition} The main result of this contribution, Thm.~\ref{thm:mainmain}, applies to networks that are \emph{nondegenerate} in the following sense:
\begin{definition}[Nondegenerate networks]\label{def:nondegnetmain}
Let $\mathbf{\Gamma}$ be a reaction network with $|M|\times|E|$
stoichiometric matrix $\mathbb{S}$, and let $n=\operatorname{dim}\operatorname{ker}\mathbb{S}^T$. Then
$\mathbf{\Gamma}$ is nondegenerate if there exists a choice for the
reactivity matrix $R$ such that the symbolic Jacobian $G=\mathbb{S}R$
possesses a nonzero $(|M|-n)$-principal minor.
\end{definition}
Nondegenerate networks are those for which the trivial eigenvalues zero of the Jacobian all descend from the linear conservation laws identified as left-kernel vectors of the stoichiometric matrix $\mathbb{S}$. In particular, in absence of conservation laws, a nondegenerate network admits an invertible symbolic Jacobian determinant $G=\mathbb{S}R$ for some choice of $R$. In terms of CS-matrices, we have the following Lemma that characterizes nondegeneracy.
\begin{lemma}\label{lem:nondegcsmain}
A network $\mathbf{\Gamma}$ with stoichiometric matrix $\mathbb{S}$ such that $\operatorname{dim}\operatorname{ker}\mathbb{S}^T=n\ge0$ is nondegenerate if and only if there exists an invertible $(|M|-n)\times (|M|-n)$ CS-matrix.
\end{lemma}
\proof
The proof follows from the fact that each coefficient $c_k$ of the characteristic polynomial $g(\lambda)$ of the symbolic Jacobian $G=\mathbb{S}R$, 
\begin{equation}
   g(\lambda):=\operatorname{det}(\mathbb{S}R-\lambda \operatorname{Id})=\sum_{k=0}^{|M|}(-1)^kc_k\lambda^{|M|-k}
\end{equation}
can be expanded along Child Selections:
\begin{equation}\label{eq:expcsmain}
c_k=\sum_{\pmb{\kappa}}\operatorname{det}\mathbb{S}[\pmb{\kappa}]\prod_{\mathsf{m}\in\kappa}R_{J(m)m},
\end{equation}
where the sum runs on all $k$-CS triples $\pmb{\kappa}=(\kappa, E_\kappa, J)$. See Section 4 in \cite{VasStad23} for a detailed analysis of expansion \eqref{eq:expcsmain}. As $c_{|M|-n}$ can be seen as a multilinear homogenous polynomial of order $|M|-n$ in the positive variables $R_{jm}>0$, it follows that $c_{|M|-n}\equiv0$ if and only if all coefficients of the monomials, i.e. all $(|M|-n)\times (|M|-n)$ CS-matrices are singular, which proves the lemma.
\endproof

\subsection{Proof of Theorem \ref{thm:mainmain}}

The starting point for proving our main results is a theorem on
\emph{Global Hopf Bifurcation} by Fiedler which we recall here:
\begin{theorem}[(4.7) from \cite{Fiedler85PhD}]\label{thm:fiedlermain}
  Consider the ODE $\dot{\mathbf{x}}=f(\mathbf{x},\mu)$ with $\mathbf{x} \in \mathbb{R}^N$ and a
  parametric vector field $f(\mathbf{x},\mu)$ depending on a scalar parameter
  $\mu\in [a,b] \subseteq \mathbb{R}$ and assume the following conditions:
  \begin{enumerate}
  \item $f$ is analytic in $\mathbf{x}$ and $\mu$;
  \item A family of steady-states is parametrized over $\mu$, i.e. there
    exists $\bar{\mathbf{x}}(\mu)$ s.t.
    $$0=f(\bar{\mathbf{x}}(\mu),\mu)\quad\text{for}\quad \mu \in [a,b];$$
  \item The Jacobian of $\bar{\mathbf{x}}(\mu)$ is invertible for any $\mu$:
    $$\operatorname{det}f_x(\mathbf{x},\mu)|_{(\bar{\mathbf{x}}(\mu),\mu)}\neq 0 \quad \text{for}\quad \mu\in[a,b];$$
  \item Net-change of stability of the Jacobian:
    $$\operatorname{inertia}f_x(\mathbf{x},\mu)|_{(\bar{\mathbf{x}}(a),a)}\neq \operatorname{inertia}f_x(\mathbf{x},\mu)|_{(\bar{\mathbf{x}}(b),b)}.$$
  \end{enumerate}
  Then there exists $\mu^*$ such that $\dot{\mathbf{x}}=f(\mathbf{x},\mu^*)$  admits
  nonstationary periodic solutions.
\end{theorem}
Thm.~\ref{thm:fiedlermain} is stated in \cite{Fiedler85PhD} in wider
generalities for parabolic systems, and it has further conclusions that
more directly relate to the theory of global Hopf bifurcation.  For
simplicity of presentation, these are omitted in our tailored version.

The central step for the proof is stating a reduction lemma that
guarantees, via a perturbation argument, that Thm.~\ref{thm:fiedlermain}
applies to a nondegenerate network endowed with parameter-rich kinetics if
conditions 3 and 4 applies to a principal submatrix of its Jacobian. It is
the dynamics analogous to the linear algebra Lemma~\ref{lem:Dhopfmain},
with the further crucial delicacy that to apply Thm.~\ref{thm:fiedlermain} we
need to exclude further zero-eigenvalues. To do so, we rely on Lemma
\ref{lem:nondegcsmain}.

\begin{lemma}
  \label{lem:dynreductionmain}
  Let $\pmb{\Gamma}$ be a nondegenerate reaction network and $G[\kappa]$ a
  $k$-principal submatrix of its symbolic Jacobian $G=\mathbb{S}R$. Assume
  there exists an analytic path $\gamma(\mu)$ in parameter space,
  $\mu\in[0,1],$ such that $G[\kappa](\mu)$ is invertible for any $\mu$ and
  \begin{equation}
    \operatorname{inertia}G[\kappa](0)\neq \operatorname{inertia}G[\kappa](1).
  \end{equation}
  Then the associated dynamical system \eqref{eq:maineqmain} admits
  nonstationary periodic orbit.
\end{lemma}
\begin{proof}
  We aim to apply Thm.~\ref{thm:fiedlermain}. We first focus on conditions 3
  and 4, namely we will construct a parametrization of the symbolic
  Jacobian matrix $G(\mu)=\mathbb{S}R(\mu)$, $\mu\in[0,1]$ so that $G(\mu)$
  is invertible for any $\mu$ and
  \begin{equation}\label{eq:changeofinertiaproofmain}
    \operatorname{inertia}G(0)\neq \operatorname{inertia}G(1).
  \end{equation}
  We proceed incrementally by analyzing three cases of increasing
  generality and technical difficulty.

  If $|\kappa|=|M|$ there is nothing to prove, since $G[\kappa]$ is then
  just the full Jacobian $G$ and thus the assumptions of Lemma
  \ref{lem:dynreductionmain} become identical to conditions 3 and 4 in
  Thm.~\ref{thm:fiedlermain}.

  If the network possesses $n>0$ linearly independent conservation laws,
  i.e., $n=\operatorname{dim}\operatorname{ker}\mathbb{S}^T$, then the
  Jacobian $G$ cannot be invertible. The first simpler case is when
  $|\kappa|=|M|-n$. In this case, we shall consider the $|M|-n$ dimensional
  reduced dynamical system, with $n$ species removed, and whose Jacobian
  determinant corresponds indeed to the coefficient $c_{|M|-n}$ of the
  characteristic polynomial of $G$. Let $\{R_{jm}\}_{\mathsf{m}\in\kappa}$
  indicate the $d_\kappa$ nonzero parameters $R_{jm}$ that appear in
  $G[\kappa]$. Obviously, those are the $R_{jm}$ where $\mathsf{m}\in
  \kappa$ and $j$ is a reaction where $\mathsf{m}$ participates as a
  reactant. Without loss of generality, we can consider the analytic path
  $\gamma(\mu)$ to be defined only for $\{R_{jm}\}_{\mathsf{m}\in\kappa}$,
  i.e. on $\mathbb{R}^{d_k}_{>0}$.  Let now $\{R_{jm}\}$ indicate all
  nonzero parameters $R_{jm}$ appearing in $R$, ordered so that the
  $d_\kappa$ parameters associated to $\{R_{jm}\}_{\mathsf{m}\in\kappa}$
  appear first.  We embed the above path $\gamma(\mu)$, defined on
  $\{R_{jm}\}_{\mathsf{m}\in\kappa}$, i.e. on $\mathbb{R}^{d_k}_{>0}$, to
  $R$, i.e. $\mathbb{R}^d_{>0}$, by defining a path
  $\tilde{\gamma}(\mu,\varepsilon):[0,1]\times \mathbb{R}_{>0}\mapsto
  \mathbb{R}^d_{>0}$ as follows:
  \begin{equation}
    \tilde{\gamma}(t,\varepsilon)_h=\begin{cases}
    \gamma(t)_h \quad &\text{if $h\in {1,...,d_\kappa}$};\\
    \varepsilon\quad\quad \; &\text{otherwise}\\
    \end{cases}
  \end{equation}
  For $\varepsilon=0$, $c_{|M|-n}=\operatorname{det}G[\kappa]$, and thus
  the Jacobian of the reduced system is invertible. By continuity of the
  eigenvalues with respect to the entry of the matrix, the assumptions of
  the Lemma guarantees there is a parametrization $\tilde{\gamma}(\mu)$ for
  the Jacobian of the reduced system so that
  \eqref{eq:changeofinertiaproofmain} holds along an invertible path.

  The final general case is when $|\kappa|<|M|-n$. Since the network is
  nondegenerate, via Lemma \ref{lem:nondegcsmain}, there exists at least one invertible $(|M|-n) \times
  (|M|-n)$ CS-matrix. We pick one of such CS-matrices, and call it
  $\mathbb{S}[\bar{\pmb{\kappa}}_{|M|-n}]$ with associated $(|M|-n)$-CS
  $\bar{\pmb{\kappa}}_{|M|-n}=(\bar{\kappa}_{|M|-n},
  \bar{E}_{\bar{\kappa}_{|M|-n}}, \bar{J})$. As above, we embed the path
  $\gamma(\mu)$, defined on $\{R_{jm}\}_{\mathsf{m}\in\kappa}$, i.e. on
  $\mathbb{R}^{d_k}_{>0}$, to $R$, i.e. $\mathbb{R}^d_{>0}$, by defining a
  path $\tilde{\tilde{\gamma}}(\mu,\varepsilon,\eta):[0,1]\times (0,1)
  \times \mathbb{R}_{>0}\mapsto \mathbb{R}^d_{>0}$ as follows
  \begin{equation}
    \tilde{\tilde{\gamma}}(t,\varepsilon,\eta)_h=
    \begin{cases}
      \gamma(t)_h \quad&\text{if $h\in \{1,...,d_\kappa\}$};\\
      \varepsilon\quad\quad\quad
      &\text{if $h\not\in \{1,...,d_\kappa\}$ but $h$ corresponds to $R_{jm}=R_{\bar{J}(m)m}$, with $m\in \bar{\kappa}_{|M|-n}$};\\
      \varepsilon^\eta\quad\quad\quad
      &\text{otherwise.}\\
    \end{cases}
  \end{equation}
  Clearly, such a rescaling guarantees that for sufficiently large $\eta$
  and any $\varepsilon<1$ we have.
  \begin{equation}
    c_{|M|-n}\approx
    \operatorname{det}\bar{S}[\pmb{\kappa}_{|M|-n}]\prod_{\mathsf{m}\in
      \kappa_{|M|-n}}R_{J(m)m}\neq 0.
  \end{equation}
  Thus invertibility of the Jacobian of the reduced system is guaranteed
  for any $\varepsilon<1$. In turn, by continuity of eigenvalues, for
  $\varepsilon$ small enough the assumptions of this Lemma guarantees that
  \eqref{eq:changeofinertiaproofmain} holds, along a path with invertible
  Jacobian.

  To conclude, we just leverage Def.~\ref{def:prichmain} of parameter-rich
  kinetic models. Indeed, we can fix any positive value for the steady
  state $\bar{\mathbf{x}}>0$, and based on the path $\gamma(\mu)$ (or
  $\tilde{\gamma}(\mu)$, $\tilde{\tilde{\gamma}}(\mu)$ respectively), we
  find a parametrization of a family of a steady-states along
  $\gamma(t)$. Since $\mathbf{r}(\mathbf{x})$ and $\gamma(\mu)$ are analytic in
  their domain of definition, we can apply Thm.~\ref{thm:fiedlermain} and
  conclude the existence of nonstationary periodic orbits.
\end{proof}


We next show that the Jacobian $G$ of a reaction network $\pmb{\Gamma}$ can
be parametrized as a product $\tilde{G}D$, where $D$ is a parametric
positive diagonal matrix:

\begin{lemma}\label{lem:mainGDmain}
  Let $G=\mathbb{S}R$ be the symbolic Jacobian of a reaction network, with
  $\mathbb{S}$ stoichiometric matrix, and $R$ parametric reactivity
  matrix. Then, there exists a rescale of $R$ such that
  $$G=\tilde{G}D,$$ where $D$ is a parametric positive diagonal matrix, and
  $\tilde{G}$ is fixed.
\end{lemma}
\begin{proof}
  Consider the following rescale for the matrix $R$:
  \begin{equation}
    R_{jm}=\tilde{R}_{jm}\;\rho_m,
  \end{equation}
  where $\tilde{R}_{jm}>0$ is a fixed positive value and
  $\rho_m\in(0,+\infty)$ is a positive parameter. Define $\tilde{R}$ as the
  reactivity matrix with entries defined by $\tilde{R}_{jm}$,
  $\tilde{G}:=S\tilde{R}$ and
  $D:=\operatorname{diag}(\rho_1,...\rho_{|M|})$. By construction, we get
  that $G=\tilde{G}D$.
\end{proof}

Since there is a parameter choice $\tilde{R}$ such that
$\tilde{G}=S\tilde{R}$ is $D$-Hopf, Lemma~\ref{lem:mainGDmain} allows us
express a family of symbolic Jacobian $G$ in the form $G=\tilde{G}D$,
where $D$ is a positive parametric matrix. As $\tilde{G}$ is $D$-Hopf,
there is an invertible principal submatrix $\tilde{G}[\kappa]$ of $G$ and two $k\times k$ positive diagonal matrices $D_1=\operatorname{diag}(\mathbf{d}_1)$ and $D_2=\operatorname{diag}(\mathbf{d}_2)$ so that
\begin{equation}
  \operatorname{inertia}\tilde{G}[\kappa]D_1 \neq \operatorname{inertia}\tilde{G}[\kappa]D_2.
\end{equation}
Considering any analytic path $\gamma(\mu)$
  in parameter space connecting $\mathbf{d}_1$ to $\mathbf{d}_2$ settles us
  in the generality of Lemma~\ref{lem:dynreductionmain}. This concludes the proof Thm.~\ref{thm:mainmain}.

\begin{remark}[Mass-Action kinetics I]\label{rm:lemmamamain}
Lemma \ref{lem:mainGDmain} applies not only to parameter-rich kinetics but an analogous result holds also for standard mass action kinetics, see \cite{Vassena2025} for a detailed analysis of the mass action case. For all kinetics mentioned in this paper (standard and generalized mass action, Michaelis--Menten, and Hill) the matrix $D$ in Lemma \ref{lem:mainGDmain} can be interpreted as the inverse of the steady-state values,
$$\rho_m=\dfrac{1}{\bar{x}_m}.$$
We exemplify this for monomolecular Michaelis--Menten kinetics:
\begin{equation}\label{eq:mmmonomain}
r(x)=a\frac{x}{1+b x}
\end{equation}
as any other case follows in total analogy. The derivative of $r(x)$ in \eqref{eq:mmmonomain}, evaluated at a steady-state $\bar{x}$ is
\begin{equation}
r'(\bar{x})=a\frac{1}{(1+b \bar{x})^2}=a\frac{1}{(1+b \bar{x})^2}\dfrac{\bar{x}}{\bar{x}}=\dfrac{r(\bar{x})}{1+b\bar{x}} \dfrac{1}{\bar{x}}.
\end{equation}
In particular, the derivative can be parametrized in terms of the steady-state flux $r(\bar{x})$, a free positive parameter $\tilde{b}=b\bar{x}\in(0,+\infty)$ and the value $\bar{x}$. Note that $\tilde{b}=0$ recovers the mass action case. Since all partial derivatives $r'_{jm}$ with respect to a variable $x_m$ appear in the $m^{th}$ column of the Jacobian matrix, the connection between $D$ and $\operatorname{diag}(1/\bar{x}_1,...,1/\bar{x}_{|M|})$ follows.
\end{remark}

\begin{remark}[Mass-Action kinetics II] In view of Remark~\ref{rm:lemmamamain}, Theorem~\ref{thm:mainmain} also applies to mass-action kinetics. However, the nondegeneracy assumption requires a dedicated discussion in this setting, since Lemma~\ref{lem:nondegcsmain} holds only for parameter-rich systems. In particular, for mass-action systems, the notion of degeneracy may depend on the steady-state constraints.
\end{remark}

\subsection{Proof of Cor.~\ref{cor:dhopfoscillationsmain}}\label{sec:proofscor}

The main Theorem~\ref{thm:mainmain} expresses a condition for
oscillations in terms of the Jacobian. We are primarily
interested, however, in conditions that directly refer to the
stoichiometric matrix $\mathbb{S}$. In the case of parameter-rich
kinetics this is indeed possible, as Corollary \ref{cor:dhopfoscillationsmain} states. Here we present its proof.

\begin{proof}[Proof of Corollary \ref{cor:dhopfoscillationsmain}]
Let $\pmb{\kappa}=(\kappa, E_\kappa, J)$ the associated $k$-CS. Consider
  the following rescale in the reactivity matrix $R$:
  \begin{equation}
    R_{jm}(\varepsilon)=
    \begin{cases}
      1 \quad \text{if $\mathsf{m}\in \kappa, j\in E_\kappa$ and $j=J(\mathsf{m})$;}\\
      \varepsilon \quad \text{otherwise}.
    \end{cases}
\end{equation}
At $\varepsilon=0$, the Jacobian $G(\varepsilon)=\mathbb{S}R(\varepsilon)$
is $D$-Hopf, since it can be written as a lower block-triangular matrix
with zero blocks on the diagonal, except for one block that is
$\mathbb{S}[\pmb{\kappa}]$. By continuity, the results holds for
$\varepsilon>0$, small enough. In fact, the principal submatrix of
$G[\kappa]$ identified by the species in $\kappa$ is still $D$-Hopf for
$\varepsilon>0$, small enough. Thus,
the full Jacobian $G$ is $D$-Hopf.
\end{proof}

In particular, the presence of even a single oscillatory core, no matter
which size, is sufficient for the whole system to exhibit nonstationary
periodic solutions. This in particular proves both Recipe \ref{recipe:1main} and Recipe \ref{recipe:2main}. However, it is important to note that this is not a
necessary condition, as Thm.~\ref{thm:mainmain} assumptions are more general
and do not require an oscillatory core, as we will further argue below.





\subsection{Proof of Recipe~\ref{recipe:0main}}

We state here formally and prove Recipe \ref{recipe:0main}.

\begin{customthm}{0}\label{recipe:0}
  Let $\bar{R}$ be a partially specified reactivity matrix, where some
  entries $\bar{R}_{jm} \in [0, +\infty)$ are fixed, and the remaining
    entries are left as free parameters. Let $\bar{G} = \mathbb{S} \bar{R}$
    denote the corresponding symbolic Jacobian, and suppose there exists a
    principal submatrix $\bar{G}[\kappa]$ of $\bar{G}$ that remains
    invertible for all admissible choices of the free parameters in
    $\bar{R}$. If there exist two such parameter choices, say $\bar{R}_1$
    and $\bar{R}_2$, such that $\bar{G}[\kappa]$ is Hurwitz-stable for
    $\bar{R}_1$ and Hurwitz-unstable for $\bar{R}_2$, then the system
    admits nonstationary periodic solutions.
\end{customthm}
\begin{proof}
  Recipe \ref{recipe:0main} is in essence a corollary of
  Lemma~\ref{lem:dynreductionmain}. It assumes the invertibility of $G[\kappa]$
  for all parameter choice in $\bar{R}$, once few parameters have been
  preliminarily fixed. It assumes further the existence of two choices of
  parameters $\bar{R}_1$ and $\bar{R}_2$ so that a principal submatrix
  $G[\kappa]$ of the Jacobian is Hurwitz-stable at $\bar{R}_1$ and
  Hurwitz-unstable at $\bar{R}_2$. We just consider any analytic path
  $\gamma(\mu)$ connecting these $\bar{R}_1$ and $\bar{R}_2$ in parameter
  space and we apply Lemma \ref{lem:dynreductionmain}. Finally, by considering
  a sufficiently small positive choice for the parameters in
  $\bar{R}_{jm}$, which had been possible previously fixed to zero, the
  same conclusions hold, by continuity of the eigenvalues with respect to
  the entries.
\end{proof}

\begin{remark}
  We emphasize that, according to Recipe~\ref{recipe:0main}, it is admissible
  to fix certain parameters $R_{jm}$ to zero - even though this may seem to
  contradict the assumption of monotone kinetic models. This contradiction
  is only superficial: as shown in the proof, the conclusions
  still apply to systems governed by parameter-rich monotone kinetic
  models by a standard perturbation arguments. In practice, however, setting certain $R_{jm}$ to zero can
  simplify the search for parameter configurations that satisfy the
  hypotheses of Recipe~\ref{recipe:0main}, thereby facilitating the detection
  of oscillatory behavior.
\end{remark}

The instability in Recipe \ref{recipe:0main} may be generated from any
(D-)unstable cores and, in particular, both from an unstable-positive
feedback, and a unstable-negative feedback. It is remarkable that
Recipe \ref{recipe:0main} does not necessarily require the presence of an
oscillatory core. 
To exemplify, consider the case where a consistent and nondegenerate network $\mathbf{\Gamma}=(M,E)$ possesses only one $|M|$-CS $\bar{\pmb{\kappa}} = (M, \bar{E}_M, \bar{J})$, and the associated CS-matrix $\mathbb{S}[\bar{\pmb{\kappa}}]$ is $D$-stable. In this case, the determinant of the symbolic Jacobian $G$ is a single monomial:
\begin{equation}
\operatorname{det}G = \operatorname{det}\mathbb{S}[\pmb{\kappa}] \prod_{m \in M} R_{\bar{J}(m)m},
\end{equation}
and thus invertible for any choice of parameters. Furthermore, since $D$-stability implies Hurwitz-stability, such a situation ensures that the network admits a locally stable steady-state, while precluding the possibility that any principal minor of $\mathbb{S}[\pmb{\kappa}]$ forms an oscillatory core. However, the mere hypothetical presence of any other $D$-unstable core in the network -independently of any further conditions, and in particular regardless of the presence or absence of oscillatory cores - guarantees that the assumptions of Recipe~\ref{recipe:0main} are satisfied. In fact, by focusing on the whole symbolic Jacobian $G=\mathbb{S}R$ we get that
\begin{enumerate}
    \item The network admits stability, i.e. there exists choice $R_1$ with $G_1=\mathbb{S}R_1$ Hurwitz-stable;
    \item The network admits instability, i.e. there exists choice $R_2$ with $G_2=\mathbb{S}R_2$ Hurwitz-unstable;
    \item The symbolic Jacobian $G$ is invertible for all parameter choices $R$.
\end{enumerate}

Hence, the existence of periodic orbits follows from Recipe \ref{recipe:0main}, independently of the existence of oscillatory cores. Further examples and elaborations are discussed in Section~\ref{sec:recipe0} of the SM.

\bibliography{preprint.bib}
\bibliographystyle{unsrt}

\newpage

\begin{center}
   {\Huge \textbf{Supplementary Material}}
\end{center}

\newcommand{\beginsupplement}{%
  \setcounter{table}{0}
  \renewcommand{\thetable}{S\arabic{table}}%
  \setcounter{figure}{0}
  \renewcommand{\thefigure}{S\arabic{figure}}%
  \setcounter{section}{0}
  \renewcommand{\thesection}{S\arabic{section}}%
}

\beginsupplement

The SM is organized as follows.

In Part \hyperlink{part1}{I}, we give a self-contained,
rigorous description of the mathematical results underlying the idea of
oscillatory cores and the three recipes: \ref{recipe:0main}, \ref{recipe:1main}, \ref{recipe:2main}. To this end we first recall basic concepts of the theory
of reaction networks (Sec.~\ref{sec:reactionnetworks}). In
Sec.~\ref{sec:linearalgebra}, we present a linear algebra overview of the
concept of $D$-stability and $P^-$ matrices. This section stands
independent and does not rely on any network concept. We recall the network
tools of Child Selections in Sec.~\ref{sec:CS}. We then proceed in
Sec.~\ref{sec:bridge} to bridge Sec.~\ref{sec:linearalgebra} with reaction
networks: we interpret the linear algebra considerations in terms of the
stoichiometry of the network. 

In Part \hyperlink{part2}{II}, we discuss in detail several examples. A first list of oscillatory cores of both classes is presented in Sec.~\ref{sec:oscillatorycores}. Examples involving Recipe \ref{recipe:0main} are presented in Sec.~\ref{sec:recipe0}. A network with innocent net-stoichiometry and oscillations induced by hidden catalysts is presented in Sec.~\ref{ex:hidcat}.

Part \hyperlink{part3}{III} exemplifies how to leverage our results in the context of explicit numerical simulations.\\

\vspace{2cm}

\textbf{\Large \hypertarget{part1}{PART I}: Mathematical Results}
\addtocontents{toc}{\protect\setcounter{tocdepth}{-1}}

\section{Reaction Networks}\label{sec:reactionnetworks}
\addtocontents{toc}{\protect\setcounter{tocdepth}{-1}}

A reaction network $\pmb{\Gamma}=(M,E)$ consists of a set $M$ of \emph{species} and a
set $E$ of \emph{reactions} that transform subsets of species into each other. In \emph{chemical} reaction networks in particular, it is important to keep track
of the copy number of reactant and product species. Thus a reaction $j\in
E$ is formalized as an ordered association between non-negative linear combinations of the species
\begin{equation}\label{reaction}
  \sum_{\mathsf{m}\in M}
  s^j_{m} \mathsf{m}
  \quad\underset{j}{\rightarrow}\quad
  \sum_{\mathsf{m}\in M} \tilde{s}^j_{m} \mathsf{m}.
\end{equation}
Species appearing on the left-hand side of the reactions are called \emph{reactants}, and species appearing on the right-hand side of the reaction are called \emph{products}. The nonzero \emph{stoichiometric coefficients} of the reactants are indicated as $s^j_m>0$ and of the products $\tilde{s}^j_m>0$, respectively. We say that the network is \emph{nonambiguous} if explicit
catalysts are absent, i.e., no species appear as both reactant and product in the
same reaction. In this case, the reaction network is completely determined by the
stochiometric matrix with entries
\begin{equation}
  \mathbb{S}_{mj} \coloneqq \tilde{s}^j_{m} - s^j_{m}\,,
\end{equation}
which describes the net production or consumption. We do not assume nonambiguity, unless otherwise explicitly stated. Formally, we assume all reactions to be irreversible, i.e. we give a fixed order to a reactions, according to the sign of the arrow in \eqref{reaction}. Of course, reversible processes 
\begin{equation}
    \ce{X} \quad \rightleftharpoons \quad \ce{Y}
\end{equation}
can be naturally treated simply by considering two irreversible, opposite reactions:
\begin{equation}
    \ce{X} \quad \rightarrow \quad \ce{Y}  \quad \text{and} \quad \ce{Y} \quad \rightarrow \quad \ce{X}.
\end{equation}

We do not, a priori, assume that the network represents a closed system of well-formed chemical reactions, as we consider general reaction networks that may be open, involve strongly irreversible reactions, or be subject to external driving. Thus, we do not assume that the stoichiometric coefficients
are integers, and also make no assumptions on the existence of reaction invariants such as conservation of mass, nor do we require that energy is conserved. The consequences of such constraints on the stoichiometry of the network are discussed in
\cite{MuFlSt:22}. 

The concentration of species $\mathsf{m}$ is denoted by $x_m\ge0$. We
assume that the dynamics of the network is completely determined by the
reactions $j\in E$ and thus can be written in the form 
\begin{equation}\label{eq:dynamics}
  \dot{\mathbf{x}} = f(\mathbf{x}) = \mathbb{S} \pmb{r}(\mathbf{x})
\end{equation}
where $r_j(\mathbf{x})$ is the rate of reaction $j$. For each initial condition $\mathbf{x}_0\in \mathbb{R}^{|M|}_{>0}$ the sets
\begin{equation}
\mathbf{x}_0 + \operatorname{Im} \mathbb{S},
\end{equation}
invariant to the flow of \eqref{eq:dynamics}, are called \emph{stoichiometric compatibility classes}. Fixed points $\bar{\mathbf{x}}$ of \eqref{eq:dynamics},
\begin{equation}
  0 = \mathbb{S} \pmb{r}(\bar{\mathbf{x}})
  \label{eq:fixedpoint}
\end{equation}
are called \emph{steady-states}. Throughout, we consider only networks whose
stoichiometry allows for the existence of positive steady-states,
i.e., we assume that the stoichiometric matrix $\mathbb{S}$ possesses a
positive right kernel vector $\pmb{v}>0$, i.e.,
\begin{equation}
    \mathbb{S} \pmb{v}=0.
\end{equation}
Networks satisfying such a property are called \emph{consistent} \cite{Ang07}. In turn, \emph{left} kernel vectors $w$ of the stoichiometric matrix $\mathbb{S}$ identify \emph{conservation laws}, since
\begin{equation}
\dfrac{d (w \mathbf{x}(t))}{dt}=w\dot{\mathbf{x}}=w\mathbb{S}\mathbf{r}(\mathbf{x})=0.
\end{equation}

We further assume that a reaction $j$
produces its products with a rate $r_j(\mathbf{x})$ that depends only on the
concentrations of the reactants of reaction $j$. The rate $r_j(\mathbf{x})$ vanishes
if some reactants of $j$ are missing, and increases with the reactant
concentrations. We refer to such features as \emph{monotone kinetic model}, in the sense formalized by the following definition:
\begin{definition}[Monotone kinetic model]\label{def:kineticmodel}
  Let $\pmb{\Gamma}=(M,E)$ be a chemical reaction network. An analytic
  function $\pmb{r}:\mathbb{R}^M_{\ge0}\to\mathbb{R}^E$ is a \emph{monotone
  kinetic model} for $\pmb{\Gamma}$ if
  \begin{enumerate}
  \item $r_j(\mathbf{x})\ge0$ for all $\mathbf{x}$, 
  \item $r_j(\mathbf{x})>0$ implies $x_m>0$ for all $m$ with $s^j_{m}>0$,
  \item $s^j_{m}=0$ implies $\partial r_j/\partial x_m\equiv 0$,
  \item if $\mathbf{x}>0$ and $s^j_{m}>0$ then $\partial r_j/\partial x_m>0$.
  \end{enumerate}
\end{definition}
The analyticity assumption is introduced solely for technical reasons to apply Fiedler's Theorem on Global Hopf Bifurcation (see Theorem~\ref{thm:fiedlermain} in the main text). However, this assumption is not restrictive in the context of kinetics, as most kinetic models are naturally expressed as rational functions defined on the positive orthant. The reaction rates in general depend on a collection $\mathbf{p}>0$ of positive
parameters, $\pmb{r}(\mathbf{x})=\pmb{r}(\mathbf{x};\mathbf{p})$. For simplicity of notation, at a given steady-state vector $\bar{\mathbf{x}}\in\mathbb{R}^M_{\ge 0}$ we write
\begin{equation} 
  r'_{jm}=r'_{jm}(\bar{\mathbf{x}},\mathbf{p}) \coloneqq  \frac{\partial r_j(\mathbf{x},\mathbf{p})}{\partial x_m}\bigg\vert_{\mathbf{x}=\bar{\mathbf{x}}}\,,
\end{equation}
for the partial derivatives of the reaction rates. In this paper, we adopt the general framework of \emph{parameter-rich kinetic models} introduced in \cite{VasStad23}, for which we first provide a preliminary definition.
\begin{definition}[Reactivity matrix]
\label{def:R}
Consider a network $\mathbf{\Gamma}$. We call any $|E|\times |M|$ matrix $R$ with nonnegative
  entries satisfying $R_{jm}>0$ if $s_{m}^{j}>0$ and $R_{jm}=0$ if
  $s_{m}^{j}=0$ a \emph{reactivity matrix}
for the network $\mathbf{\Gamma}$. 
\end{definition}

For nonambiguous networks,  $\mathbb{S}_{mj}=s_m^j<0$ for all reactions $j$ with reactant $m$. In this case, the reactivity matrix $R$ has a nonzero entry $R_{jm}>0$ if and only if the stoichiometric matrix has a negative entry $\mathbb{S}_{mj}<0$. 
\begin{definition}[Parameter-rich kinetics]\label{def:parameterrich}
A monotone kinetic rate model $\pmb{r}(\mathbf{x};\mathbf{p})$ is \emph{parameter-rich} if, for every
  positive steady-state $\bar{\mathbf{x}}>0$ and every reactivity matrix $R$, there are parameters $\bar{\mathbf{p}}=\mathbf{p}(\bar{\mathbf{x}},R)$ such that
\begin{equation}R_{jm}=r'_{jm}(\bar{\mathbf{x}},\bar{\mathbf{p}}).
\end{equation}
  \label{def:p-rich}
\end{definition}

As shown in \cite{VasStad23}, Michaelis--Menten kinetics \cite{MM13}, 
\begin{equation}\label{MMeq}
  r_j(\mathbf{x},a_j,b_j) : =a_j\prod_{m\in M} \Bigg( \frac{x_m}{(1+b^j_m
    x_m)}\Bigg)^{s^j_m},\quad a_j,b^j_m>0;
\end{equation}
Hill kinetics \cite{Hill10},
\begin{equation}\label{eq:hill}
  r_j(\mathbf{x},a_j,b_j,c_j) : =a_j\prod_{m\in M} \Bigg( \frac{x_m^{c^j_m}}{(1+b^j_m
    x_m^{c^j_m})}\Bigg)^{s^j_m}\quad a_j,b^j_m, c^j_m>0;
\end{equation}
and generalized mass action kinetics \cite{Muller:12}, 
\begin{equation}\label{eq:gma}
  r_j(\mathbf{x},a_j,b_j) := a_j\prod_{m\in M} x_m^{b_{m}^{j}}, \quad \quad \quad
  \text{$b^j_m\ge 0$ with $b^j_m \neq 0$ if and only if $s^j_m\neq 0$;}
\end{equation}
all are parameter-rich. In contrast, standard mass-action kinetics \cite{MA64} is not parameter-rich:
\begin{equation}\label{MAeq}
  r_j(\mathbf{x},a_j) := a_j\prod_{m\in M} x_m^{s_{m}^{j}}. 
\end{equation}
See the dedicated paragraph below for an extended discussion on the relation between elementary mass action and parameter-rich kinetics.

We recall that a square matrix $A$ is \emph{Hurwitz-stable} if all of its eigenvalues have negative real part, while it is \emph{Hurwitz-unstable} if at least one of its eigenvalues has positive real part. We recall that the dynamical stability of a steady-state $\bar{\mathbf{x}}$ of a dynamical system
$$\dot{\mathbf{x}}=f(\mathbf{x})$$
relates to the Hurwitz-stability of the Jacobian matrix \cite{Hsubook} $$G(\bar{\mathbf{x}}):=\dfrac{\partial f(\mathbf{x})}{\partial \mathbf{x}}\bigg |_{\mathbf{x}=\bar{\mathbf{x}}}.$$
In particular, if the Jacobian matrix evaluated at $\bar{\mathbf{x}}$ is Hurwitz-stable (resp. Hurwitz-unstable), then the steady-state $\bar{\mathbf{x}}$ is asymptotically stable (resp. asymptotically unstable). However, in the case of reaction network systems, \eqref{eq:dynamics}, the rank of the Jacobian 
is bounded above by
\begin{equation}
\operatorname{rank}G(\bar{\mathbf{x}})\le\operatorname{rank}\mathbb{S}.
\end{equation}
In the presence of linear conservation laws, thus, the Jacobian matrix always possesses a trivial eigenvalue zero with algebraic multiplicity at least
\begin{equation}
n=\operatorname{dim}\operatorname{kernel}(\mathbb{S}^T).
\end{equation} 
Via a standard reduction procedure, the conservation laws can always be used to reduce the system to a fixed stoichiometric compatibility class by eliminating 
$n$ variables \cite{Fei19}. Throughout this paper, Hurwitz stability for Jacobians of reaction networks with conservation laws is understood within each stoichiometric compatibility class, disregarding the 
$n$ trivial zero eigenvalues that arise from the conservation laws. We are now ready for the next definition.
\begin{definition}
  A network $\pmb{\Gamma}=(M,E)$ with a parametrized kinetic model $r(\mathbf{x};\mathbf{p})$
  \emph{admits instability (resp. stability)} if there exists a choice $\bar{\mathbf{p}}$ of
  parameters such that there is a positive steady-state $\bar{\mathbf{x}}$ of
  $\dot{\mathbf{x}}=\mathbb{S} r(\mathbf{x};\bar{\mathbf{p}})$ with a Hurwitz-unstable (resp. Hurwitz-stable) Jacobian $G(\bar{\mathbf{x}})$.
\label{def:admitinstabil}
\end{definition}

One can think of the reactivity matrix $R$ defined in Def.~\ref{def:R} as a symbolic matrix where each entry $R_{jm}\in(0,+\infty)$ is a positive parameter. For a reaction network endowed with parameter-rich kinetics, the Jacobian of \eqref{eq:dynamics} at any steady-state $\bar{\mathbf{x}}$ can be written as
\begin{equation}\label{eq:SR}
  G(\bar{\mathbf{x}}) = G:=\mathbb{S} R,
\end{equation}
and thus can be thought as well as a symbolic matrix. We refer to the matrix $G=\mathbb{S}R$ as the \emph{symbolic Jacobian} of the network. Note that the right-hand side of \eqref{eq:SR} does not explicitly depend on the steady-state value $\bar{\mathbf{x}}>0$. In this sense, the parameter-rich framework enables the study of local stability of steady states to be reformulated as a purely algebraic analysis of the symbolic Jacobian $\mathbb{S}R$. Since Def.~\ref{def:p-rich} applies to any positive steady-state $\bar{\mathbf{x}}>0$, its value can be chosen arbitrarily. The reaction constants can then always be determined through inversion formulas, whose explicit form naturally depends on the specific kinetic model. Crucially, Def.~\ref{def:p-rich} itself ensures the feasibility of this procedure.

\paragraph{Parameter-rich results and mass-action kinetics.}
Our results, in particular, will not decisively establish the existence of periodic oscillations for mass-action systems. The advantage of adopting the parameter-rich framework is that conditions for oscillations can be interpreted more directly from the stoichiometry, in contrast to the more algebraically involved criteria required for mass-action systems. For a dedicated analysis of this related, yet distinct, case, see \cite{Vassena2025}.

It is worth highlighting a twofold subtlety in the relationship between mass-action systems and parameter-rich kinetics. On the one hand, many non-elementary parameter-rich kinetics - such as Michaelis--Menten - can be derived as singular limits of enlarged mass-action mechanisms. For instance, the reaction
\begin{equation}
    \ce{X_1} \quad \rightarrow \quad \ce{X_2},
\end{equation}
with Michaelis--Menten kinetics, can be viewed as the singular limit of the enzymatic mass-action scheme
\begin{equation}
\ce{X_1} + \ce{E}\quad  \rightleftharpoons \quad \ce{I} \quad \rightarrow \quad \ce{X_2} + \ce{E},\end{equation}
in the regime where the enzyme $\ce{E}$ and intermediate $\ce{I}$ are present at concentrations negligible compared to those of the substrate $\ce{X_1}$ and product $\ce{X_2}$. In this sense, periodic solutions observed in parameter-rich systems suggest the potential for analogous behavior in the corresponding expanded mass-action systems. We will exploit this idea to provide oscillatory mass action systems in Sec.~\ref{sec:recipe0basicconstr}.

On the other hand, from a more mathematical perspective, the polynomial mass-action nonlinearity~\eqref{MAeq} arises as a limiting case of Michaelis--Menten kinetics, specifically when $b_j = 0$. In this sense, mass-action kinetics can be regarded as a special instance - rather than a foundation - of parameter-rich models. This has two key implications: First, any result that holds for \emph{all} parameter choices in a parameter-rich setting will necessarily extend to mass-action kinetics; in particular, exclusion results remain valid as, e.g., ruling out a stability or bifurcation scenario. Second, even when the results are not conclusive for mass-action models, the parameter-rich framework may offer insightful guidance identifying parameter regimes likely to support oscillations. We will use this strategy to interpret oscillatory mass-action networks reported in the literature.

\paragraph{Nondegenerate networks.}
Throughout this paper, we consider only \emph{nondegenerate networks}
satisfying a nondegeneracy condition for the symbolic Jacobian
$G=\mathbb{S}R$.
\begin{definition}[Nondegenerate networks]\label{def:nondegnet}
Let $\mathbf{\Gamma}$ be a reaction network with $|M|\times|E|$
stoichiometric matrix $\mathbb{S}$, and let $n\ge0$ indicate the number of
linearly independent conservation laws, i.e.,
$n=\operatorname{dim}\operatorname{ker}\mathbb{S}^T$. Then
$\mathbf{\Gamma}$ is nondegenerate if there exists a choice for the
reactivity matrix $R$ such that the symbolic Jacobian $G=\mathbb{S}R$
possesses a nonzero $(|M|-n)$-principal minor.
\end{definition}
Nondegenerate networks are those for which the trivial eigenvalues zero of the Jacobian all descend from the conservation laws. In particular, in absence of conservation laws, a nondegenerate network admits an invertible Jacobian determinant $G=\mathbb{S}R$ for some choice of $R$.  We will provide in Sec.~\ref{sec:bridge} a simple stoichiometric characterization of nondegenerate networks, based on the language of Child Selections.

In the continuation of this paper we will consider nondegenerate networks endowed with parameter-rich kinetics that admit both stability and instability. That is, the network admits two choices of $R_1$ and $R_2$ such that $G_1=\mathbb{S}R_1$ is Hurwitz-stable and $G_2=\mathbb{S}R_2$ is Hurwitz-unstable. Clearly, any continuous path $\gamma$ in matrix space that connects $R_1$ to $R_2$ necessarily hits at least one \emph{bifurcation point} $R^*$ such that $\mathbb{S}R^*$ possesses an eigenvalue with zero real part. We will investigate conditions for which such stability-loss happens at purely imaginary eigenvalues, triggering thus nonstationary periodic oscillations via Hopf bifurcation.

In the next Sec.~\ref{sec:linearalgebra}, we discuss abstract conditions for purely imaginary eigenvalues in a class of matrices.

\section{Linear Algebra}\label{sec:linearalgebra}
\addtocontents{toc}{\protect\setcounter{tocdepth}{-1}}
This section revisits the standard linear algebra notions of $D$-stability and $P^-$ matrices, with an eye on our objective. For a more comprehensive discussion, see \cite{Gio15, Ku19}. We introduce some necessary notation: let $\kappa$ denote any selection of $k \leq n$ indices from $\{1,\dots,n\}$. $A[\kappa]$ represents the principal submatrix of $A$, obtained by retaining the columns and rows with common indices given by those in $\kappa$; its determinant is referred to as \emph{a $k$-principal minor} of $A$.  We now recall the concept of the \emph{inertia} of a matrix.

\begin{definition}[Inertia of a matrix]
The inertia of an $n \times n$ square matrix $A$ is a nonnegative triple 
    $$\operatorname{inertia}(A) \coloneqq (\sigma^-_A,\sigma^+_A,\sigma^0_A),$$
where $\sigma^-_A$, $\sigma^+_A$, and $\sigma^0_A$ represent the number of eigenvalues of $A$ with negative real part, positive real part, and zero real part, respectively. The eigenvalues are counted according to their algebraic multiplicities so that $\sigma^+_A + \sigma^-_A + \sigma^0_A = n$.
\end{definition}

We state a new definition of $D$-Hopf matrix as follows.

\begin{definition}[D-Hopf matrix]\label{def:dhopf}
A $n\times n$ matrix $A$ is called \emph{D-Hopf} if there exists an invertible $k$-principal submatrix $A[\kappa]$ of $A$, and two positive $k\times k$ diagonal matrices $D_1$ and $D_2$ such that 
\begin{equation}\label{eq:changeofinertia}
\operatorname{inertia}A[\kappa]D_1\neq\operatorname{inertia}A[\kappa]D_2.
\end{equation}
\end{definition}

In the literature there exist a few related concepts, which we gather together in the following definition.

\begin{definition}[Stability and related properties]\label{def:stabilities}
    A $n\times n$ matrix $A$ is said to be
    \begin{enumerate}
    \item \emph{Hurwitz-stable} if its inertia is $$(n,0,0),$$ i.e., if $A$ possesses only eigenvalues with negative real part.
    \item \emph{Hurwitz-unstable} if its inertia is $$(\cdot,\sigma^+_A,\cdot)\quad\text{with $\sigma^+_A>0,$}$$ i.e. if $A$ possesses at least one eigenvalue with positive real part.
    \item \emph{Hyperbolic} if its inertia is 
    $$(\cdot,\cdot,0),$$
    i.e. if $A$ does not possess eigenvalues with zero real part.
    \item \emph{$D$-stable} if $AD$ is Hurwitz-stable for all positive diagonal matrices $D$.
    \item \emph{$D$-unstable} if there exists a positive diagonal matrix $\bar{D}$ such that $A\bar{D}$ is Hurwitz-unstable.
    \item \emph{$D$-hyperbolic} if $AD$ is hyperbolic for all positive diagonal matrices $D$.
    \item \emph{$D$-nonhyperbolic} if there exists a positive diagonal matrix $\bar{D}$ such that $A\bar{D}$ is not hyperbolic.
    \end{enumerate}
\end{definition}


\begin{remark}
A $D$-Hopf matrix $A$ is always $D$-unstable, since either $AD_1$ or $AD_2$ must be Hurwitz-unstable.
\end{remark}

According to Def.~\ref{def:stabilities}, a $D$-Hopf matrix possesses an invertible principal submatrix that is not $D$-hyperbolic. Setting $D = \operatorname{Id}$ shows however that $D$-stability and $D$-hyperbolicity imply Hurwitz-stability and hyperbolicity, respectively. In particular, if a matrix $A$ is not hyperbolic in the usual sense, it is also $D$-nonhyperbolic, though this does not necessarily make it $D$-Hopf. Thus, the key point is that Def.~\ref{def:dhopf} depends on principal submatrices. It is straightforward, then, to underline that a matrix $A$ that possesses any principal submatrix that is a $D$-Hopf matrix is itself $D$-Hopf. Because of the importance of this observation, we state it in the following proposition.
\begin{prop}\label{prop:reduction}
   Let $A$ be a $n\times n$ matrix. If any principal submatrix $A[\kappa]$ of $A$ is $D$-Hopf, then $A$ is $D$-Hopf. 
\end{prop}
\begin{proof}
The proof follows from Def.~\ref{def:dhopf}. If $A[\kappa]$ is $D$-Hopf, then there exists a principal submatrix of $A[\kappa]$, $A[\kappa']$ that is invertible and satisfies \eqref{eq:changeofinertia}. Since $A[\kappa']$ is also a principal submatrix of $A$, the statement follows.
\end{proof}

Proposition \ref{prop:reduction} guarantees that the $D$-Hopf property is preserved by considering $D$-Hopf matrices as principal submatrices of matrices of any size. This \emph{reduction property} will be crucial in identifying oscillatory motifs in reaction networks of any size.

We will prove insurgence of periodic oscillations via Hopf bifurcation, which in turn requires the existence of a steady-state with purely imaginary eigenvalues, the next proposition supports our choice of name.

\begin{lemma}\label{lem:Dhopfpurely}
If a matrix $A$ is D-Hopf, there exists a positive diagonal matrix $D^*$ such that $AD^*$ has purely imaginary eigenvalues.
\end{lemma}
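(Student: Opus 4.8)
The plan is to exhibit the required purely imaginary eigenvalue as the signature of a \emph{stability change} along a path of positive diagonal matrices, exploiting that the cone of positive diagonal matrices is convex (hence path-connected) and that eigenvalues depend continuously on the matrix entries. Definition~\ref{def:dhopf} hands us exactly the two endpoints of such a path on the invertible submatrix $A[\kappa]$, together with the invertibility that prevents the crossing from happening at the origin rather than on the imaginary axis.

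First I would work entirely on the submatrix. Set $\Delta(t) = (1-t)D_1 + tD_2$ for $t\in[0,1]$; since $D_1,D_2$ are positive diagonal, $\Delta(t)$ is positive diagonal for every $t$, and therefore $A[\kappa]\Delta(t)$ is invertible throughout (a product of invertible matrices), so $\sigma^0$ can only ever be contributed by \emph{nonzero} purely imaginary eigenvalues. If either $A[\kappa]D_1$ or $A[\kappa]D_2$ is already non-hyperbolic we are immediately done, so assume both endpoints are hyperbolic; then $\sigma^0=0$ at $t=0,1$, and by \eqref{eq:changeofinertia} the counts $\sigma^+$ differ (equivalently $\sigma^-$, as $\sigma^++\sigma^-=k$). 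The map $t\mapsto \sigma^+\big(A[\kappa]\Delta(t)\big)$ is integer-valued and, as long as $A[\kappa]\Delta(t)$ stays hyperbolic, locally constant, since eigenvalues cannot jump across the imaginary axis without touching it. Were it hyperbolic for all $t$, this count would be constant, contradicting the differing endpoints; hence there exists $t^\ast$ at which hyperbolicity fails, and invertibility forces the offending eigenvalue to be a nonzero purely imaginary pair $\pm i\omega_0$. This produces a $k\times k$ positive diagonal $\Delta^\ast := \Delta(t^\ast)$ with $A[\kappa]\Delta^\ast$ having purely imaginary eigenvalues.

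It remains to upgrade this to the full $n\times n$ matrix $A$, which I expect to be the main obstacle. I would embed the path into $n\times n$ positive diagonal matrices $\widehat D_\varepsilon(t)$ equal to $\Delta(t)$ on the indices $\kappa$ and equal to a small $\varepsilon>0$ on the complement, and study $A\widehat D_\varepsilon(t)$ as $\varepsilon\to 0$. Writing $A$ in block form relative to $\kappa$ and invoking the same dominant-eigenvalue/Cauchy--Binet estimate that underlies \eqref{eq:Dmain}, the spectrum of $A\widehat D_\varepsilon(t)$ splits into $k$ \emph{dominant} eigenvalues converging to those of $A[\kappa]\Delta(t)$ and $n-k$ \emph{slow} eigenvalues of size $O(\varepsilon)$ whose leading behaviour is governed by the Schur complement $A/A[\kappa]$ and is, to leading order, independent of the scaling on $\kappa$. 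For a fixed small $\varepsilon$ I would re-run the continuity argument at the full level: the slow eigenvalues keep their ($t$-independent) sign pattern across the path, so the inertia difference between the endpoints is carried entirely by the dominant block, and the enforced imaginary-axis crossing again occurs at a dominant eigenvalue bounded away from zero, i.e.\ at a nonzero purely imaginary value — regardless of whether $A$ itself is singular, since a persistent zero eigenvalue coming from $\det A=0$ merely sits among the slow eigenvalues and does not obstruct the dominant crossing.

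The genuinely delicate point is the control of these $n-k$ slow eigenvalues: their $O(\varepsilon)$ real parts must not themselves wander across the imaginary axis along the path. I would secure this by choosing $\varepsilon$ small relative to the (fixed, to leading order $t$-independent) spectral gap of $A/A[\kappa]$, handling the degenerate case in which $A/A[\kappa]$ has imaginary-axis eigenvalues by an arbitrarily small further perturbation of the complementary diagonal entries. Notably, no transversality or simplicity of the crossing eigenvalue is required — only connectedness, continuity of the spectrum, and the invertibility built into Definition~\ref{def:dhopf} — which is precisely what makes the conclusion robust enough to feed into Theorem~\ref{thm:mainmain}.
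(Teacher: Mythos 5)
Your proposal is correct and follows essentially the same route as the paper's own proof: a path of positive diagonal matrices from $D_1$ to $D_2$ on which the inertia change forces a loss of hyperbolicity, with invertibility of $A[\kappa]\Delta(t)$ ruling out a zero crossing, followed by the $\varepsilon$-embedding of the path into the full $n\times n$ diagonal cone and a continuity argument to lift the crossing to $A$. Your treatment of the lifting step is in fact more detailed than the paper's one-line continuity appeal (the Schur-complement control of the $n-k$ slow eigenvalues is a careful elaboration of what the paper leaves implicit), but it is the same argument, not a different one.
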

\proof Let $A[\kappa]$ be the invertible $k$-principal submatrix of $A$ such that \eqref{eq:changeofinertia} holds, i.e., there exists two positive $k\times k$ diagonal matrices
$D_1=\operatorname{diag}(\mathbf{d}_1)$,
$D_2=\operatorname{diag}(\mathbf{d}_2)$, with $\mathbf{d}_1,\mathbf{d}_2\in
\mathbb{R}^k_{>0}$, such that 
$$\operatorname{inertia} A[\kappa]D_1\neq \operatorname{inertia} A[\kappa]D_2$$
Consequently, we can consider any path
$\gamma(\mu):[0,1]\mapsto\mathbb{R}^k_{>0}$ with $\gamma(0)=\mathbf{d}_1$ and
$\gamma(1)=\mathbf{d}_2$, and the associated matrix eigenvalue path
$\Lambda(\mu)$ in $\mathbb{R}^k$ defined
as $$\Lambda(\mu) \coloneqq \operatorname{eigenvalues}(A\operatorname{diag}(\gamma(\mu))).$$
Since between $\mu=0$ and $\mu=1$, the real part of at least one of the
eigenvalues changes sign, the intermediate value theorem identifies at least
one $\mu^*$ such that $A[\kappa]D^*$ is not hyperbolic for the positive diagonal matrix $D^*\coloneqq \operatorname{diag}(\gamma(\mu^*))$. 
Since multiplication with a full-rank matrix preserves rank, and hence 
$$\operatorname{rank}A[\kappa]\operatorname{diag}(\gamma(\mu))=\operatorname{rank}A[\kappa]=k,$$
$A[\kappa]D^*$ is invertible, i.e., $A[\kappa]$ has no real zero-eigenvalues but possesses purely imaginary eigenvalues.

To lift the statement to $A$, assume without loss of generality that $\kappa=\{1,\dots,k\}$. We argue by continuity. We embed the above path $\gamma(\mu)$ to $\mathbb{R}^n$ by defining a path $\tilde{\gamma}(\mu,\varepsilon):[0,1]\times \mathbb{R}_{\ge0}\mapsto \mathbb{R}^n_>0$ as follows:
$$\tilde{\gamma}(\mu,\varepsilon)_m=\begin{cases}
    \gamma(\mu)_m   \quad\;\text{if $m=1,...,k$;}\\
        \varepsilon, \quad\quad \quad \text{otherwise.}
\end{cases}$$
Clearly, for $\varepsilon=0$ we find - as above - at least one stability change at purely-imaginary eigenvalues. Note that the imaginary part of such eigenvalues is necessarily at a positive distance from zero. Thus, by continuity of the eigenvalues of a matrix with respect to the entries, the statement holds also for a $\bar{\varepsilon}>0$ small enough, but positive. Therefore, we have proved the existence of a $\bar{\mu}^*$ such that for $\bar{D}^*:=\operatorname{diag}(\tilde{\gamma}(\bar{\mu}^*,\bar{\varepsilon}))$, $A\bar{D}^*$ has purely imaginary eigenvalues. \endproof

In general, proving that a matrix $A$ is $D$-Hopf is computationally expensive. A standard approach would involve identifying a hyperbolic principal submatrix of $A$ that is not $D$-hyperbolic using a Routh--Hurwitz computation, which becomes unfeasible for matrices beyond small sizes.
This approach also has the drawback of obscuring any structural intuition about \emph{why} a matrix is $D$-Hopf. To establish manageable and insightful sufficient conditions for a matrix to be $D$-Hopf, we leverage the concept of $P^-$ matrices.

\begin{definition}[$P^-$ and $P^-_0$ matrices]\label{def:Pmatrix}
A matrix $A$ of size $n \times n$ is called a $P^-$ matrix if all of its $k$-principal minors have the sign $(-1)^k$. It is called a $P^-_0$ matrix if all of its \emph{nonzero} $k$-principal minors have the sign $(-1)^k$.
\end{definition}

The set of $P^-_0$ matrices is the closure of the open set of $P^-$ matrices. We now interpret two standard results to connect $P^-_0$ and $D$-Hopf matrices. 

\begin{prop}\label{prop:Pf}
If any principal submatrix $A[\kappa]$ of $A$ is Hurwitz-stable but not a $P^-_0$ matrix, then $A$ is $D$-Hopf.
\end{prop}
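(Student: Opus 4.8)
The plan is to reduce everything to the distinguished submatrix and then exhibit a diagonal scaling that flips stability. Write $B := A[\kappa]$, a $p\times p$ principal submatrix with $p=|\kappa|$. Since $B$ being $D$-Hopf would force $A$ to be $D$-Hopf by Proposition~\ref{prop:reduction}, it suffices to prove that $B$ itself is $D$-Hopf. Being Hurwitz-stable, $B$ has no eigenvalue on the imaginary axis, hence is invertible and qualifies as the invertible principal submatrix demanded by Definition~\ref{def:dhopf}, with $\operatorname{inertia}(B)=(p,0,0)$. Choosing $D_1=\operatorname{Id}$, I then only need one positive diagonal $D_2$ with $\operatorname{inertia}(BD_2)\neq(p,0,0)$; equivalently, I must show that the stable matrix $B$ fails to be $D$-stable.

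This is exactly where the hypothesis that $B$ is not $P^-_0$ enters, via the classical necessary condition for $D$-stability (see \cite{Gio15,Ku19}): a $D$-stable matrix must be $P^-_0$, so its contrapositive yields a scaling that destroys stability. I would make the argument self-contained by scaling to dominance. Since $B$ is not $P^-_0$, there is an index set $\alpha$ with $|\alpha|=m$ and $\det B[\alpha]\neq 0$ of sign $(-1)^{m+1}$. Set $(D_2)_{ii}=t$ for $i\in\alpha$ and $(D_2)_{ii}=1$ otherwise, with $t>0$ large. By multiplicativity, $\det\big((BD_2)[\beta]\big)=\det B[\beta]\prod_{i\in\beta}(D_2)_{ii}$, so the sum of all $m$-principal minors of $BD_2$ equals $\sum_{|\beta|=m}\det B[\beta]\,t^{|\beta\cap\alpha|}$. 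The unique term of top degree $t^m$ is $\det B[\alpha]\,t^m$, attained only at $\beta=\alpha$, so for $t$ large this sum inherits the sign $(-1)^{m+1}$.

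The conclusion then follows from the coefficient--minor correspondence for the characteristic polynomial. The coefficient of $\lambda^{p-m}$ in $\det(\lambda I-BD_2)$ equals $(-1)^m$ times the sum of $m$-principal minors of $BD_2$, and thus has sign $(-1)^m(-1)^{m+1}=-1$. A monic real polynomial all of whose roots lie in the open left half-plane has strictly positive coefficients, so this negative coefficient certifies that $BD_2$ is not Hurwitz-stable. As a product of invertible matrices, $BD_2$ is invertible and hence has no zero eigenvalue; being non-Hurwitz it must carry an eigenvalue of positive or purely imaginary real part, whence $\operatorname{inertia}(BD_2)\neq(p,0,0)=\operatorname{inertia}(BD_1)$. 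Thus $B$ is $D$-Hopf, and by Proposition~\ref{prop:reduction} so is $A$.

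The hard part is the middle step: turning a purely local sign defect in a single nonzero principal minor into a genuine, global loss of Hurwitz-stability of the scaled matrix. The scaling-to-dominance device resolves this, but one must verify that exactly one index set ($\beta=\alpha$) reaches the top power $t^m$, so that the dominant term's sign survives uncancelled, and that a single negative characteristic-polynomial coefficient really does preclude stability. A secondary point requiring care is the sign bookkeeping in the $P^-_0$ convention, to confirm that the dominant term produces the stability-forbidden sign $(-1)^{m+1}$ rather than the admissible $(-1)^m$.
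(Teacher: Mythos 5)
Your proof is correct, and it shares the paper's overall skeleton: reduce to $B=A[\kappa]$ via Proposition~\ref{prop:reduction}, take $D_1=\operatorname{Id}$, and exploit the nonzero principal minor of forbidden sign by a diagonal scaling that makes it dominant. In fact the deformation itself is the same one the paper uses: your $D_2$ (equal to $t$ on $\alpha$, $1$ elsewhere) is, up to the positive scalar $1/t$, the paper's $D(\varepsilon)$ (equal to $1$ on the offending index set, $\varepsilon$ elsewhere, with $\varepsilon=1/t$), and inertia is invariant under positive scalar multiples. What genuinely differs is the mechanism certifying the loss of stability. The paper argues spectrally: the sign condition forces the offending submatrix $A[\kappa']$ to have an odd number of real positive eigenvalues, hence Hurwitz-instability; in the singular limit $\varepsilon=0$ the spectrum of $A[\kappa]D(0)$ is that of $A[\kappa']$ padded with zeros, and continuity of eigenvalues propagates the instability to small $\varepsilon>0$. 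You argue algebraically: by the coefficient--minor correspondence, the $\lambda^{p-m}$ coefficient of the characteristic polynomial of $BD_2$ equals $(-1)^m\sum_{|\beta|=m}\det B[\beta]\,t^{|\beta\cap\alpha|}$, whose unique top-degree term in $t$ is the offending minor, so the coefficient is negative for large $t$; since a Hurwitz-stable monic real polynomial has strictly positive coefficients, $BD_2$ cannot be Hurwitz-stable. Your route avoids the degenerate limit and eigenvalue-perturbation step entirely, and en passant gives a self-contained proof of the classical fact that $D$-stability implies $P^-_0$, at the cost of some combinatorial bookkeeping; the paper's route is shorter because it reuses the continuity machinery already set up for Lemma~\ref{lem:Dhopfpurely}. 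One cosmetic remark: your final inference could be stated more sharply, since a monic real polynomial whose roots all lie in the \emph{closed} left half-plane still has nonnegative coefficients, so your strictly negative coefficient actually forces an eigenvalue with positive real part, i.e.\ genuine Hurwitz-instability, not merely the failure of stability you invoke — but what you invoke already suffices to separate the inertias.
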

\proof
If $A[\kappa]$ is not a $P^-_0$ matrix, then there is a principal submatrix $A[\kappa']$ of $A[\kappa]$ such that $$\operatorname{sign}\operatorname{det}A[\kappa']=(-1)^{k'-1}.$$ 
Consequently, $A[\kappa']$ has an odd number of real positive eigenvalues, making it Hurwitz-unstable. Again, without loss of generality consider $\kappa'=\{1,...,k'\}$.
In similar fashion as in Lemma~\ref{lem:Dhopfpurely} we can then choose $D(\varepsilon)$ positive diagonal matrix defined as follows
\begin{equation}
    D_{mm}(\varepsilon)=
    \begin{cases}
        1 \quad\;\text{if $m=1,...,k'$}\\
        \varepsilon \quad  \text{otherwise.}
    \end{cases}
\end{equation}
Considering $\varepsilon=0$ shows that $A[\kappa]D(0)$ is Hurwitz-unstable as $A[\kappa']$. By continuity of the eigenvalues there exists $\varepsilon>0$ small enough such that $A[\kappa]D(\varepsilon)$ is Hurwitz-unstable. As $A[\kappa]$ is Hurwitz-stable, we have then shown
$$\operatorname{inertia}A[\kappa]\operatorname{Id}\neq  \operatorname{inertia} A[\kappa]D(\varepsilon),$$
so $A$ is $D$-Hopf.
\endproof

The second result follows from a renowned theorem by Michael E. Fisher and A. T. Fuller (1958) \cite{FisherFuller58}, also discussed by Franklin M. Fisher in \cite{Fisher72simple}. An additional definition is needed to fully state the theorem

\begin{definition}[Fisher\&Fuller $P^-_{FF}$ matrices]
A $n\times n$ $P^-_0$ matrix $A$ is termed a $P^-_{FF}$ matrix, or a \emph{Fisher\&Fuller matrix}, if there exists a sequence of nested invertible principal matrices 
$$(A[\kappa_1], A[\kappa_2], \dots, A[\kappa_{n-1}], A[\kappa_{n}]),$$
with each order $|\kappa_i| = i$ for $i = 1, \dots, n$, such that $A[\kappa_{i-1}]$ is a principal submatrix of $A[\kappa_i]$.
\end{definition}

Note that a $P^-_{FF}$ matrix $A$ is itself invertible. It is straightforward to note the following inclusions:
$$\text{$P^-$ matrices} \quad \subset \quad \text{$P^-_{FF}$ matrices} \quad \subset \quad \text{$P^-_0$ matrices}.$$

A note of caution: Fisher \cite{Fisher72simple} refers to $P^-_{FF}$ matrices as \emph{Hicksian}. The term also appears in the literature \cite{Gio15} for $P^-$ matrices.

\begin{theorem}[Theorem 1' in \cite{Fisher72simple}]\label{thm:FF}
If $A$ is a $P^-_{FF}$ matrix, there exists a positive diagonal matrix $D$ such that $AD$ has all eigenvalues that are real, negative, and simple.
\end{theorem}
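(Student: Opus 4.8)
The plan is to exhibit the required diagonal matrix inside a one‑parameter family $D(t)=\operatorname{diag}(t^{n-1},t^{n-2},\dots,t,1)$ and to read off the eigenvalues of $AD(t)$ as $t\to\infty$ from the Newton polygon of its characteristic polynomial. First I would unpack the hypothesis. Since $A$ is $P^-_{FF}$ it carries a nested chain of invertible principal submatrices $A[\kappa_1]\subset\cdots\subset A[\kappa_n]$ with $|\kappa_i|=i$, and since $A$ is in particular $P^-_0$, each of these \emph{nonzero} minors has sign $(-1)^i$. Conjugating by the permutation sending $\kappa_i$ to $\{1,\dots,i\}$ leaves the spectrum unchanged and carries positive diagonal scalings to positive diagonal scalings, so without loss of generality the leading principal minors $\Delta_i=\det A[\{1,\dots,i\}]$ satisfy $\operatorname{sign}\Delta_i=(-1)^i$ for $i=1,\dots,n$.

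Next I would expand $p(\lambda,t)=\det(\lambda I-AD(t))=\sum_{k=0}^{n}(-1)^k e_k(t)\,\lambda^{\,n-k}$, where $e_k(t)$ is the sum of the $k\times k$ principal minors of $AD(t)$. A $k$-minor on an index set $S$ equals $\det A[S]\prod_{i\in S}t^{\,n-i}$, so its exponent in $t$ is $\sum_{i\in S}(n-i)$. Because the weights $n-i$ are strictly decreasing, this exponent is \emph{uniquely} maximised over all $k$-subsets by $S=\{1,\dots,k\}$, giving $e_k(t)=\Delta_k\,t^{\alpha_k}(1+o(1))$ with $\alpha_k=\sum_{i=1}^{k}(n-i)$. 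This is precisely the point where only the chain of minors, and not the full $P^-_0$ hypothesis, is needed: the leading coefficient of $e_k(t)$ is exactly $\Delta_k$, irrespective of the values of the off-chain minors.

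I would then run a Newton‑polygon analysis on $p(\lambda,t)$, plotting the monomials $\lambda^{\,n-k}t^{\alpha_k}$ as lattice points at consecutive $\lambda$-degrees $n-k$. The first differences $\alpha_k-\alpha_{k-1}=n-k$ are strictly decreasing, so every monomial is a vertex of the (upper) polygon, which therefore has $n$ edges of pairwise distinct slopes, each a single lattice segment. Each length‑one edge yields one Puiseux branch with integer exponent: balancing the terms of index $k-1$ and $k$ gives $\lambda_k(t)=(\Delta_k/\Delta_{k-1})\,t^{\,n-k}\bigl(1+O(1/t)\bigr)$. The leading coefficient $\Delta_k/\Delta_{k-1}$ has sign $(-1)^k/(-1)^{k-1}=-1$, so each branch is negative to leading order, and the $n$ branches occupy the $n$ distinct scales $t^{\,n-1},\dots,t^{0}$, hence are pairwise distinct (and therefore simple) for large $t$.

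The step I expect to be the main obstacle is upgrading these asymptotic branches into genuinely real, simple, negative eigenvalues for \emph{all} sufficiently large $t$, rather than merely to leading order. Distinctness and simplicity follow from the separated scales. For reality I would argue that $p(\lambda,t)$ has real coefficients, so any non-real eigenvalue would occur together with its conjugate at the same modulus, hence at the same scale $t^{\,n-k}$; but each scale hosts exactly one branch, since every Newton edge has lattice length one and contributes a single branch. A conjugate pair at a single scale is thus impossible, forcing every branch to be real once $t$ is large. A careful write-up must control the $o(1)$ and $O(1/t)$ remainders uniformly — for instance by applying Rouché's theorem on a disk of radius comparable to $t^{\,n-k}$ around each leading term $(\Delta_k/\Delta_{k-1})t^{\,n-k}$ — to convert the leading-order picture into honest localisation of exactly one eigenvalue per scale. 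Combining these, for $t$ large enough the choice $D=D(t)$ makes $AD$ have $n$ distinct, real, negative eigenvalues, which is the assertion of the theorem.
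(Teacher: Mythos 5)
Your proposal is correct, but it is worth saying up front that the paper itself does not prove this statement at all: it imports it verbatim as Theorem 1' of Fisher's 1972 note (the ``simple proof of the Fisher--Fuller theorem''), so any proof you give is necessarily a different route from the paper's. The classical argument behind the citation is inductive: after permuting so that the nested chain consists of the leading principal submatrices, one assumes $A[\{1,\dots,n-1\}]D_{n-1}$ already has real, negative, simple eigenvalues, sets $D=\operatorname{diag}(D_{n-1},\varepsilon)$, and uses continuity of eigenvalues plus the sign condition $\operatorname{sign}\Delta_n=(-1)^n$ to show that for $\varepsilon$ small the old eigenvalues move only slightly while the one new eigenvalue is forced to be real, negative, and close to $0$. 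Your proof collapses this induction into a single one-parameter family $D(t)=\operatorname{diag}(t^{n-1},\dots,t,1)$, which is exactly the nested-smallness scaling written all at once, and replaces the step-by-step continuity argument with a Newton-polygon analysis of $\det(\lambda I-AD(t))$: the key observation that $\sum_{i\in S}(n-i)$ is uniquely maximized among $k$-subsets by $S=\{1,\dots,k\}$ correctly isolates the chain minors $\Delta_k$ as the leading coefficients, the dominant-balance computation $\lambda_k(t)\sim(\Delta_k/\Delta_{k-1})t^{n-k}$ is right (one can check that the exponent $\alpha_j+(n-j)(n-k)$ is maximized precisely at $j\in\{k-1,k\}$), and your parity argument for reality --- one root per scale, so conjugate pairs are impossible --- is sound once the Rouch\'e localization you sketch is carried out on circles $|\lambda|=t^{n-k\pm 1/2}$, which gives exactly one root per annulus and hence accounts for all $n$ roots. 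What your approach buys is a self-contained, non-inductive proof with explicit eigenvalue asymptotics (each eigenvalue sits at its own scale $t^{n-k}$ with identified leading constant $\Delta_k/\Delta_{k-1}<0$); what it costs, as you yourself flag, is the need for uniform control of the error terms, which in the inductive proof is replaced by a soft continuity argument at each step. The one point to make explicit in a final write-up is the sign of the single real root in each annulus: either run Rouch\'e on a disk centered at $(\Delta_k/\Delta_{k-1})t^{n-k}$, or note that for real $\lambda>0$ in the annulus the two dominant terms $(-1)^{k-1}e_{k-1}\lambda^{n-k+1}$ and $(-1)^{k}e_k\lambda^{n-k}$ have the same sign and so cannot cancel, excluding positive roots.
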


Now, we can state a corollary for $D$-Hopf matrices.

\begin{cor}\label{cor:2}
If $A$ has a principal minor $A[\kappa]$ that is a Hurwitz-unstable $P^-_{FF}$ matrix. Then $A$ is $D$-Hopf.
\end{cor}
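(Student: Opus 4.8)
The plan is to verify the defining condition of a $D$-Hopf matrix (Def.~\ref{def:dhopf}) directly, taking the given submatrix $A[\kappa]$ as the required invertible principal submatrix. The entire content of the corollary is a repackaging of the Fisher--Fuller theorem (Thm.~\ref{thm:FF}), so the argument is short; the only points requiring care are the invertibility of $A[\kappa]$ and the fact that ``Hurwitz-unstable versus Hurwitz-stable'' forces distinct inertias.

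First I would record the two ingredients about $A[\kappa]$. By hypothesis $A[\kappa]$ is a $P^-_{FF}$ matrix, and as noted immediately after the definition of $P^-_{FF}$ matrices, such a matrix is itself invertible; hence $A[\kappa]$ is an invertible $k$-principal submatrix of $A$, as required by Def.~\ref{def:dhopf}. Also by hypothesis, $A[\kappa]$ is Hurwitz-unstable, so its inertia has the form $(\cdot,\sigma^+,\cdot)$ with $\sigma^+>0$.

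Next I would apply Fisher--Fuller. Since $A[\kappa]$ is a $P^-_{FF}$ matrix, Thm.~\ref{thm:FF} yields a positive diagonal matrix $D$ such that $A[\kappa]D$ has all eigenvalues real, negative, and simple; in particular $A[\kappa]D$ is Hurwitz-stable with
\begin{equation*}
\operatorname{inertia}A[\kappa]D=(k,0,0).
\end{equation*}
Now I would compare this with the scaling by the identity. Taking $D_1=\operatorname{Id}$ (a legitimate positive diagonal matrix) and $D_2=D$, we have $A[\kappa]D_1=A[\kappa]$ with $\sigma^+>0$, whereas $\operatorname{inertia}A[\kappa]D_2=(k,0,0)$ has $\sigma^+=0$. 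These inertias therefore differ, so condition \eqref{eq:changeofinertia} holds for the invertible submatrix $A[\kappa]$.

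Finally I would conclude. The previous step shows that $A[\kappa]$ already satisfies Def.~\ref{def:dhopf} with itself as the distinguished invertible principal submatrix, so $A[\kappa]$ is $D$-Hopf; by the reduction property Prop.~\ref{prop:reduction}, $A$ is $D$-Hopf as well. (Equivalently, one may observe that $A[\kappa]$ is an invertible principal submatrix of $A$ realizing \eqref{eq:changeofinertia}, so $A$ is $D$-Hopf by definition without invoking Prop.~\ref{prop:reduction}.) I do not anticipate a genuine obstacle here, since all the substance is absorbed into Thm.~\ref{thm:FF}; the only thing to watch is to cite invertibility of $P^-_{FF}$ matrices and to phrase the inertia comparison so that Hurwitz-instability of $A[\kappa]$ is contrasted against the fully Hurwitz-stable scaling $A[\kappa]D$.
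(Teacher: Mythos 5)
Your proof is correct and takes essentially the same route as the paper's: compare $A[\kappa]\operatorname{Id}$, which is Hurwitz-unstable by hypothesis, with the scaling $A[\kappa]D$ supplied by Thm.~\ref{thm:FF}, whose eigenvalues are all real, negative, and simple, so the two inertias differ and Def.~\ref{def:dhopf} is satisfied. If anything, you are slightly more careful than the paper, since you explicitly record that a $P^-_{FF}$ matrix is invertible (as required for the principal submatrix in Def.~\ref{def:dhopf}), a point the paper's proof leaves implicit.
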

\proof
Since $A[\kappa]$ is Hurwitz-unstable, $A[\kappa]\operatorname{Id}$ has an eigenvalue with positive real part. Since $A[\kappa]$ is a $P^-_{FF}$ matrix, Thm.~\ref{thm:FF} implies the existence of a positive diagonal matrix  $D_2$ such that $A[\kappa]D_2$ has all eigenvalues real, negative, and simple. Thus $A$ is $D$-Hopf.
\endproof

A variation on Thm.~\ref{thm:FF} and requires only the Hurwitz-stability of one of the principal submatrices of $A[\kappa]$.

\begin{prop}\label{prop:variation}
Let $A$ be a $n\times n$ matrix such that
\begin{equation}
\operatorname{sign}\operatorname{det}(A)=(-1)^n,
\end{equation}
and further assume that $A$ has a Hurwitz-stable $(n-1)\times(n-1)$ principal submatrix. Then there exists a positive diagonal matrix $D$ such that $AD$ is Hurwitz-stable.
\end{prop}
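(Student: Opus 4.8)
The plan is to perturb only the last diagonal entry of $D$ and track the single eigenvalue that emerges from the origin, using the determinant to control its sign. A simultaneous permutation of rows and columns conjugates $A$ by a permutation matrix and carries positive diagonal matrices to positive diagonal matrices, so without loss of generality the Hurwitz-stable $(n-1)\times(n-1)$ principal submatrix is the leading block, and I would write
$$A = \begin{pmatrix} B & c \\ r & a \end{pmatrix}, \qquad B = A[\{1,\dots,n-1\}] \ \text{Hurwitz-stable}.$$
The key sign observation is that a Hurwitz-stable $m\times m$ matrix has $\operatorname{sign}\det = (-1)^m$: its real eigenvalues are negative, while complex eigenvalues occur in conjugate pairs contributing a positive factor $|\lambda|^2$. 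Hence $\operatorname{sign}\det B = (-1)^{n-1}$, whereas by hypothesis $\operatorname{sign}\det A = (-1)^n$; in particular both are nonzero and of opposite sign, so $\det(A)/\det(B) < 0$.

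Next I would set $D = D(d) = \operatorname{diag}(1,\dots,1,d)$ with $d>0$ and study the spectrum of $AD(d)$ as $d\to 0^+$. At $d=0$ the last column of $AD(0)$ vanishes, and expanding the determinant gives the characteristic polynomial $\lambda\,\det(\lambda I - B)$. Thus $AD(0)$ carries the $n-1$ Hurwitz-stable eigenvalues of $B$ together with a single simple eigenvalue at $0$ (simple because $0$ is not an eigenvalue of $B$). By continuity of the spectrum in the matrix entries, for small $d>0$ the $n-1$ eigenvalues near those of $B$ remain in the open left half-plane, while exactly one eigenvalue $\lambda(d)$ stays near the origin; being the unique root near $0$ of a polynomial with real coefficients, $\lambda(d)$ is real for small $d$.

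Finally, I would pin down the sign of $\lambda(d)$ via the product-of-eigenvalues identity $\prod_i \lambda_i(d) = \det(AD(d)) = d\,\det(A)$. Since the product of the other $n-1$ eigenvalues tends to $\det(B)\neq 0$, this yields
$$\lambda(d) = \frac{d\,\det(A)}{\det(B)}\,(1+o(1)),$$
which is strictly negative for $d>0$ sufficiently small, because $\det(A)/\det(B)<0$. Consequently all $n$ eigenvalues of $AD(d)$ lie in the open left half-plane for small $d>0$, so $AD$ is Hurwitz-stable, as claimed.

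The main (and essentially only) obstacle is the bookkeeping of the perturbation: showing that exactly one eigenvalue leaves the origin, that it is real, and that its leading-order behavior is governed by $d\,\det(A)/\det(B)$. The determinant identity sidesteps any eigenvector expansion or explicit use of the implicit function theorem, reducing the argument to continuity of the spectrum together with the opposite-sign relation between $\det A$ and $\det B$ forced by Hurwitz-stability of $B$ and the hypothesis on $\operatorname{sign}\det A$.
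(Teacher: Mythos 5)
Your proof is correct and takes essentially the same route as the paper's: both use the perturbation $D(\varepsilon)=\operatorname{diag}(1,\dots,1,\varepsilon)$, continuity of the spectrum to keep $n-1$ eigenvalues (those of the stable block) in the open left half-plane, and the sign of $\det(AD(\varepsilon))$ relative to the sign $(-1)^{n-1}$ of the determinant of the Hurwitz-stable block to force the remaining eigenvalue to be negative. Your write-up is merely a bit more explicit than the paper's on two points it leaves implicit — that the eigenvalue emerging from the origin is real, and its leading-order behavior $\lambda(d)\sim d\,\det(A)/\det(B)$.
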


\proof
Let $\tilde{A}$ be the Hurwitx-stable $(n-1)\times(n-1)$ principal submatrix of $A$. Without loss of generality, assume that $\tilde{A}$ is a leading principal submatrix of $A$. Consider the positive diagonal  matrix $D(\varepsilon)$ defined as follows:
\begin{equation}
\begin{cases}
    D_{ii}=1\quad \text{for $i=1,...,(n-1)$};\\
    D_{nn}=\varepsilon.
\end{cases}
\end{equation}
Clearly, for the limit case $\varepsilon=0,$ the product $AD(0)$ has $(n-1)$ eigenvalues with negative real part and one real zero eigenvalue. By continuity of the eigenvalues with respect to the entries, for $\varepsilon$ small enough $AD(\varepsilon)$ has still $(n-1)$ eigenvalues with negative real part. Due to the fact that
\begin{equation}
\operatorname{sign}\operatorname{det}AD(\varepsilon)=\operatorname{sign}\operatorname{det}A=(-1)^n
\end{equation}
for any $\varepsilon$, the last eigenvalue cannot be real-positive, so it must either real-negative or a complex eigenvalue conjugate to one eigenvalue with negative-real part. In both cases, $AD(\varepsilon)$ is Hurwitz-stable.
\endproof

\begin{cor}\label{cor:3}
If $A$ has a $k\times k$ principal submatrix $A[\kappa]$ that is an invertible Hurwitz-unstable $P^-_0$ matrix with a $(k-1)\times(k-1)$ Hurwitz-stable principal submatrix. Then $A$ is $D$-Hopf.
\end{cor}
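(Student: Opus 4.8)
The plan is to mirror the proof of Corollary~\ref{cor:2}, but with Proposition~\ref{prop:variation} playing the role that Theorem~\ref{thm:FF} played there. Concretely, I would exhibit two positive diagonal scalings of the submatrix $A[\kappa]$ whose products have different inertia, and then read off the $D$-Hopf property directly from Definition~\ref{def:dhopf}.

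First I would pin down the sign of $\det A[\kappa]$. Since $A[\kappa]$ is assumed invertible, its top $k$-principal minor, namely $\det A[\kappa]$, is nonzero; and since $A[\kappa]$ is a $P^-_0$ matrix, every \emph{nonzero} $k$-principal minor carries the sign $(-1)^k$. Hence $\operatorname{sign}\det A[\kappa]=(-1)^k$. This is the only place the $P^-_0$ hypothesis is consumed, and it is the single genuine bookkeeping step in the argument.

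Next I would apply Proposition~\ref{prop:variation} to $A[\kappa]$ with $n=k$. Both of its hypotheses are now in place: the determinant-sign condition $\operatorname{sign}\det A[\kappa]=(-1)^k$ from the previous step, and the existence of a Hurwitz-stable $(k-1)\times(k-1)$ principal submatrix by assumption. The proposition then produces a positive diagonal matrix $D$ with $A[\kappa]D$ Hurwitz-stable, i.e.\ $\operatorname{inertia}(A[\kappa]D)=(k,0,0)$.

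Finally I would contrast this with the unscaled matrix. Taking $D_1=\operatorname{Id}$ gives $A[\kappa]D_1=A[\kappa]$, which is Hurwitz-unstable by hypothesis, so $\sigma^+_{A[\kappa]D_1}>0$; taking $D_2=D$ gives the Hurwitz-stable product above. Thus $\operatorname{inertia}(A[\kappa]D_1)\neq\operatorname{inertia}(A[\kappa]D_2)$ while $A[\kappa]$ is invertible, which is exactly Definition~\ref{def:dhopf} for $A$ with the invertible principal submatrix taken to be $A[\kappa]$ itself. (Equivalently, this shows $A[\kappa]$ is $D$-Hopf, and Proposition~\ref{prop:reduction} lifts the property to $A$.) I do not expect any real obstacle: the entire content is carried by Proposition~\ref{prop:variation}, with the remaining work being only the determinant-sign reading and the observation that $D_1=\operatorname{Id}$ already realizes the unstable inertia.
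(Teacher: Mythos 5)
Your proposal is correct and follows essentially the same route as the paper: the paper's proof of Corollary~\ref{cor:3} is precisely the argument of Corollary~\ref{cor:2} with Proposition~\ref{prop:variation} substituted for Theorem~\ref{thm:FF}, which is exactly what you carry out. Your explicit bookkeeping (reading $\operatorname{sign}\det A[\kappa]=(-1)^k$ from invertibility plus the $P^-_0$ property, then contrasting $D_1=\operatorname{Id}$ with the stabilizing $D$ from Proposition~\ref{prop:variation}) fills in the details the paper leaves implicit.
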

\proof
The proof follows analogously as in the proof of Cor.~\ref{cor:2} from Prop.~\ref{prop:variation}.
\endproof

\section{Back to networks: Child-Selections}\label{sec:CS}
\addtocontents{toc}{\protect\setcounter{tocdepth}{-1}}

We first recall concepts from \cite{VasHunt} and \cite{VasStad23}. 

\begin{definition}[Child-Selections]
  Let $\mathbb{S}$ be the stoichiometric matrix of the reaction network
  $\pmb{\Gamma}=(M,E)$. 
  A \emph{$k$-Child-Selection triple}, or $k$-CS for short, is a triple
  $\pmb{\kappa}=(\kappa,E_{\kappa},J)$ such that $|\kappa|=|E_{\kappa}|=k$,
  $\kappa\subseteq M$, $E_{\kappa}\subseteq E$, and $J:\kappa\to
  E_{\kappa}$ is a bijection satisfying $s_m^{J(m)}>0$ for all
  $m\in\kappa$. We call $J$ a \emph{Child-Selection bijection}.
\end{definition}

A Child-Selection $\pmb{\kappa}$ in particular defines a 
square matrix  $\mathbb{S}[\pmb{\kappa}]$ with entries
\begin{equation}
  \mathbb{S}[\pmb{\kappa}]_{ml} \coloneqq \mathbb{S}[\kappa,E_{\kappa}]_{m,J(l)} =
  \tilde s_{m}^{J(l)} - s_{m}^{J(l)},
\end{equation}
where the permutation of the columns of $\mathbb{S}[\pmb{\kappa}]$ from
$\mathbb{S}[\kappa,E_\kappa]$ is described by the Child-Selection bijection $J$. We refer to such a matrix as a \emph{Child-Selection matrix} (CS-matrix).

Nondegenerate networks, defined in Def.~\ref{def:nondegnet}, admits a direct characterization in terms of CS-matrices.
\begin{lemma}\label{lem:nondegcs}
A network $\mathbf{\Gamma}$ with stoichiometric matrix $\mathbb{S}$ such that $\operatorname{dim}\operatorname{ker}\mathbb{S}^T=n\ge0$ is nondegenerate if and only if there exists an invertible $(|M|-n)\times (|M|-n)$ CS-matrix.
\end{lemma}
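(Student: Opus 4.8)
The plan is to prove both implications at once by deriving a single combinatorial expansion of the relevant principal minor of the symbolic Jacobian. Write $k := |M|-n$. For any index set $\kappa \subseteq M$ with $|\kappa|=k$, the principal submatrix $G[\kappa]$ of $G=\mathbb{S}R$ has entries $G[\kappa]_{mm'} = \sum_{j\in E}\mathbb{S}_{mj}R_{jm'}$ for $m,m'\in\kappa$, so $\operatorname{det}G[\kappa]$ is a polynomial in the reactivity entries $R_{jm}$. The first and central step is to establish that this polynomial equals
\[
\operatorname{det}G[\kappa] \;=\; \sum_{\pmb{\kappa}=(\kappa,E_\kappa,J)} \operatorname{det}\mathbb{S}[\pmb{\kappa}]\;\prod_{m\in\kappa} R_{J(m)\,m},
\]
where the sum runs over all Child-Selection triples whose species set is the fixed $\kappa$ (equivalently, over all Child-Selection bijections $J$ defined on $\kappa$, which determine $E_\kappa=J(\kappa)$).

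To establish this identity I would expand $\operatorname{det}G[\kappa]$ by the Leibniz formula — this is the Cauchy--Binet computation already invoked after Eq.~\eqref{eq:Dmain}. Collecting the product $\prod_{m}\big(\sum_j \mathbb{S}_{mj}R_{j\sigma(m)}\big)$ over all permutations $\sigma$ of $\kappa$ produces monomials $\prod_{m}R_{j_m\,\sigma(m)}$. Because $\sigma$ is a bijection the column-indices $\sigma(m)$ are pairwise distinct, so every such monomial is squarefree and can be written uniquely as $\prod_{c\in\kappa}R_{\phi(c)\,c}$ for a map $\phi:\kappa\to E$. Grouping terms by $\phi$ and resumming over $\sigma$, I would identify the coefficient of $\prod_c R_{\phi(c)\,c}$ as the determinant of the $k\times k$ matrix whose $c$-th column is the $\phi(c)$-th column of $\mathbb{S}$ restricted to the rows in $\kappa$. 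Two reductions then close the identity: if $\phi$ is not injective this matrix has a repeated column and the coefficient vanishes; and the variable $R_{\phi(c)\,c}$ is identically zero unless $s_c^{\phi(c)}>0$. Hence only injective maps $\phi$ with $s_c^{\phi(c)}>0$ for all $c$ survive, and these are exactly the Child-Selection bijections $J=\phi$, for which the surviving matrix is precisely $\mathbb{S}[\pmb{\kappa}]$ and the coefficient is $\operatorname{det}\mathbb{S}[\pmb{\kappa}]$.

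With the expansion in hand, the equivalence reduces to a non-vanishing argument. By Definition~\ref{def:nondegnet} the network is nondegenerate iff there is a choice of $R$ and a set $\kappa$ with $|\kappa|=k$ making $\operatorname{det}G[\kappa]\neq0$. Since the admissible reactivities $\{R_{jm}: s_m^j>0\}$ range over a nonempty open set (a positive orthant) while the remaining entries are fixed to $0$, the polynomial $\operatorname{det}G[\kappa]$ attains a nonzero value on this set iff it is not the zero polynomial iff at least one of its monomials has a nonzero coefficient. By the displayed identity this occurs for some $\kappa$ of size $k$ iff there exists a Child-Selection triple $\pmb{\kappa}$ with $|\kappa|=k$ and $\operatorname{det}\mathbb{S}[\pmb{\kappa}]\neq0$, i.e.\ an invertible $(|M|-n)\times(|M|-n)$ CS-matrix. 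This delivers both directions simultaneously.

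I expect the bookkeeping in the determinant expansion to be the only delicate point: one must verify that every monomial is squarefree (so that it genuinely records $k$ distinct reactivities, hence a candidate child selection) and that the two degeneracies — a non-injective $\phi$ and an index with $s_c^{\phi(c)}=0$ — contribute nothing to the coefficient. Once this coefficient computation is pinned down, the passage from ``some coefficient is nonzero'' to ``some reactivity matrix realizes a nonzero minor'' is the standard fact that a nonzero polynomial cannot vanish identically on a nonempty open set, and the remainder follows directly from the definitions of nondegeneracy and of CS-matrices.
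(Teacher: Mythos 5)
Your proof is correct and takes essentially the same route as the paper's: the paper also reduces the lemma to the Child-Selection expansion of the symbolic Jacobian --- its Eq.~\eqref{eq:expcs}, quoted from Section~4 of \cite{VasStad23} --- followed by the same observation that a multilinear polynomial in the positive reactivities vanishes identically if and only if every monomial coefficient, i.e.\ every $(|M|-n)\times(|M|-n)$ CS-matrix determinant, vanishes. The only difference is that you derive the expansion minor-by-minor via the Leibniz formula (with the squarefree/injectivity bookkeeping made explicit), whereas the paper cites the expansion and applies it to the characteristic-polynomial coefficient $c_{|M|-n}$, i.e.\ to the sum of all $(|M|-n)$-principal minors at once.
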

\proof
The proof follows from the fact that each coefficient $c_k$ of the characteristic polynomial $g(\lambda)$ of the symbolic Jacobian $G=\mathbb{S}R$, 
\begin{equation}
   g(\lambda):=\operatorname{det}(\mathbb{S}R-\lambda \operatorname{Id})=\sum_{k=0}^{|M|}(-1)^kc_k\lambda^{|M|-k}
\end{equation}
can be expanded along Child Selections:
\begin{equation}\label{eq:expcs}
c_k=\sum_{\pmb{\kappa}}\operatorname{det}\mathbb{S}[\pmb{\kappa}]\prod_{\mathsf{m}\in\kappa}R_{J(m)m},
\end{equation}
where the sum runs on all $k$-CS triples $\pmb{\kappa}=(\kappa, E_\kappa, J)$. See Section 4 in \cite{VasStad23} for a detailed analysis of expansion \eqref{eq:expcs}. As $c_{|M|-n}$ can be seen as a multilinear homogenous polynomial of order $|M|-n$ in the positive variables $R_{jm}>0$, it follows that $c_{|M|-n}\equiv0$ if and only if all coefficients of the monomials, i.e. all $(|M|-n)\times (|M|-n)$ CS-matrices are singular, which proves the lemma.
\endproof

Furthermore, Child-Selections and CS-matrices provide simple sufficient conditions for a network to admit stability or instability, as the next two Propositions state, whose proofs can be found in the references.
\begin{prop}[Corollary 5.11 of \cite{VasHunt}]\label{prop:stability}
Consider a network $\mathbf{\Gamma}=(M,E)$ with stoichiometric matrix $\mathbb{S}$ such that $\operatorname{dim}\operatorname{ker}\mathbb{S}^T=n\ge0$. Assume there exists a $(|M|-n)$-CS $\pmb{\kappa}$ such that its associated $(|M|-n)\times(|M|-n)$ CS-matrix is Hurwitz-stable. Then the network admits stability.
\end{prop}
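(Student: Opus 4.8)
By parameter-richness (Def.~\ref{def:parameterrich}) it suffices to produce a single reactivity matrix $R$ for which the symbolic Jacobian $G=\mathbb{S}R$ is Hurwitz-stable within a stoichiometric compatibility class, that is, for which its $k:=|M|-n$ nontrivial eigenvalues all have negative real part; any such $R$ can then be realized at a positive steady-state $\bar{\mathbf{x}}>0$, which is exactly stability in the sense of Def.~\ref{def:admitinstabil}. The plan is to choose $R$ from a one-parameter family that makes the distinguished Child-Selection $\pmb{\kappa}$ dominate the spectrum: set the $k$ ``child'' entries $R_{J(m)m}=1$ for $m\in\kappa$ and every remaining admissible entry equal to $\varepsilon>0$, and then let $\varepsilon\to0^{+}$.

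The core of the argument is a perturbative analysis of the Child-Selection expansion \eqref{eq:expcs} of the coefficients $c_i$ of the characteristic polynomial of $G$. Since $\operatorname{rank}G\le\operatorname{rank}\mathbb{S}=k$, every $i$-CS with $i>k$ has a singular CS-matrix, so $c_i\equiv0$ for $i>k$: the characteristic polynomial factors as $\lambda^{n}h(\lambda)$ with $\deg h=k$ and unit leading coefficient, and the roots of $h$ are precisely the nontrivial eigenvalues. I would then read off the $\varepsilon\to0$ limit of each $c_i$ with $0\le i\le k$: in the monomial $\prod_{m\in\kappa'}R_{J'(m)m}$ attached to an $i$-CS $\pmb{\kappa}'=(\kappa',E_{\kappa'},J')$, a factor equals $1$ exactly when it is a child entry of $\pmb{\kappa}$ (i.e.\ $m\in\kappa$ and $J'(m)=J(m)$) and otherwise carries a positive power of $\varepsilon$. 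Hence the only monomials surviving at order $\varepsilon^{0}$ come from the sub-selections with $\kappa'\subseteq\kappa$ and $J'=J|_{\kappa'}$, whose CS-matrices are exactly the $i\times i$ principal submatrices of $\mathbb{S}[\pmb{\kappa}]$. Consequently $c_i(\varepsilon)$ tends to the sum of all $i\times i$ principal minors of $\mathbb{S}[\pmb{\kappa}]$, and $h$ converges coefficientwise to the characteristic polynomial of $\mathbb{S}[\pmb{\kappa}]$.

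The conclusion then follows by continuity. By hypothesis $\mathbb{S}[\pmb{\kappa}]$ is Hurwitz-stable, so its characteristic polynomial has all roots in the open left half-plane; as the leading coefficient of $h$ is fixed at $1$ and Hurwitz-stability is an open condition on the lower coefficients (roots depend continuously on coefficients), $h$ is Hurwitz-stable for all sufficiently small $\varepsilon>0$. Moreover the constant term of $h$ tends to $\pm\det\mathbb{S}[\pmb{\kappa}]\neq0$, so for small $\varepsilon$ the matrix $G$ has rank exactly $k$ and the zero eigenvalue has multiplicity exactly $n$; these are precisely the trivial zeros forced by the $n$ conservation laws, while the remaining $k$ eigenvalues lie in the open left half-plane. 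Thus $G=\mathbb{S}R$ is Hurwitz-stable on each stoichiometric compatibility class and the network admits stability.

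I expect the main obstacle to be the perturbative bookkeeping in the middle step: one must verify that, among all $i$-CS triples, only the sub-selections of $\pmb{\kappa}$ survive at order $\varepsilon^{0}$ and that they reproduce the principal minors of $\mathbb{S}[\pmb{\kappa}]$ with the correct signs and without accidental cancellations. The ancillary points --- that the degree of $h$ does not drop (guaranteed by the fixed unit leading coefficient) and that the zero eigenvalue keeps multiplicity exactly $n$ (guaranteed by $\det\mathbb{S}[\pmb{\kappa}]\neq0$) --- are what justify identifying $h$ with the genuine Jacobian of the reduced dynamics on the compatibility class; these are comparatively routine once the leading-order limit is in hand.
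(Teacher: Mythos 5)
Your proof is correct. The paper never proves this proposition itself --- it defers to Corollary 5.11 of \cite{VasHunt} --- but your argument (set the child entries $R_{J(m)m}=1$, all other admissible reactivities to $\varepsilon$, expand the characteristic-polynomial coefficients via \eqref{eq:expcs}, and conclude by continuity of roots as $\varepsilon\to0^{+}$) is exactly the dominance/perturbation technique this paper uses for its companion results, e.g.\ around \eqref{eq:Dmain}, in Prop.~\ref{pro:csunstable}, Lemma~\ref{lem:dynreduction} and Cor.~\ref{cor:dhopfoscillations}. The bookkeeping you flag as the main obstacle does go through: a monomial of $c_i$ survives at order $\varepsilon^{0}$ precisely when $\kappa'\subseteq\kappa$ and $J'=J|_{\kappa'}$, these restrictions are in bijection with the principal submatrices of $\mathbb{S}[\pmb{\kappa}]$, so $c_i\to e_i(\mathbb{S}[\pmb{\kappa}])$ exactly (no cancellation issue), and the nonvanishing limit of $c_{|M|-n}$ pins the zero eigenvalue at multiplicity exactly $n$, as needed for stability within a compatibility class.
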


\begin{prop}[Corollary 5.1 of \cite{VasStad23}, Corollary 5.12 of \cite{VasHunt}]\label{pro:csunstable}
Consider a network $\mathbf{\Gamma}=(M,E)$. Assume there exists a $k$-CS $\pmb{\kappa}$ such that its associated $k\times k$ CS-matrix is Hurwitz-unstable. Then the network admits instability.
\end{prop}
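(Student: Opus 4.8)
The plan is to reduce the assertion to a purely algebraic statement about the symbolic Jacobian $G=\mathbb{S}R$: it suffices to exhibit a single reactivity matrix $R$ (with the support pattern forced by $\mathbb{S}$) for which $G=\mathbb{S}R$ is Hurwitz-unstable. Once such an $R$ is in hand, parameter-richness (Def.~\ref{def:parameterrich}) lets me prescribe any positive concentration vector $\bar{\mathbf{x}}>0$ and choose kinetic parameters $\bar{\mathbf{p}}$ so that simultaneously $\mathbb{S}\mathbf{r}(\bar{\mathbf{x}};\bar{\mathbf{p}})=0$ and $r'_{jm}(\bar{\mathbf{x}},\bar{\mathbf{p}})=R_{jm}$. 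Then $\bar{\mathbf{x}}$ is a positive steady state whose Jacobian equals the Hurwitz-unstable $G$, which is exactly what Def.~\ref{def:admitinstabil} demands.

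To build such an $R$ I would use a dominant-balance (singular-perturbation) scaling that forces the chosen Child-Selection $\pmb{\kappa}=(\kappa,E_\kappa,J)$ to dominate the spectrum. Concretely, set the $k$ designated reactivities $R_{J(m)m}=1$ for $m\in\kappa$ and let every other admissible reactivity equal a small parameter $\varepsilon>0$, writing $G(\varepsilon)=\mathbb{S}R(\varepsilon)$. Ordering the species so that $\kappa=\{1,\dots,k\}$ comes first, a direct inspection of $G_{mm'}=\sum_j \mathbb{S}_{mj}R_{jm'}$ shows that every column $m'\notin\kappa$ is $O(\varepsilon)$, whereas for $m'\in\kappa$ the only $O(1)$ contribution is $\mathbb{S}_{m,J(m')}$. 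Hence $G(\varepsilon)=G_0+O(\varepsilon)$ with
\begin{equation}
G_0=\begin{pmatrix}\mathbb{S}[\pmb{\kappa}] & 0\\ B & 0\end{pmatrix},
\end{equation}
whose top-left block is precisely the CS-matrix $\mathbb{S}[\pmb{\kappa}]$ and where $B$ collects the off-$\kappa$ rows.

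The key computation is then the spectrum of the limiting matrix $G_0$. Its block-triangular form yields $\operatorname{det}(G_0-\lambda\operatorname{Id})=(-\lambda)^{|M|-k}\operatorname{det}(\mathbb{S}[\pmb{\kappa}]-\lambda\operatorname{Id})$, so the eigenvalues of $G_0$ are exactly those of $\mathbb{S}[\pmb{\kappa}]$ together with $|M|-k$ trivial zeros. Since $\mathbb{S}[\pmb{\kappa}]$ is Hurwitz-unstable by hypothesis, it carries an eigenvalue $\lambda_0$ with $\operatorname{Re}\lambda_0>0$, bounded away from the imaginary axis. By continuity of the spectrum in the matrix entries, for all sufficiently small $\varepsilon>0$ the perturbed matrix $G(\varepsilon)$ still has an eigenvalue near $\lambda_0$, hence with positive real part; thus $G(\varepsilon)$ is Hurwitz-unstable. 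Because this eigenvalue is isolated from $0$, it cannot be one of the conservation-law zeros, so the instability survives the reduction to a stoichiometric compatibility class.

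I expect the main obstacle to lie in this perturbative step: cleanly separating the $k$ \emph{dominant} eigenvalues (which converge to the spectrum of $\mathbb{S}[\pmb{\kappa}]$) from the $|M|-k$ vanishing ones, and confirming that the conclusion is insensitive to the precise manner in which the small eigenvalues collapse to zero. The block-triangular structure of $G_0$ makes this tractable—ordinary continuity of eigenvalues suffices, since the unstable eigenvalue $\lambda_0$ is isolated from $0$—but care is needed to phrase the argument for the full $|M|\times|M|$ Jacobian rather than only the principal block $G[\kappa]$. (For the stronger $D$-unstable hypothesis one would simply replace the designated reactivities $R_{J(m)m}=1$ by the diagonal entries of the witnessing matrix $\bar D$, so that the top-left block of $G_0$ becomes $\mathbb{S}[\pmb{\kappa}]\bar D$.)
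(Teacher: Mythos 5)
Your proof is correct and follows essentially the same strategy the paper relies on: the paper defers the proof of Prop.~\ref{pro:csunstable} to the cited references, but the mechanism it describes (tuning all reactivities off the Child-Selection to be small so that $\mathbb{S}[\pmb{\kappa}]$ dominates the spectrum, then invoking continuity of eigenvalues) is exactly your block-triangular limit $G_0$ argument, and it is the same rescaling the paper itself uses in the proof of Cor.~\ref{cor:dhopfoscillations}. Your handling of the conservation-law zeros and the final appeal to parameter-richness to realize the chosen $R$ at a positive steady state match the paper's framework (Def.~\ref{def:parameterrich}, Def.~\ref{def:admitinstabil}) as well.
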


In nonambiguous networks, explicit catalysts that participate in a reaction both as reactant and product are excluded. In this case, therefore, a CS-matrix $\mathbb{S}[\pmb{\kappa}]$ is always a square matrix with strictly negative
diagonal entries. Note that the converse is true in general, independently from the network being nonambiguous:
\begin{lemma}
  For any $k\times k$ submatrix $\tilde{\mathbb{S}}$ of $\mathbb{S}$ with columns reordered
  so that $\tilde{\mathbb{S}}$ has negative diagonal, there exists a $k$-CS
  $\pmb{\kappa}$ such that $\tilde{\mathbb{S}}=\mathbb{S}[\pmb{\kappa}]$.
\end{lemma}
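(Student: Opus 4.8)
The plan is to read the Child-Selection bijection directly off the column reordering that produces the negative diagonal, and then to observe that this reordering is automatically compatible with the reactant condition $s_m^{J(m)}>0$ defining a valid CS. The whole statement is the converse direction announced just above it, so the content is a careful identification of definitions rather than a substantive computation.

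First I would fix notation for the given data. The matrix $\tilde{\mathbb{S}}$ arises from $\mathbb{S}$ by selecting $k$ rows $\kappa=\{m_1,\dots,m_k\}\subseteq M$ and $k$ columns $E_\kappa\subseteq E$, followed by a column permutation. That permutation is precisely a bijection $J\colon\kappa\to E_\kappa$: it places, in the diagonal slot of row $m_l$, the column indexed by the reaction $J(m_l)$. By hypothesis the resulting diagonal is negative, i.e.\ $\mathbb{S}_{m_l,J(m_l)}<0$ for every $l$. So the candidate Child-Selection triple is simply $\pmb{\kappa}=(\kappa,E_\kappa,J)$, with $J$ read off from the reordering.

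Next I would check that $J$ really is a Child-Selection bijection, i.e.\ that $s_m^{J(m)}>0$ for all $m\in\kappa$. This is the only point with any content, but it follows at once from the sign of the net coefficient: since $\mathbb{S}_{m,J(m)}=\tilde s_m^{J(m)}-s_m^{J(m)}<0$ and $\tilde s_m^{J(m)}\ge0$, we must have $s_m^{J(m)}>\tilde s_m^{J(m)}\ge0$. Crucially, this argument invokes no nonambiguity assumption: a strictly negative entry of $\mathbb{S}$ always forces the species to be a (net-consumed) reactant of the reaction, which is exactly why the lemma holds in full generality. Hence $\pmb{\kappa}$ is a bona fide $k$-CS.

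Finally I would verify $\tilde{\mathbb{S}}=\mathbb{S}[\pmb{\kappa}]$ entrywise. By the definition of the CS-matrix, $\mathbb{S}[\pmb{\kappa}]_{m_p m_l}=\mathbb{S}_{m_p,J(m_l)}$, whereas the $(p,l)$-entry of $\tilde{\mathbb{S}}$ is the entry of $\mathbb{S}$ in row $m_p$ and the column placed in position $l$ by the reordering, which is reaction $J(m_l)$ by construction. The two coincide for all $p,l$, so the matrices are equal. I expect no genuine obstacle here: the entire proof is the bookkeeping identification of the column reordering with the CS-bijection, combined with the elementary sign observation of the previous step that makes the reactant condition automatic.
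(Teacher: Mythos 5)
Your proposal is correct and follows essentially the same route as the paper's own proof: read the Child-Selection bijection $J$ off the column reordering, and deduce the reactant condition $s_m^{J(m)}>0$ from the sign inequality $\tilde s_m^{J(m)}-s_m^{J(m)}<0$ together with $\tilde s_m^{J(m)}\ge 0$. Your explicit entrywise verification and the remark that no nonambiguity assumption is needed are just slightly more detailed renderings of what the paper states tersely.
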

\begin{proof}
Fix $\kappa=\{\mathsf{m}_1,...,\mathsf{m}_k\}$ to be the species whose index appears as rows of $\tilde{\mathbb{S}}$ and, respectively, $E_\kappa=\{j_1,...,j_k\}$ the reactions whose index appears as columns of $\tilde{\mathbb{S}}$. The order of both sets given here follows the order in $\tilde{\mathbb{S}}$. Consider the bijection $J$  given by  
$$J(\mathsf{m}_i)=j_i, \text{ for $i=1,...,k$}.$$
$J$ is a Child-Selection bijection since $s^{j_i}_{\mathsf{m}_i}\ge s^{j_i}_{\mathsf{m}_i}-\tilde{s}^{j_i}_{\mathsf{m}_i}= -\tilde{\mathbb{S}}_{ii}>0$ by assumption. Thus $\pmb{\kappa}=(\kappa,E_\kappa, J)$ is a Child-Selection triple so that $\tilde{\mathbb{S}}=\mathbb{\mathbb{S}}[\pmb{\kappa}].$
\end{proof}

Let $\pmb{\kappa}=(\kappa,E_{\kappa},J)$ be a $k$-CS. Forming
Child-Selections can be concatenated in the sense that any restriction
$\pmb{\kappa'}$ of $\pmb{\kappa}$, to subsets $\kappa'\subset \kappa$ and
$E_{\kappa'}=J(\kappa')\subset E_{\kappa}$ is itself also a $k'$-CS
$\pmb{\kappa'}=(\kappa',E_{\kappa'},J)$. At a level of matrices, the
CS-matrix associated to $\pmb{\kappa'}$ appears as a principal submatrix of
the CS-matrix associated to $\pmb{\kappa}$. Therefore, it is natural to
defines minimal matrices with respect to some property in the following
sense.

\begin{definition}[Cores]\label{def:cores}
Let $\mathbb{P}$ be a matrix property. A $\mathbb{P}$-core is a
CS-matrix $\mathbb{S}[\pmb{\kappa}]$ with property $\mathbb{P}$ that
does not have a proper principal submatrix with property $\mathbb{P}$.
\end{definition}

A similar idea was used in \cite{blokhuis20} to define \emph{autocatalytic cores}, based on a stoichiometric definition of autocatalysis. In the same contribution, for nonambiguous networks, an exhaustive classication of five types of autocatalytic cores were given, which we can summarize essentially in the following matrices:
\bea
\label{autocat_cores}
    \begin{pmatrix}
        -1 & 2\\
        1 & -1
    \end{pmatrix} \ \ \vcenter{\hbox{\includegraphics[scale=0.3]{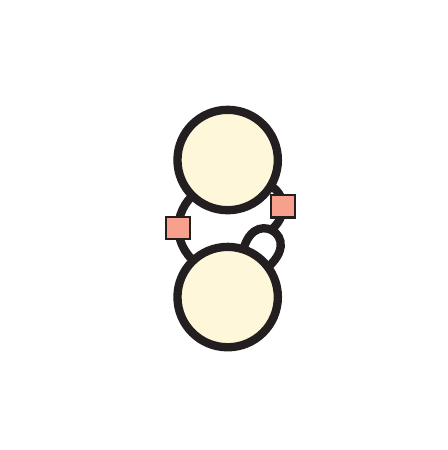}}} \quad 
    \begin{pmatrix}
        -1 & 0 & 1\\
        1 & -1 & 0 \\
        1 & 1 & -1
    \end{pmatrix} \ \ \vcenter{\hbox{\includegraphics[scale=0.3]{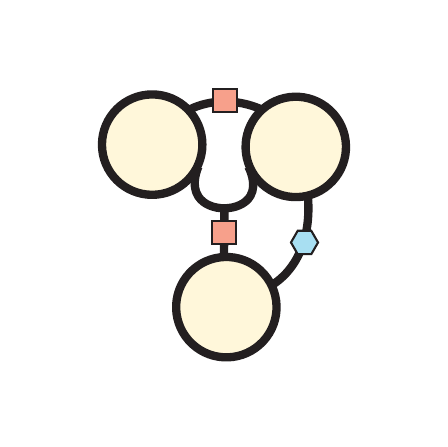}}}
    \quad
    \begin{pmatrix}
        -1 & 1 & 1\\
        1 & -1 & 0 \\
        1 & 0 & -1
    \end{pmatrix} \ \ \vcenter{\hbox{\includegraphics[scale=0.3]{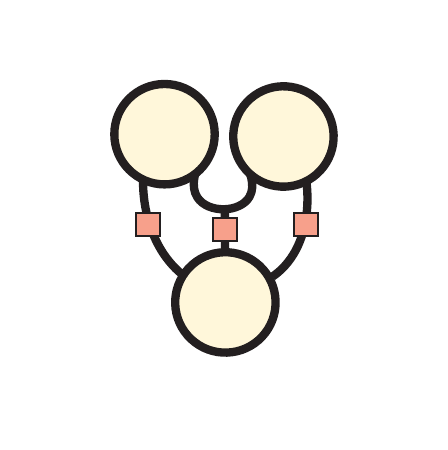}}} \nonumber \\
    \quad 
    \begin{pmatrix}
        -1 & 1 & 1\\
        1 & -1 & 0 \\
        1 & 1 & -1
    \end{pmatrix} \vcenter{\hbox{\includegraphics[scale=0.3]{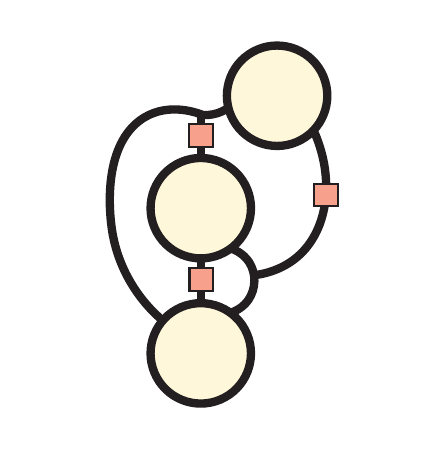}}}
    \quad
    \begin{pmatrix}
        -1 & 1 & 1\\
        1 & -1 & 1 \\
        1 & 1 & -1
    \end{pmatrix} \ \vcenter{\hbox{\includegraphics[scale=0.3]{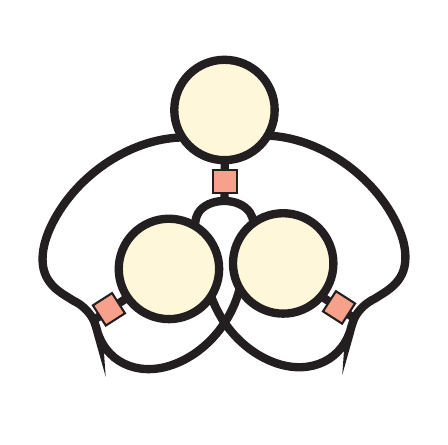}}}
\eea

In \cite{VasStad23}, the authors focused on Hurwitz-instability and in the Child-Selection language defined \emph{unstable cores} as follows.

\begin{definition}
  An unstable core is a Hurwitz-unstable Child Selection matrix that does not
  possess any Hurwitz-unstable principal submatrix.
\end{definition}

Due to Prop.~\ref{pro:csunstable}, the presence in the stoichiometry of any unstable core, no matter of which size, is sufficient for the network to admit instability. Unstable cores naturally come in two fashions, depending on the sign of the determinant.

\begin{definition}
  An unstable core $\mathbb{S}[\pmb{\kappa}]$ is called \textbf{unstable-positive
    feedback} if
$$\operatorname{det}\mathbb{S}[\pmb{\kappa}]=(-1)^{k-1}\,.$$
It is called \textbf{unstable-negative feedback} if 
$$\operatorname{det}\mathbb{S}[\pmb{\kappa}]=(-1)^{k}\,.$$
\end{definition}

Autocatalytic cores, as defined in \cite{blokhuis20}, turn out to be a special class of unstable
cores. Recall that a Metzler matrix is a square matrix with non-negative
off-diagonal entries \cite{Bullo20}. 
\begin{prop} (Thm.~7.11 of \cite{VasStad23})\label{prop:thm711}
  The autocatalytic cores of $\mathbb{S}$ \textit{sensu} \cite{blokhuis20} coincide
  with the unstable-positive feedbacks $\mathbb{S}[\pmb{\kappa}]$ that are
  Metzler matrices. 
\end{prop}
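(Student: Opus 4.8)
The plan is to route the whole statement through the spectral theory of Metzler matrices, which converts the flux-based (stoichiometric) definition of autocatalytic core from \cite{blokhuis20} into the determinantal language of unstable-positive feedbacks. Recall that, in our notation, the autocatalytic cores of \cite{blokhuis20} are the minimal CS-matrices $\mathbb{S}[\pmb{\kappa}]$ that are Metzler (negative diagonal, nonnegative off-diagonal) and \emph{productive}, i.e. admit a positive flux $v>0$ with $\mathbb{S}[\pmb{\kappa}] v > 0$ componentwise, minimality being taken with respect to restriction of the Child-Selection. The proof then splits into two reformulations: (a) for Metzler CS-matrices, productivity is equivalent to Hurwitz-instability; and (b) a Metzler unstable core necessarily satisfies $\operatorname{sign}\det \mathbb{S}[\pmb{\kappa}] = (-1)^{k-1}$, i.e. is an unstable-positive feedback, and conversely.

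First I would record the basic Metzler fact: writing $A = \mathbb{S}[\pmb{\kappa}] = cI + N$ with $N$ entrywise nonnegative for $c$ large enough, Perron--Frobenius applied to $N$ shows that the spectral abscissa of $A$ is attained by a real eigenvalue $\lambda_{\max}(A)$ carrying a nonnegative eigenvector. I would then argue that a core is \emph{irreducible}: a reducible Metzler matrix has a diagonal block realizing $\lambda_{\max}$, and that block is itself productive and unstable, contradicting minimality; irreducibility upgrades the Perron eigenvector to a strictly positive $w$. With this in hand, step (a) follows: if $A$ is Hurwitz-unstable then $\lambda_{\max}(A) > 0$ and $Aw = \lambda_{\max}(A) w > 0$, so $A$ is productive; conversely, if $Av > 0$ for some $v > 0$, testing against the positive left Perron eigenvector $u>0$ gives $\lambda_{\max}(A)\, u^\top v = u^\top (A v) > 0$, forcing $\lambda_{\max}(A) > 0$, i.e. Hurwitz-instability. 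Since proper principal submatrices of a CS-matrix are again Metzler CS-matrices (restrictions of $\pmb{\kappa}$), this equivalence descends to all submatrices, so \emph{minimal} productivity coincides exactly with \emph{minimal} Hurwitz-instability: Metzler autocatalytic cores and Metzler unstable cores are the same objects.

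It remains to pin down the determinant sign. Here I would invoke the classical characterization that a Metzler matrix is Hurwitz-stable if and only if $-A$ is a nonsingular M-matrix, equivalently $\operatorname{sign}\det A[\kappa'] = (-1)^{|\kappa'|}$ for every principal submatrix, that is, $A$ is a $P^-$ matrix in the sense of Def.~\ref{def:Pmatrix}. For a Metzler unstable core $A$ of size $k$, every proper principal submatrix is Hurwitz-stable, hence $P^-$, so all proper principal minors already carry the sign $(-1)^{|\kappa'|}$; since $A$ itself is Hurwitz-unstable it fails to be $P^-$, so the only remaining minor, the full determinant, must violate the sign, forcing $\operatorname{sign}\det A = (-1)^{k-1}$. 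This is precisely the unstable-positive feedback condition. The converse is then immediate: a Metzler unstable-positive feedback is by definition a Metzler unstable core, hence by the previous paragraph a minimal productive Metzler CS-matrix, i.e. an autocatalytic core.

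The main obstacle I anticipate is the faithful bookkeeping between the combinatorial/flux definition of \cite{blokhuis20} and the spectral one, specifically establishing irreducibility of a core and verifying that the two minimality notions (minimal productive subnetwork versus minimal unstable submatrix) line up under restriction of Child-Selections, including ruling out the degenerate possibility that a reducible productive matrix conceals a strictly smaller core. Once the Metzler--Perron--Frobenius dictionary and the M-matrix ($P^-$) stability characterization are in place, the determinant-sign step and both inclusions are short.
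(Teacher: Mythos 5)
First, a point of reference: the paper does not prove this statement at all --- it is imported verbatim as Thm.~7.11 of \cite{VasStad23} --- so your argument can only be judged on its own terms, not against an in-paper proof. Your strategy (Perron--Frobenius for Metzler matrices to identify productivity with Hurwitz-instability on irreducible cores, plus the M-matrix/$P^-$ characterization of Hurwitz-stable Metzler matrices to fix the determinant sign) is the natural route, and most of it is sound. The irreducibility-from-minimality observation is correct in both directions (a reducible matrix hands either productivity or the unstable Perron block down to a proper diagonal block), and the left/right Perron eigenvector arguments for productive $\Leftrightarrow$ unstable are fine, modulo the sign slip $A = cI + N$, which should read $A + cI = N \ge 0$. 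One step is stated too quickly: ``this equivalence descends to all submatrices'' is not literally true, since the equivalence you proved holds for \emph{irreducible} Metzler matrices; to contradict minimal productivity from a proper Hurwitz-unstable submatrix you must first pass to an irreducible unstable diagonal block of that submatrix, and only then invoke productivity. This is fixable with one sentence, but as written the minimality bookkeeping has a hole.

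The genuine gap is in the determinant step. From ``$A$ is Hurwitz-unstable, hence not $P^-$, while all proper principal minors carry the sign $(-1)^{|\kappa'|}$'' you may only conclude that $\det A$ \emph{fails} to have sign $(-1)^k$; this includes the possibility $\det A = 0$, in which case $A$ is not an unstable-positive feedback (both feedback classes in the paper's definition presuppose a nonzero determinant) and the forward inclusion collapses. Nothing in the present paper rules out singular unstable cores, so this case must be argued, not skipped. It is patchable with the ingredients you already have: write
\begin{equation*}
\det(\lambda I - A) \;=\; \sum_{j=0}^{k} (-1)^j E_j(A)\,\lambda^{k-j},
\end{equation*}
where $E_j(A)$ denotes the sum of the $j\times j$ principal minors of $A$. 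By your premise, every proper principal minor has sign $(-1)^j$ and is nonzero, so $(-1)^j E_j(A) > 0$ for $1 \le j \le k-1$. Since $A$ is Metzler and Hurwitz-unstable, its spectral abscissa $\lambda_{\max} > 0$ is a \emph{real} eigenvalue, so evaluating at $\lambda_{\max}$ gives $0 = \lambda_{\max}^k + \sum_{j=1}^{k-1} (-1)^j E_j(A)\,\lambda_{\max}^{k-j} + (-1)^k \det A$, where every displayed term except the last is strictly positive. Hence $(-1)^k \det A < 0$, which yields $\det A \neq 0$ and $\operatorname{sign}\det A = (-1)^{k-1}$ in one stroke (and makes the M-matrix citation dispensable). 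With this insertion and the irreducible-block repair above, your proof goes through.
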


Autocatalytic cores do not account for all cases of unstable-positive
feedbacks. In particular, there exists nonautocatalytic unstable-positive
feedback which have same characteristic polynomial as autocatalytic cores,
see the concept of \emph{twin-pair of matrices} in Sec.7 of
\cite{VasStad23}.  Hurwitz-instability of an unstable-positive feedback, on
its own, therefore does not differentiate between autocatalytic and
non-autocatalytic mechanisms.

Finally, we can generalize the definition of unstable cores, by exploiting
the concept of $D$-instability rather than Hurwitz-instability.
\begin{definition}
A $D$-unstable core is a $D$-unstable CS matrix $\mathbb{S}[\pmb{\kappa}]$ such that none of its principal submatrices are $D$-unstable.
\end{definition}
$D$-unstable cores are sufficient for a network to admit instability.
In \cite{VasStad23}, it is conjectured that $D$-unstable cores are also
necessary:
\begin{conjecture}[Conjecture 5.5 in \cite{VasStad23}] A network admits instability if and only if it possesses $D$-unstable cores.  
\end{conjecture}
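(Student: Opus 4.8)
The plan is to treat the two implications separately, since they rest on very different mechanisms. First I record the reduction that makes the statement well posed: by the restriction property of Child-Selections (any restriction of a CS is again a CS, appearing at the matrix level as a principal submatrix), a network possesses a $D$-unstable core in the sense of Def.~\ref{def:cores} exactly when it possesses \emph{some} $D$-unstable CS-matrix $\mathbb{S}[\pmb{\kappa}]$ — the finite nonempty family of $D$-unstable CS-matrices contained in a given one always has a minimal element. So the conjecture reduces to the equivalence ``the network admits instability $\iff$ some CS-matrix is $D$-unstable''.

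The backward implication is already within reach and I would simply transcribe the proof of Prop.~\ref{pro:csunstable} with ``$D$-unstable'' in place of ``Hurwitz-unstable''. Given a $D$-unstable CS-matrix $\mathbb{S}[\pmb{\kappa}]$, pick $\bar D=\operatorname{diag}(d_1,\dots,d_k)$ with $\mathbb{S}[\pmb{\kappa}]\bar D$ Hurwitz-unstable, set the selected reactivities $R_{J(m)m}:=d_m$, and let every off-selection reactivity tend to $0$. By the Cauchy--Binet/dominance argument behind \eqref{eq:Dmain}, the $k$ eigenvalues of $G=\mathbb{S}R$ of largest real part converge to the spectrum of $\mathbb{S}[\pmb{\kappa}]\bar D$; since one of them has positive real part, continuity of the spectrum yields a genuine parameter choice with $G$ Hurwitz-unstable.

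The forward implication is the substance of the conjecture, and I would split it according to the Routh--Hurwitz mechanism of instability. Fix $R^*$ with $G^*=\mathbb{S}R^*$ Hurwitz-unstable. In Case (a), some coefficient of the characteristic polynomial violates the sign pattern of a stable polynomial, equivalently some elementary symmetric function $E_k(G^*)=\sum_{|\kappa|=k}\det G^*[\kappa]$ fails to have sign $(-1)^k$. Then at least one principal minor $\det G^*[\kappa]$ carries sign $(-1)^{k-1}$, and expanding it along Child-Selections via \eqref{eq:expcs} — a signed sum with strictly positive coefficients $\prod R^*$ — forces some summand $\det\mathbb{S}[\pmb{\kappa}']$ to have sign $(-1)^{k'-1}$. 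A real matrix whose determinant has sign $(-1)^{k'-1}$ has an odd number of positive real eigenvalues (since $\operatorname{sign}\det=(-1)^{\#\{\text{negative real eigenvalues}\}}$ and complex eigenvalues pair up), hence is Hurwitz-unstable and a fortiori $D$-unstable. This produces the required CS-matrix, an unstable-positive-feedback core; a degenerate case $E_k(G^*)=0$ is handled by a small perturbation of $R^*$, since instability is an open condition.

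Case (b) is the hard part and, I expect, the genuine obstacle: $G^*$ satisfies the full Stodola sign pattern yet is unstable through a negative higher Hurwitz determinant, i.e.\ an oscillatory/Hopf-type instability carried by a complex conjugate pair in the open right half-plane. Here no submatrix determinant witnesses the instability, so the clean sign argument of Case (a) is unavailable, and one must localize a \emph{complex} instability to a square stoichiometric submatrix. My intended route is a cornering argument on the open cone $\mathcal{U}=\{R:\mathbb{S}R\text{ Hurwitz-unstable}\}$: push $R^*$ toward a face along which a selection of reactivities survives while the rest vanish, tracking the offending pair, and show it can be driven onto a limiting matrix $\mathbb{S}[\pmb{\kappa}]D$, which would then be $D$-unstable (indeed $D$-Hopf, via Lemma~\ref{lem:Dhopfpurely}). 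The difficulty is precisely that an arbitrary cornering can stabilize the unstable pair; proving that \emph{some} admissible face preserves positive real part, without already knowing the target CS-matrix, is the crux. I would attempt it by an induction on the excess $|E|-|M|$, peeling off one reaction at a time while maintaining an eigenvalue in the open right half-plane and using that each peeling sends $G$ to the Jacobian of a strictly smaller consistent subnetwork whose dominant spectrum is again governed by Child-Selections — but I fully expect this transversality step to be exactly where the argument stalls, which is why the statement remains conjectural.
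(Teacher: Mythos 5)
First, a point of order: this statement is not proved in the paper at all --- it is quoted verbatim as an open conjecture from \cite{VasStad23}, and the surrounding text records only the easy direction (the presence of $D$-unstable cores suffices for the network to admit instability) as established. So there is no proof in the paper to compare yours against; your proposal can only be judged on its own terms, and on those terms it is, as you yourself concede in your final paragraph, not a proof.

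What you do complete is essentially correct. The reduction from ``possesses a $D$-unstable core'' to ``possesses some $D$-unstable CS-matrix'' via minimality is fine. The backward implication is a faithful adaptation of Prop.~\ref{pro:csunstable}: you absorb the destabilizing diagonal $\bar D$ into the selected reactivities and invoke the dominance argument behind \eqref{eq:Dmain}, which is exactly the mechanism the paper relies on. Case (a) of the forward implication is also sound: the Child-Selection expansion \eqref{eq:expcs} does hold minor-by-minor (by Cauchy--Binet, $\operatorname{det}G[\kappa]$ expands over all CS-triples with species set $\kappa$, with positive coefficients), so a principal minor of $G^*$ of sign $(-1)^{k-1}$ forces some CS-matrix with determinant of that sign, which has an odd number of positive real eigenvalues, hence is Hurwitz-unstable and a fortiori $D$-unstable. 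Two gaps remain. The minor one: your perturbation disposal of the degenerate case $E_k(G^*)=0$ is unjustified --- a small perturbation of $R^*$ preserves instability but may push every $E_k$ to the \emph{stable} sign, landing you in Case (b) rather than Case (a); the degenerate case must simply be absorbed into Case (b). The major one is the one you name: when $G^*$ is unstable with all characteristic-polynomial coefficients of stable sign, the instability is witnessed only by a Hurwitz determinant, not by the sign of any principal minor, and your cornering argument has no mechanism forcing the offending complex pair to survive the passage to a face of the cone --- an arbitrary degeneration of off-selection reactivities can, and in general will, restabilize it. This is precisely why the statement remains a conjecture: the certificate of instability in Case (b) is spectral rather than combinatorial, and nothing in the parameter-rich framework currently localizes a complex-pair instability to a single CS-matrix. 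Your submission is a reasonable partial result plus an honest description of the obstruction, but the obstruction stands.
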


\section{Bridge: $D$-Hopf matrices and oscillatory cores}\label{sec:bridge}
\addtocontents{toc}{\protect\setcounter{tocdepth}{-1}}

We aim at investigating condition for a CS-matrix to be $D$-Hopf. We define
\emph{oscillatory cores} as follows.
\begin{definition}[Oscillatory Cores] \label{def:Oscillatorycore}
A $D$-Hopf CS-matrix $\mathbb{S}[\pmb{\kappa}]$ is called \emph{oscillatory core} if none of its principal submatrix is $D$-Hopf.
\end{definition}

\begin{prop}\label{prop:DHopfcores}
Any CS-matrix that is $D$-Hopf contains an invertible principal submatrix that is an oscillatory core.
\end{prop}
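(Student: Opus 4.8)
The plan is to run a minimality (descent) argument on the finite poset of principal submatrices, ordered by ``is a principal submatrix of.'' Two structural facts drive it. First, every principal submatrix of a CS-matrix is again a CS-matrix: the restriction $\pmb{\kappa}'$ of a Child-Selection triple $\pmb{\kappa}$ to a subset of species and their assigned reactions is again a Child-Selection triple, and at the matrix level this restriction is exactly a principal submatrix. Second --- and this is the observation I would isolate first --- Definition~\ref{def:dhopf} hands us, for \emph{any} $D$-Hopf matrix $A$, a distinguished invertible principal submatrix $A[\kappa']$ together with diagonals $D_1,D_2$ realizing $\operatorname{inertia}(A[\kappa']D_1)\neq\operatorname{inertia}(A[\kappa']D_2)$; but then $A[\kappa']$, viewed as a (possibly improper) invertible principal submatrix of itself, is $D$-Hopf. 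Thus every $D$-Hopf matrix --- whether or not it is invertible --- contains an invertible $D$-Hopf principal submatrix.

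With these in hand, I would argue as follows. Write $A=\mathbb{S}[\pmb{\kappa}]$ for the given $D$-Hopf CS-matrix and let $\mathcal{F}$ be the family of principal submatrices of $A$ that are simultaneously invertible and $D$-Hopf. By the second fact $\mathcal{F}$ is nonempty, and since $A$ has only finitely many principal submatrices, $\mathcal{F}$ contains an element $B$ that is minimal with respect to the principal-submatrix order. By the first fact $B$ is itself a CS-matrix, and by construction it is invertible and $D$-Hopf.

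It then remains to check that $B$ is an oscillatory core, i.e.\ that no proper principal submatrix of $B$ is $D$-Hopf. Here I would argue by contradiction: if some proper principal submatrix $C$ of $B$ were $D$-Hopf (note that $C$ need not be invertible), then applying the second fact to $C$ yields an invertible $D$-Hopf principal submatrix $C'$ of $C$. As a principal submatrix of the proper submatrix $C$, the matrix $C'$ is a proper principal submatrix of $B$ lying in $\mathcal{F}$, contradicting the minimality of $B$. Hence $B$ has no $D$-Hopf proper principal submatrix and is therefore an oscillatory core; being an invertible principal submatrix of $A$, it is the submatrix asked for.

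The step I expect to be the crux is precisely the compatibility between the property being minimized (invertible \emph{and} $D$-Hopf) and the property defining an oscillatory core (merely $D$-Hopf). A naive minimization over invertible submatrices could, a priori, overlook a non-invertible $D$-Hopf submatrix sitting strictly below $B$. The observation extracted from Definition~\ref{def:dhopf} is exactly what dissolves this gap: any such hidden non-invertible $D$-Hopf submatrix would itself contain a strictly smaller invertible $D$-Hopf submatrix, immediately violating minimality. This is the same mechanism underlying the reduction Proposition~\ref{prop:reduction}, now run downward rather than upward.
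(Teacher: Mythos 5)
Your proof is correct and takes essentially the same approach as the paper's: extract from Definition~\ref{def:dhopf} an invertible $D$-Hopf principal submatrix, observe that it is itself $D$-Hopf, and then pass to a minimal such submatrix. The paper compresses your descent argument into the phrase ``by minimality''; your explicit treatment of the crux (that a hypothetical non-invertible $D$-Hopf submatrix below the minimal element would itself contain a strictly smaller invertible $D$-Hopf one) is exactly the detail that justifies that phrase.
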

\proof
Def.~\ref{def:dhopf} of $D$-Hopf matrices requires the existence of an invertible principal submatrix $A[\kappa]$ for which \eqref{eq:changeofinertia} holds. $A[\kappa]$ is in particular a $D$-Hopf matrix. Thus, any $D$-Hopf $A$ matrix contains as a principal submatrix an invertible $D$-Hopf matrix. By minimality, we get the statement.
\endproof

We observe a direct consequence of Prop.~\ref{prop:DHopfcores}.
\begin{obs}\label{obs:invertible}
    An oscillatory core is an invertible matrix.
\end{obs}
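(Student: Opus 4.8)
The plan is to read invertibility directly off the definitions, exploiting the fact that the \emph{witness} of the $D$-Hopf property is itself an invertible $D$-Hopf matrix. Write $A:=\mathbb{S}[\pmb{\kappa}]$ for the oscillatory core. By Definition~\ref{def:Oscillatorycore}, $A$ is $D$-Hopf, so by Definition~\ref{def:dhopf} there exist an invertible principal submatrix $A[\kappa']$ and positive diagonal matrices $D_1,D_2$ with $\operatorname{inertia}A[\kappa']D_1\neq\operatorname{inertia}A[\kappa']D_2$.

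The key step is to observe that this witness $A[\kappa']$ is itself $D$-Hopf: taking $A[\kappa']$ as its own invertible principal submatrix in Definition~\ref{def:dhopf}, the very same pair $D_1,D_2$ already realizes the required change of inertia. I would then invoke the minimality clause in Definition~\ref{def:Oscillatorycore}, namely that an oscillatory core admits no proper principal submatrix that is $D$-Hopf. Since $A[\kappa']$ is a $D$-Hopf principal submatrix of $A$, the index set $\kappa'$ cannot be a proper subset of $\kappa$; hence $A[\kappa']=A$. As $A[\kappa']$ is invertible by construction, so is $A=\mathbb{S}[\pmb{\kappa}]$.

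Equivalently, one may quote Proposition~\ref{prop:DHopfcores} directly: every $D$-Hopf CS-matrix contains an invertible principal submatrix that is an oscillatory core. Applying this to the oscillatory core $\mathbb{S}[\pmb{\kappa}]$ itself, minimality rules out that this invertible oscillatory-core submatrix be proper, so it must coincide with $\mathbb{S}[\pmb{\kappa}]$, which is therefore invertible. I do not expect any genuine obstacle here: the only point requiring care is recognizing that the invertible principal submatrix furnished by Definition~\ref{def:dhopf} already qualifies as a $D$-Hopf matrix in its own right, after which minimality closes the argument immediately.
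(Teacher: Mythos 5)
Your proof is correct and follows essentially the same route as the paper: the paper derives Observation~\ref{obs:invertible} as a direct consequence of Proposition~\ref{prop:DHopfcores}, whose proof is exactly your key step that the invertible witness $A[\kappa']$ in Definition~\ref{def:dhopf} is itself $D$-Hopf, after which minimality forces $A[\kappa']=A$. Both your direct argument and your alternative phrasing via Proposition~\ref{prop:DHopfcores} match the paper's reasoning.
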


Using the concept of $D$-unstable cores, we state the following first Lemma:
\begin{lemma}\label{prop:stableDunstable}
    A $D$-unstable core $\mathbb{S}[\pmb{\kappa}]$, which is Hurwitz-stable, is an oscillatory core.
\end{lemma}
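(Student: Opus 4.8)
The plan is to verify the two defining requirements of an oscillatory core (Def.~\ref{def:Oscillatorycore}) one at a time: first that $A := \mathbb{S}[\pmb{\kappa}]$ is itself a $D$-Hopf matrix, and then that it is minimal with this property, i.e.\ no proper principal submatrix of $A$ is $D$-Hopf. Both steps will follow almost immediately from the definitions already in place, so the work is essentially a consistency check between the notions of Hurwitz-stability, $D$-instability, $D$-Hopf, and $D$-unstable core.

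First I would establish the $D$-Hopf property straight from Def.~\ref{def:dhopf}, taking the \emph{full} matrix $A$ as its own invertible principal submatrix (the index set $\kappa$ itself). Since $A$ is Hurwitz-stable, all its eigenvalues have negative real part, so $\operatorname{inertia}(A\cdot\operatorname{Id}) = (k,0,0)$ and in particular $A$ is invertible, as required by the definition. On the other hand, $A$ is $D$-unstable, so there exists a positive diagonal matrix $\bar{D}$ for which $A\bar{D}$ is Hurwitz-unstable; then $\sigma^+_{A\bar{D}}>0$ and its inertia is not $(k,0,0)$. Choosing $D_1 = \operatorname{Id}$ and $D_2 = \bar{D}$ yields $\operatorname{inertia}(A D_1) \neq \operatorname{inertia}(A D_2)$, which is precisely the $D$-Hopf condition witnessed by the invertible submatrix $A$.

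For minimality I would argue by contradiction, using the Remark stating that every $D$-Hopf matrix is $D$-unstable. Suppose some \emph{proper} principal submatrix $A[\kappa']$ were $D$-Hopf; then it would in particular be $D$-unstable. But $A=\mathbb{S}[\pmb{\kappa}]$ is a $D$-unstable core, so by definition none of its proper principal submatrices is $D$-unstable, a contradiction. Hence no proper principal submatrix of $A$ is $D$-Hopf, and $A$ is minimal with the $D$-Hopf property, which is exactly the statement that $A$ is an oscillatory core.

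The argument needs no computation, so there is no serious obstacle; the only delicate point is bookkeeping. One must be careful that the $D$-Hopf property is witnessed by the full matrix rather than by a proper submatrix (so that minimality is not accidentally violated by the very construction), and that the Remark is applied in the correct direction, namely to convert the $D$-Hopf property of a \emph{hypothetical} smaller submatrix into the $D$-instability that the core's minimality forbids. With these two observations the proof closes.
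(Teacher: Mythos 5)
Your proof is correct and follows essentially the same route as the paper's: Hurwitz-stability gives invertibility and the inertia $(k,0,0)$ under $D_1=\operatorname{Id}$, $D$-instability supplies a $D_2$ with different inertia (hence $D$-Hopf witnessed by the full matrix), and minimality follows by contradiction since a $D$-Hopf proper principal submatrix would be $D$-unstable, violating the definition of a $D$-unstable core. No gaps; the bookkeeping points you flag are exactly the ones the paper's proof also relies on.
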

\proof
$\mathbb{S}[\pmb{\kappa}]$ is Hurwitz-stable thus invertible. We moreover have that there exists a positive diagonal matrix $D$ such that $\mathbb{S}[\pmb{\kappa}]D$ is Hurwitz-unstable. Thus,
$$\operatorname{inertia}\mathbb{S}[\pmb{\kappa}]\operatorname{Id}\neq \operatorname{inertia}\mathbb{S}[\pmb{\kappa}]D,$$
and $\mathbb{S}[\pmb{\kappa}]$ is $D$-Hopf. Assume now indirectly that there exist a principal submatrix $\mathbb{S}[\pmb{\kappa}']$ of $\mathbb{S}[\pmb{\kappa}]$, which is $D$-Hopf. In particular $\mathbb{S}[\pmb{\kappa}']$ is $D$-unstable, contradicting the minimality of $\mathbb{S}[\pmb{\kappa}]$ being a $D$-unstable core.
\endproof

We next discuss two different constructions for a CS-matrix to be an
oscillatory core based respectively on unstable-positive feedbacks and
unstable-negative feedbacks.

\paragraph{Unstable-positive feedbacks.} It is straightforward to see that unstable-positive feedbacks have a single real positive eigenvalue (Lemma 6.3 in \cite{VasStad23}). Multistationarity - the coexistence of multiple steady states under identical conditions - naturally arises in parameter space due to zero-eigenvalue bifurcations, where the Jacobian loses invertibility and the implicit function theorem no longer applies. Based on this, unstable-positive feedbacks are potential sources of multistationarity, e.g. via a saddle-node bifurcation \cite{V23}. However, we will see that they can also give rise to periodic oscillations. The next lemma and corollary are central steps towards this conclusion. 

\begin{lemma}\label{lem:upfpmatrix}
Let $\mathbb{S}[\pmb{\kappa}]$ be a CS-matrix such that it contains as a principal submatrix an unstable-positive feedback. Then $\mathbb{S}[\pmb{\kappa}]$ is not a $P^-_0$ matrix.
\end{lemma}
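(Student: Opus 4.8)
The plan is to observe that the defining sign condition of an unstable-positive feedback directly contradicts the sign pattern required of a $P^-_0$ matrix, so the entire argument reduces to a one-line comparison of signs once the bookkeeping of principal submatrices is settled.

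First I would let $\mathbb{S}[\pmb{\kappa}']$ denote the $k'\times k'$ unstable-positive feedback that sits inside $\mathbb{S}[\pmb{\kappa}]$. Recalling from the discussion of restrictions that whenever $\pmb{\kappa}'$ is a restriction of $\pmb{\kappa}$ the matrix $\mathbb{S}[\pmb{\kappa}']$ genuinely appears as a \emph{principal} submatrix of $\mathbb{S}[\pmb{\kappa}]$, it follows that $\det \mathbb{S}[\pmb{\kappa}']$ is precisely one of the $k'$-principal minors of $\mathbb{S}[\pmb{\kappa}]$. This identification is the only structural input needed.

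Next I would invoke the definition of unstable-positive feedback, which gives $\operatorname{sign}\det \mathbb{S}[\pmb{\kappa}'] = (-1)^{k'-1}$; in particular this minor is nonzero. I then appeal to Definition~\ref{def:Pmatrix}: for $\mathbb{S}[\pmb{\kappa}]$ to be a $P^-_0$ matrix, every \emph{nonzero} $k'$-principal minor would have to carry the sign $(-1)^{k'}$. Since $(-1)^{k'-1} = -(-1)^{k'} \neq (-1)^{k'}$, the nonzero minor $\det \mathbb{S}[\pmb{\kappa}']$ carries the wrong sign, and therefore $\mathbb{S}[\pmb{\kappa}]$ violates the $P^-_0$ condition, which is exactly the assertion.

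There is no substantial obstacle here: the lemma is a direct consequence of matching two definitions, and I expect no computation. The only points meriting a moment's care are confirming that the offending minor is nonzero (immediate, since its sign is $\pm 1$) and that $\mathbb{S}[\pmb{\kappa}']$ occurs as a genuine principal submatrix rather than an arbitrary submatrix — both guaranteed by the Child-Selection restriction structure. The value of this sign-parity observation lies downstream: it supplies the ``not $P^-_0$'' hypothesis of Proposition~\ref{prop:Pf}, which (when paired with Hurwitz-stability established separately) converts the failure of the $P^-_0$ property into the $D$-Hopf property underlying oscillatory cores of class I.
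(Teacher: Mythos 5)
Your proof is correct and follows exactly the paper's own argument: the unstable-positive feedback, being a genuine principal submatrix, contributes a nonzero $k'$-principal minor of sign $(-1)^{k'-1}$, which directly contradicts the sign requirement in Definition~\ref{def:Pmatrix}. The paper states this in one line; your version merely spells out the same sign comparison with the principal-submatrix bookkeeping made explicit.
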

\begin{proof}
  It just follows from the definition of unstable-positive feedback, which
  is a $k\times k$ unstable core with determinant of sign $(-1)^{k-1}$,
  namely contradicting the definition of $P^-_0$ matrix.
\end{proof}

Lemma~\ref{lem:upfpmatrix} implies straightforward the corollary:
\begin{cor}\label{cor:Ocores1}
  Let $\mathbb{S}[\pmb{\kappa}]$ be a stable CS-matrix such that it contains as a
  principal submatrix an unstable-positive feedback. Then
  $\mathbb{S}[\pmb{\kappa}]$ is $D$-Hopf.
\end{cor}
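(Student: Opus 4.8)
The plan is to obtain the conclusion as an immediate consequence of the two results just established, namely Lemma~\ref{lem:upfpmatrix} and Proposition~\ref{prop:Pf}, with essentially no additional computation required. The key observation is that a matrix always counts as a (trivial, improper) principal submatrix of itself, so that Proposition~\ref{prop:Pf} can be applied with the ambient matrix $\mathbb{S}[\pmb{\kappa}]$ itself playing the role of the distinguished Hurwitz-stable, non-$P^-_0$ submatrix.

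First I would invoke Lemma~\ref{lem:upfpmatrix}: by hypothesis $\mathbb{S}[\pmb{\kappa}]$ contains an unstable-positive feedback as a principal submatrix, and the lemma then guarantees that $\mathbb{S}[\pmb{\kappa}]$ is not a $P^-_0$ matrix. Concretely, the unstable-positive feedback of size $k'$ contributes a principal minor of sign $(-1)^{k'-1}$, which violates the $P^-_0$ sign pattern. Note that since $\mathbb{S}[\pmb{\kappa}]$ is assumed Hurwitz-stable, while an unstable-positive feedback is by definition Hurwitz-unstable, this feedback is necessarily a \emph{proper} principal submatrix, so no contradiction with the stability hypothesis arises.

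Second I would apply Proposition~\ref{prop:Pf} with $A = \mathbb{S}[\pmb{\kappa}]$, selecting the principal submatrix to be the whole matrix itself. This submatrix is Hurwitz-stable by assumption — in particular invertible, since all its eigenvalues have negative real part, so the invertibility requirement in the definition of a $D$-Hopf matrix is automatically met — and it is not a $P^-_0$ matrix by the previous step. Proposition~\ref{prop:Pf} then yields directly that $\mathbb{S}[\pmb{\kappa}]$ is $D$-Hopf, which is the claim.

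Since the mathematical substance is entirely carried by Lemma~\ref{lem:upfpmatrix} (the sign obstruction preventing the $P^-_0$ property) and Proposition~\ref{prop:Pf} (the inertia change produced by driving the off-core reactivities to zero and invoking continuity of eigenvalues), there is no genuine obstacle at this stage. The only point demanding a moment of care is the logical move of treating $\mathbb{S}[\pmb{\kappa}]$ as its own principal submatrix when feeding it into Proposition~\ref{prop:Pf}; once this is granted, the corollary follows in a single line.
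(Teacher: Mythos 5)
Your proposal is correct and is exactly the paper's argument: the published proof of Cor.~\ref{cor:Ocores1} consists of the single line ``combine Prop.~\ref{prop:Pf} with Lemma~\ref{lem:upfpmatrix},'' which is precisely what you do, including the (valid) reading of Prop.~\ref{prop:Pf} with $A[\kappa]=A$ itself. Your added remarks --- that Hurwitz-stability forces the unstable-positive feedback to be a proper submatrix, and that stability gives the invertibility needed in the $D$-Hopf definition --- only make explicit what the paper leaves implicit.
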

\proof
The statement follows from combining Prop.~\ref{prop:Pf} with Lemma \ref{lem:upfpmatrix}.
\endproof

Note that since $\mathbb{S}[\pmb{\kappa}]$ in Cor.~\ref{cor:Ocores1} is Hurwitz-stable, then the unstable-positive feedback must be a \emph{strict} principal submatrix. Using Prop.~\ref{prop:DHopfcores}, from Cor.~\ref{cor:Ocores1} we define \emph{Oscillatory Cores of Class I} as follows.
\begin{definition}[Oscillatory Cores of Class I]\label{def:Ocores1} An \textbf{Oscillatory Core of Class I} is a CS-matrix that is minimal with the property of being Hurwitz-stable and possessing an unstable-positive feedback as a principal submatrix. 
\end{definition}

\paragraph{Unstable negative feedbacks.}
Complementarily to the observation on unstable-positive feedbacks, Lemma 6.3 in \cite{VasStad23} states that unstable-negative feedbacks have no real positive eigenvalues, but only complex-conjugate pairs of eigenvalues with positive real part. This implies that the instability of unstable-negative feedback is not a potential source of zero-eigenvalue bifurcations and consequent multistationarity. In contrast, unstable-negative feedbacks naturally induce periodic oscillations. We argue further in this direction.
\begin{prop}\label{prop:unfpmatrix}
  If $\mathbb{S}[\pmb{\kappa}]$ is an unstable-negative feedback then
  $\mathbb{S}[\pmb{\kappa}]$ is an Hurwitz-unstable $P^-_0$ matrix.
\end{prop}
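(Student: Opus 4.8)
The proposition bundles two claims, and the plan is to dispatch them separately. The Hurwitz-instability is free: by definition an unstable-negative feedback is an unstable core, which is Hurwitz-unstable. So the substance lies entirely in the $P^-_0$ property, and I would extract it from the \emph{minimality} encoded in the notion of core (Def.~\ref{def:cores}). The first step is to record that, since $\mathbb{S}[\pmb{\kappa}]$ is a core for the property ``Hurwitz-unstable,'' none of its proper principal submatrices is Hurwitz-unstable --- equivalently, each proper principal submatrix has no eigenvalue in the open right half-plane.

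Next I would isolate the elementary sign lemma that does all the work: if $B$ is a real $m\times m$ matrix with no eigenvalue of positive real part and with $\det B\neq 0$, then $\operatorname{sign}\det B=(-1)^m$. The proof is a one-liner on the spectrum: eigenvalues are real or come in complex-conjugate pairs; invertibility excludes $0$, the hypothesis forces the real ones to be strictly negative, and each pair $a\pm b\mathrm{i}$ with $b\neq0$ contributes $a^2+b^2>0$. Writing $m=r+2s$ for $r$ negative real eigenvalues and $s$ conjugate pairs gives sign $(-1)^r=(-1)^{m-2s}=(-1)^m$.

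I would then run this lemma over all principal minors. For each proper principal submatrix $\mathbb{S}[\pmb{\kappa}']$ of order $m<k$ with \emph{nonzero} minor, the minimality above supplies the lemma's hypothesis, yielding $\operatorname{sign}\det\mathbb{S}[\pmb{\kappa}']=(-1)^m$; and the single top-order minor, namely $\det\mathbb{S}[\pmb{\kappa}]$, has sign $(-1)^k$ directly from the definition of unstable-negative feedback. Since every nonzero principal minor then carries the sign prescribed by its order, $\mathbb{S}[\pmb{\kappa}]$ is a $P^-_0$ matrix in the sense of Def.~\ref{def:Pmatrix}, completing the proof.

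The only delicate point --- and the reason this yields $P^-_0$ rather than the stronger $P^-$ --- is that minimality of an unstable core forbids eigenvalues only in the \emph{open} right half-plane, so a proper submatrix may sit with eigenvalues exactly on the imaginary axis. I expect this to be the one spot needing care: a naive argument that proper submatrices are Hurwitz-\emph{stable} would be false. The resolution is built into the target statement, since the $P^-_0$ condition governs only nonzero minors: singular submatrices are exempt (so genuine zero eigenvalues never arise in the sign computation), while purely imaginary conjugate pairs contribute positive factors and hence do not disturb the parity count. Making this exemption explicit is the crux; the remainder is routine eigenvalue bookkeeping.
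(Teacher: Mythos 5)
Your proof is correct and follows essentially the same route as the paper: minimality of the unstable core rules out Hurwitz-unstable proper principal submatrices, the spectral sign argument then forces every nonzero $k'$-principal minor to have sign $(-1)^{k'}$, and the top minor is handled by the definition of negative feedback. The paper's own proof is terser --- it asserts the minor-sign step without spelling out your sign lemma or the caveat about purely imaginary eigenvalues in proper submatrices --- so your version simply makes explicit (and correctly handles, via the nonzero-minor exemption in $P^-_0$) what the paper leaves implicit.
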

\begin{proof}
  The definition of unstable cores implies that $\mathbb{S}[\pmb{\kappa}]$ is
  Hurwitz-unstable and does not have an Hurwitz-unstable principal submatrix. In particular the sign of all its $k'$-principal minor is $(-1)^{k'}$. Further, the sign of $\operatorname{det}\mathbb{S}[\pmb{\kappa}]$ is $(-1)^k$ as well, by definition of negative feedback. Thus $\mathbb{S}[\pmb{\kappa}]$
  is a $P^-_0$ matrix.
\end{proof}

We get then two corollaries, in analogy from Cor.~\ref{cor:2} and Cor.~\ref{cor:3}.

\begin{cor}\label{cor:Ocores2}
  Let $\mathbb{S}[\pmb{\kappa}]$ be an unstable-negative feedback and a
  $P^-_{FF}$ matrix, then $\mathbb{S}[\pmb{\kappa}]$ is $D$-Hopf.
\end{cor}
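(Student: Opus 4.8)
The plan is to derive this corollary directly by chaining together two results already established: Proposition~\ref{prop:unfpmatrix} and Corollary~\ref{cor:2}. The structure is short, since the analytical work has been front-loaded into the Fisher\&Fuller machinery (Theorem~\ref{thm:FF}) that Corollary~\ref{cor:2} rests upon; here the task is simply to verify that the hypotheses line up.

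First I would observe that Proposition~\ref{prop:unfpmatrix} tells us that any unstable-negative feedback $\mathbb{S}[\pmb{\kappa}]$ is, in particular, Hurwitz-unstable (it is a Hurwitz-unstable $P^-_0$ matrix, but I only need the Hurwitz-instability here). Combining this with the standing hypothesis that $\mathbb{S}[\pmb{\kappa}]$ is a $P^-_{FF}$ matrix, I conclude that $\mathbb{S}[\pmb{\kappa}]$ is a \emph{Hurwitz-unstable $P^-_{FF}$ matrix}. This is exactly the configuration required to invoke Corollary~\ref{cor:2}.

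Second, I would apply Corollary~\ref{cor:2} with the matrix playing the role of its own principal submatrix, i.e.\ taking $\kappa$ to be the full index set so that $A = A[\kappa] = \mathbb{S}[\pmb{\kappa}]$. Since a matrix is trivially a principal submatrix of itself, the hypothesis of Corollary~\ref{cor:2} — the existence of a Hurwitz-unstable $P^-_{FF}$ principal submatrix — is met, and we conclude that $\mathbb{S}[\pmb{\kappa}]$ is $D$-Hopf. Unpacked, the underlying mechanism is that $D=\operatorname{Id}$ already exhibits an eigenvalue with positive real part (Hurwitz-instability), while the $P^-_{FF}$ property lets Theorem~\ref{thm:FF} supply a positive diagonal $D_2$ for which $\mathbb{S}[\pmb{\kappa}]D_2$ has all eigenvalues real and negative; the two inertias therefore differ, and invertibility of $\mathbb{S}[\pmb{\kappa}]$ (automatic for a $P^-_{FF}$ matrix) validates Definition~\ref{def:dhopf}.

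There is no genuine obstacle in this argument — it is a packaging of prior results — so the only point requiring mild care is the bookkeeping of the $P^-_{FF}$ hypothesis: it is imposed as an assumption on $\mathbb{S}[\pmb{\kappa}]$ rather than derived, whereas Proposition~\ref{prop:unfpmatrix} alone only guarantees the weaker $P^-_0$ membership. Thus the corollary is genuinely conditional on the $P^-_{FF}$ refinement, and I would make explicit that this is precisely the extra hypothesis (the existence of the nested chain of invertible principal minors) that upgrades the $P^-_0$ conclusion of Proposition~\ref{prop:unfpmatrix} to a form on which Theorem~\ref{thm:FF}, and hence Corollary~\ref{cor:2}, can act.
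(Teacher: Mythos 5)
Your proof is correct and takes essentially the same route as the paper, whose entire proof of this corollary is the single line ``It follows from Cor.~\ref{cor:2}'' --- i.e.\ precisely your step of applying Corollary~\ref{cor:2} with $\mathbb{S}[\pmb{\kappa}]$ as a principal submatrix of itself. Your additional unpacking (Hurwitz-instability via Proposition~\ref{prop:unfpmatrix}, the Fisher--Fuller diagonal $D_2$ from Theorem~\ref{thm:FF}, and invertibility being automatic for a $P^-_{FF}$ matrix) is exactly what the paper leaves implicit.
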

\proof
It follows from Cor.~\ref{cor:2}.
\endproof
\begin{cor}\label{cor:Ocores2b}
    Let $
\mathbb{S}[\pmb{\kappa}]$ be a $k\times k$ unstable-negative feedback such that one of its $(k-1)\times (k-1)$ principal submatrices is Hurwitz-stable, then $\mathbb{S}[\pmb{\kappa}]$ is $D$-Hopf.
\end{cor}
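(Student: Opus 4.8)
The plan is to recognize this statement as a direct specialization of the linear-algebra Corollary~\ref{cor:3}, so that essentially no new work is required beyond checking that an unstable-negative feedback satisfies its hypotheses. Concretely, I would apply Corollary~\ref{cor:3} with the ambient matrix $A$ taken to be $\mathbb{S}[\pmb{\kappa}]$ itself and the distinguished principal submatrix $A[\kappa]$ taken to be the whole matrix $\mathbb{S}[\pmb{\kappa}]$. The conclusion of Corollary~\ref{cor:3} then yields precisely that $\mathbb{S}[\pmb{\kappa}]$ is $D$-Hopf.

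First I would verify the four hypotheses of Corollary~\ref{cor:3} for the choice $A[\kappa]=\mathbb{S}[\pmb{\kappa}]$. (i) Invertibility: by definition of an unstable-negative feedback, $\operatorname{sign}\operatorname{det}\mathbb{S}[\pmb{\kappa}]=(-1)^k\neq 0$, so the matrix is invertible. (ii) Hurwitz-instability: an unstable-negative feedback is in particular an unstable core, hence Hurwitz-unstable. (iii) The $P^-_0$ property: this is exactly the content of Proposition~\ref{prop:unfpmatrix}, which asserts that any unstable-negative feedback is a Hurwitz-unstable $P^-_0$ matrix. (iv) The presence of a Hurwitz-stable $(k-1)\times(k-1)$ principal submatrix: this is the standing hypothesis of the corollary. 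With all four conditions in place, Corollary~\ref{cor:3} applies verbatim.

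Alternatively --- and this is really the mechanism underlying Corollary~\ref{cor:3} --- I would argue directly through Proposition~\ref{prop:variation}. Since $\mathbb{S}[\pmb{\kappa}]$ has $\operatorname{sign}\operatorname{det}=(-1)^k$ and a Hurwitz-stable $(k-1)$-principal submatrix, Proposition~\ref{prop:variation} produces a positive diagonal matrix $D$ for which $\mathbb{S}[\pmb{\kappa}]D$ is Hurwitz-stable. On the other hand, taking $D=\operatorname{Id}$ leaves $\mathbb{S}[\pmb{\kappa}]$ Hurwitz-unstable. Hence the inertias of $\mathbb{S}[\pmb{\kappa}]\operatorname{Id}$ and $\mathbb{S}[\pmb{\kappa}]D$ differ, and since $\mathbb{S}[\pmb{\kappa}]$ is invertible this is exactly Definition~\ref{def:dhopf} of a $D$-Hopf matrix, with $\kappa$ taken to be the full index set.

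I do not expect a genuine obstacle here, as the statement is a corollary assembled from Proposition~\ref{prop:unfpmatrix} and the $P^-_0$ and Fisher\&Fuller-type results. The only point demanding a little care is conceptual rather than technical: one must keep track that the negative-feedback sign condition $\operatorname{sign}\operatorname{det}\mathbb{S}[\pmb{\kappa}]=(-1)^k$ is precisely what makes the determinant-sign hypothesis of Proposition~\ref{prop:variation} (equivalently, the top $P^-_0$ minor required by Corollary~\ref{cor:3}) hold. This sign guarantees that the single diagonal scaling rotating the full matrix from instability to stability never passes through a real zero eigenvalue, so the stability change is forced to occur at purely imaginary eigenvalues, as the name $D$-Hopf demands.
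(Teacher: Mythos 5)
Your proposal is correct and follows exactly the paper's route: the paper proves Cor.~\ref{cor:Ocores2b} simply by invoking Cor.~\ref{cor:3} with $A[\kappa]=\mathbb{S}[\pmb{\kappa}]$ itself, and your verification of its hypotheses (invertibility from the sign of the determinant, Hurwitz-instability from being an unstable core, the $P^-_0$ property from Prop.~\ref{prop:unfpmatrix}, and the stable $(k-1)$-principal submatrix from the standing hypothesis) is precisely what that reduction requires. Your alternative unfolding via Prop.~\ref{prop:variation} and $D_1=\operatorname{Id}$ is not a different argument either, since that is exactly how Cor.~\ref{cor:3} is proved in the paper.
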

\proof
It follows from Cor.~\ref{cor:3}.
\endproof

Using again Prop.~\ref{prop:DHopfcores}, from Cor.~\ref{cor:Ocores2} we define \emph{Oscillatory Cores of Class II} as follows.
\begin{definition}[Oscillatory Cores of Class II]\label{def:Ocores2} An \textbf{Oscillatory Core of Class II} is a $k\times k$ CS-matrix which is an unstable-negative feedback and either is a $P^-_{FF}$ matrix or it has a $(k-1)\times (k-1)$ Hurwitz-stable principal submatrix. 
\end{definition}
\begin{remark}
 The two ``either...or'' options in Def.~\ref{def:Ocores2} do not imply each other: a $P^-_{FF}$ may have principal submatrices with purely imaginary eigenvalues.
\end{remark}
\begin{remark}
Note that in Def.~\ref{def:Ocores2} the minimality is guaranteed by the fact that an Oscillatory Core of Class II is an unstable-negative feedback. 
\end{remark}

\newpage
\textbf{\Large \hypertarget{part2}{PART II}: Examples}

\section{Oscillatory Cores}\label{sec:oscillatorycores}
\addtocontents{toc}{\protect\setcounter{tocdepth}{-1}}

We now explore in detail simple patterns for oscillations.

\subsection{Recipe I, positive-feedback and autocatalysis}\label{sec:recipeI}
\addtocontents{toc}{\protect\setcounter{tocdepth}{-1}}
Most oscillators reported in the literature involve autocatalysis. Out of this reason, for Recipe \ref{recipe:1main}, based on Oscillatory Cores of class I, we enlarge the five $n\times n$ autocatalytic cores \eqref{autocat_cores} from \cite{blokhuis20} to get a $(n+1) \times (n+1)$ Hurwitz-stable matrix with negative diagonal, see Table \ref{tab:posmain} in the main text for an overview of these examples.

\textbf{This choice of examples is arbitrary and it is intended only to provide a first selection.} It is not exhaustive for several reasons: Firstly, one could start with unstable-positive feedbacks that are not autocatalytic, e.g. by relying on the concept of \emph{twin-pair}  of matrices, see Sec.~\ref{sec:bridge}. Secondly, the difference in size between the unstable-positive feedback and the oscillatory core may be more than 1. Thirdly, there are multiple ways 
to enlarge an unstable-positive feedback to a Hurwitz-stable CS-matrix. Our informal guiding principle in design has been to be as frugal as possible, i.e., we used only bimolecular reactions and we minimized the nonzero entries in the oscillatory core.
 
\paragraph{Oscillatory core (I,a).} The first oscillatory core of class I in our selection is:
\begin{equation}\label{eq:ocIa}
\begin{pmatrix}
    -1 & 2 & 1 \\
     1 & -1 & 1\\
     0 & -1 & -1
    \end{pmatrix}
,\ \vcenter{\hbox{\includegraphics[scale=0.3]{core11redraw.pdf}}} 
\end{equation}
Such oscillatory core is built on the first type of the autocatalytic cores from \cite{blokhuis20},
\begin{equation}
\begin{pmatrix}
    -1 & 2\\
    1 & -1
\end{pmatrix}, \ \vcenter{\hbox{\includegraphics[scale=0.3]{ACtypeI.pdf}}}
\end{equation}
highlighted in magenta in the figure, and enlarged to the $3\times3$ Hurwitz-stable matrix \eqref{eq:ocIa}, with eigenvalues approximately $(-2.32, -0.34 \pm 0.56i)$. Recipe \ref{recipe:1main} implies that any consistent network containing the following three reactions admits periodic oscillations for a certain choice of parameters if endowed with parameter-rich kinetics.
\begin{equation}
\begin{cases}
\ce{X} + ... &\rightarrow \quad\ce{Y} + ...\\
\ce{Y} + \ce{Z} + ... &\rightarrow \quad 2 \ce{X} + ...\\
\ce{Z}+...&\rightarrow \quad \ce{X}+\ce{Y}
\end{cases}
\end{equation}
Above, `...' indicates that the reactions may involve other species different from $\ce{X}, \ce{Y}, \ce{Z}$. Periodic oscillations appear when the autocatalytic core becomes dominant. An example of a mechanism involving such oscillatory core has been described by Semenov et al. \cite{Semenov2015}. 

\paragraph{Oscillatory core (I,b).} The second oscillatory core of class I in our selection is:
\begin{equation}\label{eq:ocIb}
\begin{pmatrix}
    -1 & 0 & 1 & 1 \\
     1 & -1 & 0 & 0\\
     1 & 1 & -1 & 0\\
     0 & 0 & -1 & -1
    \end{pmatrix}, \
\vcenter{\hbox{\includegraphics[scale=0.3]{core12.pdf}}}
\end{equation}
Such oscillatory core is built on the second type of the autocatalytic cores from \cite{blokhuis20},
\begin{equation}
\begin{pmatrix}
    -1 & 0 & 1\\
    1 & -1 & 0\\
    1 & 1 & -1
\end{pmatrix}, \ \vcenter{\hbox{\includegraphics[scale=0.3]{ACtypeII.pdf}}}
\end{equation}
highlighted in magenta in the figure, and enlarged to the $4\times4$ Hurwitz-stable matrix \eqref{eq:ocIb}, with eigenvalues approximately $(-0.13 \pm  0.5i;
  -1.8660 \pm 0.5i)
$. Recipe \ref{recipe:1main} implies that any consistent network containing the following four reactions admits periodic oscillations for a certain choice of parameters if endowed with parameter-rich kinetics.
\begin{equation}
\begin{cases}
\ce{X} + ... &\rightarrow \quad\ce{Y} + \ce{Z} + ...\\
\ce{Y}  + ... &\rightarrow \quad  \ce{Z} + ...\\
\ce{Z}+\ce{W}+...&\rightarrow \quad \ce{X}+...\\
\ce{W}+... &\rightarrow \quad \ce{X}+...
\end{cases}
\end{equation}
Above, `...' indicates that the reactions may involve other species different from $\ce{X}, \ce{Y}, \ce{Z}, \ce{W}$. Periodic oscillations appear when the autocatalytic core becomes dominant. An example of a mechanism involving such oscillatory core has been described by Semenov et al. \cite{Semenov2016}. 

\paragraph{Oscillatory core (I,c).} The third oscillatory core of class I in our selection is:
\begin{equation}\label{eq:ocIc}\begin{pmatrix}
    -1 & 1 & 1 & 2 \\
     1 & -1 & 0 & 0\\
     1 & 0 & -1 & 0\\
     0 & 0 & -1 & -1
    \end{pmatrix}, \ \vcenter{\hbox{\includegraphics[scale=0.3]{coreIcfirst_x.pdf}}}
\end{equation}
Such oscillatory core is built on the third type of the autocatalytic cores from \cite{blokhuis20},
\begin{equation}
\begin{pmatrix}
    -1 & 1 & 1\\
    1 & -1 & 0\\
    1 & 0 & -1
\end{pmatrix}, \ \ \vcenter{\hbox{\includegraphics[scale=0.3]{ACtypeIII.pdf}}}
\end{equation}
highlighted in magenta in the figure, and enlarged to the $4\times4$ Hurwitz-stable matrix \eqref{eq:ocIc}, with eigenvalues approximately $(-1, -2.77,
  -0.12 \pm 0.59i)$. Recipe \ref{recipe:1main} implies that any consistent network containing the following four reactions admits periodic oscillations for a certain choice of parameters if endowed with parameter-rich kinetics.
\begin{equation}
\begin{cases}
\ce{X} + ... &\rightarrow \quad\ce{Y} + \ce{Z} +...\\
\ce{Y}  + ... &\rightarrow \quad  \ce{X} + ...\\
\ce{Z}+\ce{W}+...&\rightarrow \quad \ce{X}+...\\
\ce{W}+... &\rightarrow \quad 2\ce{X}+...
\end{cases}
\end{equation}
Above, `...' indicates that the reactions may involve other species different from $\ce{X}, \ce{Y}, \ce{Z}, \ce{W}$. Periodic oscillations appear when the autocatalytic core becomes dominant. 

\paragraph{Oscillatory core (I,d).} The fourth oscillatory core of class I in our selection is:
\begin{equation}\label{eq:ocId}
\begin{pmatrix}
    -1 & 1 & 1 & 1 \\
     1 & -1 & 0 & 1\\
     1 & 1 & -1 & 0\\
     0 & -1 & -1 & -1
    \end{pmatrix}, \
\vcenter{\hbox{\includegraphics[scale=0.3]{core15.pdf}}}
\end{equation}
Such oscillatory core is built on the fourth type of the autocatalytic cores from \cite{blokhuis20},
\begin{equation}
\begin{pmatrix}
    -1 & 1 & 1\\
    1 & -1 & 0\\
    1 & 1 & -1
\end{pmatrix}, \  \ \vcenter{\hbox{\includegraphics[scale=0.3]{ACtypeIV.pdf}}}
\end{equation}
highlighted in magenta in the figure, and enlarged to the $4\times4$ Hurwitz-stable matrix \eqref{eq:ocId}, with eigenvalues approximately $(-2.39;-1.43;-0.09 \pm 0.93i)$. Recipe \ref{recipe:1main} implies that any consistent network containing the following four reactions admits periodic oscillations for a certain choice of parameters if endowed with parameter-rich kinetics.
\begin{equation}
\begin{cases}
\ce{X} + ... &\rightarrow \quad\ce{Y} + \ce{Z}+ ...\\
\ce{Y} + \ce{W} + ... &\rightarrow \quad  \ce{X} + \ce{Z} +...\\
\ce{Z}+\ce{W}+...&\rightarrow \quad \ce{X}+...\\
\ce{W}+... &\rightarrow \quad \ce{X}+\ce{Y}+...
\end{cases}
\end{equation}
Above, `...' indicates that the reactions may involve other species different from $\ce{X}, \ce{Y}, \ce{Z}, \ce{W}$. Periodic oscillations appear when the autocatalytic core becomes dominant.

\paragraph{Oscillatory core (I,e).} The fifth oscillatory core of class I in our selection is:
\begin{equation}\label{eq:ocIe}
\begin{pmatrix}
    -1 & 1 & 1 & 1 \\
     1 & -1 & 1 & 1\\
     1 & 1 & -1 & 0\\
     -1 & -1 & -1 & -1
    \end{pmatrix}\ \vcenter{\hbox{\includegraphics[scale=0.3]{coreIefirst_x.pdf}}}
\end{equation}
Such oscillatory core is built on the fifth type of the autocatalytic cores from \cite{blokhuis20},
\begin{equation}
\begin{pmatrix}
    -1 & 1 & 1\\
    1 & -1 & 1\\
    1 & 1 & -1
\end{pmatrix}, \ \ \ \vcenter{\hbox{\includegraphics[scale=0.3]{ACtypeV.pdf}}}
\end{equation}
highlighted in magenta in the figure, and enlarged to the $4\times4$ matrix \eqref{eq:ocIe}, with eigenvalues $(-2;-2, \pm i)$. As such matrix is not Hurwitz-stable matrix, it is not literally an Oscillatory Core of class I, following Def.~\ref{def:Ocores1}. However, with a slight abuse of notation, we retain it here for consistency with the other cores and for its relevance in identifying a nontrivial matrix with purely imaginary eigenvalues. Such stoichiometry still produces periodic orbits because \eqref{eq:ocIe} is a $D$-Hopf matrix. It is indeed invertible, and by choosing $D(\beta)=\operatorname{diag}(1,1,1,\beta)$, for $\beta>0$ we can see that  
\begin{equation}
    \begin{pmatrix}
       -1 & 1 & 1 & 1 \\
     1 & -1 & 1 & 1\\
     1 & 1 & -1 & 0\\
     -1 & -1 & -1 & -1 
    \end{pmatrix}
    \begin{pmatrix}
    1 & 0 & 0 & 0\\
     0 & 1 & 0 & 0\\
     0 & 0 & 1 & 0\\
     0 & 0 & 0 & \beta
     \end{pmatrix}
\end{equation}
is Hurwitz-stable if and only if $\beta>1$ and Hurwitz-unstable if and only if $\beta<1$. Note that $\beta$ is a parameter that tunes the dominance of the autocatalytic core. Corollary \ref{cor:dhopfoscillationsmain} implies then that any consistent network containing the following four reactions admits periodic oscillations for a certain choice of parameters if endowed with parameter-rich kinetics.
\begin{equation}
\begin{cases}
\ce{X} + \ce{W}+ ... &\rightarrow \quad\ce{Y} + \ce{Z} +...\\
\ce{Y} + \ce{W} + ... &\rightarrow \quad  \ce{X} + \ce{Z} +...\\
\ce{Z}+\ce{W}+...&\rightarrow \quad \ce{X}+\ce{Y}+...\\
\ce{W}+... &\rightarrow \quad \ce{X}+\ce{Y}+...
\end{cases}
\end{equation}
Above, `...' indicates that the reactions may involve other species different from $\ce{X}, \ce{Y}, \ce{Z}, \ce{W}$. Periodic oscillations appear when the autocatalytic core becomes dominant.

\subsection{Recipe II, negative-feedback and the principle of length}\label{sec:recipeII}
\addtocontents{toc}{\protect\setcounter{tocdepth}{0}}

For Recipe \ref{recipe:2main}, we chose five unstable-negative feedbacks whose structure  echoes the five autocatalytic cores \eqref{autocat_cores} from \cite{blokhuis20}. In particular, we swap the sign of the off-diagonal entries of the first row, which intuitively switches the sign of the feedback from positive to negative. The other entries are left unaltered. 
Again, for simplicity, we assume here that the networks are nonambiguous, and thus the stoichiometric matrix univocally characterizes the network. The crucial observation is that the Hurwitz-instability of a negative feedback may appear only upon inclusion of a sufficient number of intermediates. Let us exemplify this general principle in the first of our examples in Tab.~\ref{tab:negmain}:
\begin{equation}
\begin{pmatrix}
   -1 & -2\\
   1 & -1\\
\end{pmatrix},
\end{equation}
with stable eigenvalues $(-1\pm 1.41i)$. Such stoichiometry reflects chemical processes as
\begin{equation}
\begin{cases}
\ce{X} + ... &\rightarrow \quad \ce{Y} + ...\\
2\ce{X} + \ce{Y} + ... &\rightarrow \quad ...
\end{cases},
\end{equation}
where, again, `...' indicates any combination of species, possibly empty, in the network that do not include $\ce{X}$ and $\ce{Y}$. The instability appears only upon inclusion of a sufficient number of intermediates, e.g. in
\begin{equation}
\ce{X} \rightarrow \ce{X_1} \rightarrow \ce{X_2} \rightarrow \ce{X_3} \rightarrow \ce{X_4} \rightarrow \ce{X_5}  \rightarrow \ce{X_6} \rightarrow \ce{Y},
\end{equation}
which indeed give rise to an associated CS-matrix:
\begin{equation}\label{eq:o21}
\begin{pmatrix}
-1 & 0  & 0  & 0  & 0  & 0  & 0  & -2 \\
1  & -1 & 0  & 0  & 0  & 0  & 0  & 0  \\
0  & 1  & -1 & 0  & 0  & 0  & 0  & 0  \\
0  & 0  & 1  & -1 & 0  & 0  & 0  & 0  \\
0  & 0  & 0  & 1  & -1 & 0  & 0  & 0  \\
0  & 0  & 0  & 0  & 1  & -1 & 0  & 0  \\
0  & 0  & 0  & 0  & 0  & 1  & -1 & 0  \\
0  & 0  & 0  & 0  & 0  & 0  & 1  & -1
\end{pmatrix}, \ \vcenter{\hbox{\includegraphics[scale=0.3]{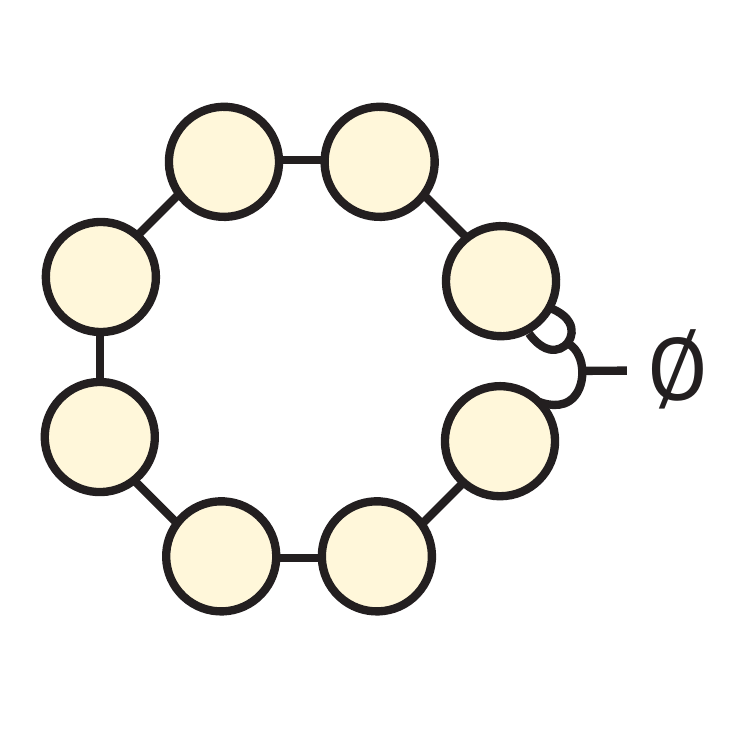}}}
\end{equation}
with eigenvalues $(\mathbf{+0,0075\pm0.4i}; -2\pm 0.4i;-1.4\pm i;0.6\pm i)$. The fact that \eqref{eq:o21} is a $P^-_{FF}$ matrix is also clear, as we can simply consider the leading principal submatrices, that are all lower triangular with negative diagonal. Thus, \eqref{eq:o21} is an oscillatory core of class II and thus Recipe \ref{recipe:2main} guarantees the possibility for oscillatory behavior in any consistent network containing a chemistry as 
\begin{equation}
\begin{cases}
\ce{X} + ... \rightarrow \ce{X_1} \rightarrow \ce{X_2} \rightarrow \ce{X_3} \rightarrow \ce{X_4} \rightarrow \ce{X_5}  \rightarrow \ce{X_6} \rightarrow \ce{Y}+...\\
2\ce{X} + \ce{Y} + ... \quad \rightarrow \quad ...
\end{cases},
\end{equation}
where now `...' refers to any combination of the species, possibly empty, not including $\ce{X},\ce{Y}$ and $\ce{X}_i$ for $i=1,...,6$. In principle, each intermediate $\ce{X}_i$ could also be substituted with $\ce{X}_i +$ `...', of course. The instability grows monotonically with respect to the number of intermediates: the longer the sequence of intermediates, the more unstable the network is. We refer to this principle as the 
$$\textbf{Principle of Length.}$$

{\paragraph{Geometrical intuition for the Principle of Length.} The eigenvalues can be computed explicitly for the stoichiometric matrix associated to the class of networks:
\begin{equation}
\begin{cases}
\ce{X} \quad\rightarrow\quad \ce{X_1} &\rightarrow \quad  ... \quad \rightarrow \quad \ce{X_n} \quad \rightarrow \quad \ce{Y},\\
m \ce{X}+\ce{Y} \quad &\rightarrow \quad...
\end{cases},
\end{equation}
for general $n,m>0$. In fact, the stoichiometric matrix may be written as:
\begin{equation}
    \mathbb{S}=
\begin{pmatrix}
-1 & 0  & ...  &  0  & -m \\
1  & -1 & ... & 0  & 0  \\
\vdots& & & &\vdots \\
0  & 0  & ... & -1 & 0  \\
0  & 0  &... & 1  & -1
\end{pmatrix}=\begin{pmatrix}
0 & 0  & ...  &  0  & -m \\
1  & 0 & ... & 0  & 0  \\
\vdots& & & &\vdots \\
0  & 0  & ... & 0 & 0  \\
0  & 0  &... & 1  & 0
\end{pmatrix}-\begin{pmatrix}
1 & 0  & ...  &  0  & 0 \\
0  & 1 & ... & 0  & 0  \\
\vdots& & & &\vdots \\
0  & 0  & ... & 1 & 0  \\
0  & 0  &... & 0  & 1
\end{pmatrix}.
\end{equation}
Leveraging the root-of-unit structure of the first matrix, the $n+2$ eigenvalues $\lambda_k$ are given by the explicit formula
\begin{equation}
 \lambda_k = m^{\frac{1}{n+2}} e^{\frac{(2k+1)\pi i}{n+2}}-1,
\end{equation}
thus complex pair of eigenvalues with positive real part appear for any $m>1$, if $n$ is sufficiently large. While the same intuition appears instructive also for the other treated examples, a full linear algebra generalization of this argument is beyond the scope of this paper.\\


See table \ref{tab:negmain} for an overview of five oscillatory cores of class II. The depicted value of $n$ is the minimum number of intermediates that must be added in the reaction corresponding to the first column to achieve Hurwitz-instability.

\paragraph{Oscillatory Core (II,a).} The basic structure of the core,
$$\begin{pmatrix}
-1 & -2\\
1 & -1
\end{pmatrix}, \  \vcenter{\hbox{\includegraphics[scale=0.3]{Osc_coreIIa_z.pdf}}}$$
echoes an autocatalytic core of type I:
$$\begin{pmatrix}
-1 & 2\\
1 & -1
\end{pmatrix}.$$
We refer to the previous paragraph for the detailed analysis of this core.

\paragraph{Oscillatory Core (II,b).}
The basic structure of the core,
$$\begin{pmatrix}
-1 & 0 & -1\\
1 & -1 & 0\\
1 & 1 & -1
\end{pmatrix}, \ \vcenter{\hbox{\includegraphics[scale=0.3]{Osc_coreIIb_w.pdf}}} $$
with stable eigenvalues approximately   $(-1.68, -0.66 \pm 1.16i)$, echoes an autocatalytic core of type II:
$$\begin{pmatrix}
-1 & 0 & 1\\
1 & -1 & 0\\
1 & 1 & -1
\end{pmatrix},$$
where the nonzero entries in the first row have been swapped with sign. By adding $n=6$ intermediates to the reaction in the first column, the matrix becomes:
\begin{equation}\label{eq:o2b}
\begin{pmatrix}
-1 & 0 & 0 & 0 & 0 & 0 & 0 & 0 & -1 \\
1 & -1 & 0 & 0 & 0 & 0 & 0 & 0 & 0 \\
0 & 1 & -1 & 0 & 0 & 0 & 0 & 0 & 0 \\
0 & 0 & 1 & -1 & 0 & 0 & 0 & 0 & 0 \\
0 & 0 & 0 & 1 & -1 & 0 & 0 & 0 & 0 \\
0 & 0 & 0 & 0 & 1 & -1 & 0 & 0 & 0 \\
0 & 0 & 0 & 0 & 0 & 1 & -1 & 0 & 0 \\
0 & 0 & 0 & 0 & 0 & 0 & 1 & -1 & 0 \\
0 & 0 & 0 & 0 & 0 & 0 & 1 & 1 & -1
\end{pmatrix}, 
\end{equation}
with eigenvalues approximately:
\begin{equation}
(\mathbf{+0.0094 \pm 0.39i}, -1.82, -1.79\pm 0.5i, -1.28 \pm 0.98i,-0.53\pm 0.95i),
\end{equation}
and it is thus Hurwitz-unstable. The fact that it is $P^-_{FF}$ matrix is also clear, as we can simply consider the leading principal submatrices, that are all lower triangular with negative diagonal. Thus, \eqref{eq:o2b} is an oscillatory core of class II and Recipe \ref{recipe:2main} guarantees the possibility for periodic oscillations in any consistent network containing reactions as:
\begin{equation}
\begin{cases}
\ce{X}+...\rightarrow \ce{X_1} \rightarrow \ce{X_2} \rightarrow \ce{X_3} \rightarrow \ce{X_4} \rightarrow \ce{X_5}  \rightarrow \ce{X_6} \rightarrow \ce{Y}+\ce{Z}+...\\
\ce{Y}+...\quad \rightarrow \quad \ce{Z}+...\\
\ce{X}+\ce{Z}+...\quad \rightarrow \quad ...
\end{cases}
\end{equation}
where `...' refers to any combination of the species, possibly empty, not including $\ce{X},\ce{Y}$ and $\ce{X}_i$ for $i=1,...,6$. Each intermediate $\ce{X}_i$ could in principle be substituted with $\ce{X}_i +$ `...' .

\paragraph{Oscillatory Core (II,c).}
The basic structure of the core,
$$\begin{pmatrix}
-1 & -1 & -1\\
1 & -1 & 0\\
1 & 0 & -1
\end{pmatrix}, \ \vcenter{\hbox{\includegraphics[scale=0.3]{Osc_coreIIc_new.pdf}}}$$
with stable eigenvalues approximately   $(-1,  -1 \pm 1.41 i)$, echoes an autocatalytic core of type III:
$$\begin{pmatrix}
-1 & 1 & 1\\
1 & -1 & 0\\
1 & 0 & -1
\end{pmatrix},$$
where the nonzero entries in the first row have been swapped with sign. By adding $n=6$ intermediates to the reaction in the first column, the matrix becomes:
\begin{equation}\label{eq:o2c}
\begin{pmatrix}
-1 & 0 & 0 & 0 & 0 & 0 & 0 & -1 & -1 \\
1 & -1 & 0 & 0 & 0 & 0 & 0 & 0 & 0 \\
0 & 1 & -1 & 0 & 0 & 0 & 0 & 0 & 0 \\
0 & 0 & 1 & -1 & 0 & 0 & 0 & 0 & 0 \\
0 & 0 & 0 & 1 & -1 & 0 & 0 & 0 & 0 \\
0 & 0 & 0 & 0 & 1 & -1 & 0 & 0 & 0 \\
0 & 0 & 0 & 0 & 0 & 1 & -1 & 0 & 0 \\
0 & 0 & 0 & 0 & 0 & 0 & 1 & -1 & 0 \\
0 & 0 & 0 & 0 & 0 & 0 & 1 & 0 & -1
\end{pmatrix},
\end{equation}
with eigenvalues approximately:
\begin{equation}
(\mathbf{+  0.0075 \pm 0.42i}, -2 \pm 0.42i,-1.42\pm 1.01i, -0.58\pm 1.01i),
\end{equation}
and it is thus Hurwitz-unstable. The fact that it is $P^-_{FF}$ matrix is also clear, as we can simply consider lower triangular principal submatrices with negative diagonal. Thus, \eqref{eq:o2c} is an oscillatory core of class II and Recipe \ref{recipe:2main} guarantees the possibility for periodic oscillations in any consistent network containing reactions as:
\begin{equation}
\begin{cases}
\ce{X}+...\rightarrow \ce{X_1} \rightarrow \ce{X_2} \rightarrow \ce{X_3} \rightarrow \ce{X_4} \rightarrow \ce{X_5}  \rightarrow \ce{X_6} \rightarrow \ce{Y}+\ce{Z}+...\\
\ce{X}+\ce{Y}+...\quad \rightarrow \quad ...\\
\ce{X}+\ce{Z}+...\quad \rightarrow \quad ...
\end{cases}
\end{equation}
where `...' refers to any combination of the species, possibly empty, not including $\ce{X},\ce{Y}$ and $\ce{X}_i$ for $i=1,...,6$. Each intermediate $\ce{X}_i$ could in principle be substituted with $\ce{X}_i +$ `...' . The mechanism of this Oscillatory Core (II,c) was central for the model described by Hiver \cite{hyver1978}, as an example of a nonautocatalytic system that showed periodic oscillations. The analysis in that case was under mass action kinetics, see also \cite{Vassena2025}.

\paragraph{Oscillatory Core (II,d).}
The basic structure of the core,
$$\begin{pmatrix}
-1 & -1 & -1\\
1 & -1 & 0\\
1 & 1 & -1
\end{pmatrix},$$
with stable eigenvalues approximately   $(-1.45,  -0.77 \pm 1.47i)$, echoes an autocatalytic core of type IV:
$$\begin{pmatrix}
-1 & 1 & 1\\
1 & -1 & 0\\
1 & 1 & -1
\end{pmatrix}, \ \vcenter{\hbox{\includegraphics[scale=0.3]{Osc_coreIId.pdf}}}$$
where the nonzero entries in the first row have been swapped with sign. By adding $n=3$ intermediates to the reaction in the first column, the matrix becomes:
\begin{equation}\label{eq:o2d}
\begin{pmatrix}
-1 & 0 & 0 & 0 & -1 & -1 \\
1 & -1 & 0 & 0 & 0 & 0 \\
0 & 1 & -1 & 0 & 0 & 0 \\
0 & 0 & 1 & -1 & 0 & 0 \\
0 & 0 & 0 & 1 & -1 & 0 \\
0 & 0 & 0 & 1 & 1 & -1
\end{pmatrix},
\end{equation}
with eigenvalues approximately:
\begin{equation}
(\mathbf{+0.0118 \pm 0.6840i}, -2,-1.51
  -1.26 \pm 1.12i),
\end{equation}
and it is thus Hurwitz-unstable. The fact that it is $P^-_{FF}$ matrix is also clear, as we can simply consider lower triangular principal submatrices with negative diagonal. Thus, \eqref{eq:o2d} is an oscillatory core of class II and Recipe \ref{recipe:2main} guarantees the possibility for periodic oscillations in any consistent network containing reactions as:
\begin{equation}
\begin{cases}
\ce{X}+...\rightarrow \ce{X_1} \rightarrow \ce{X_2} \rightarrow \ce{X_3} \rightarrow \ce{Y}+\ce{Z}+...\\
\ce{X}+\ce{Y}+...\quad \rightarrow \quad \ce{Z}+...\\
\ce{X}+\ce{Z}+...\quad \rightarrow \quad ...
\end{cases}
\end{equation}
where `...' refers to any combination of the species, possibly empty, not including $\ce{X},\ce{Y}$ and $\ce{X}_i$ for $i=1,...,3$. Each intermediate $\ce{X}_i$ could in principle be substituted with $\ce{X}_i +$ `...' . 

\paragraph{Oscillatory Core (II,e).}
The basic structure of the core,
$$\begin{pmatrix}
-1 & -1 & -1\\
1 & -1 & 1\\
1 & 1 & -1
\end{pmatrix}, \ \vcenter{\hbox{\includegraphics[scale=0.3]{Osc_coreIIe.pdf}}}$$
with stable eigenvalues approximately   $(-2,  -0.5 \pm 1,32i)$, echoes an autocatalytic core of type V:
$$\begin{pmatrix}
-1 & 1 & 1\\
1 & -1 & 1\\
1 & 1 & -1
\end{pmatrix},$$
where the nonzero entries in the first row have been swapped with sign. By adding $n=1$ intermediates to the reaction in the first column, the matrix becomes:
\begin{equation}\label{eq:o2e}
\begin{pmatrix}
-1 & 0 & -1 & -1\\
1 & -1 & 0 & 0\\
0 & 1 & -1 & 1\\
0 &1 & 1 & -1
\end{pmatrix},
\end{equation}
with eigenvalues:
\begin{equation}
(-2,-2, \pm i),
\end{equation}
with purely-imaginary eigenvalues, similarly to the parallel case with positive-feedback Oscillatory Core (I,e). As such matrix is not Hurwitz-stable matrix, it is not literally an Oscillatory Core of class II, following Def.~\ref{def:Ocores2}. However, with a slight abuse of notation, we retain it here for consistency with the other cores and for its relevance in identifying a nontrivial matrix with purely imaginary eigenvalues. Such stoichiometry still produces periodic orbits because \eqref{eq:o2e} is a $D$-Hopf matrix. It is indeed invertible, and actually a $P^-_{FF}$ matrix, as all leading $k$-principal minors are of sign $(-1)^k$. Moreover, by choosing $D(\beta)=\operatorname{diag}(1,1,1,\beta)$, for $\beta>0$ we can see that  
\begin{equation}
  \begin{pmatrix}
-1 & 0 & -1 & -1\\
1 & -1 & 0 & 0\\
0 & 1 & -1 & 1\\
0 &1 & 1 & -1
\end{pmatrix}  
    \begin{pmatrix}
    1 & 0 & 0 & 0\\
     0 & 1 & 0 & 0\\
     0 & 0 & 1 & 0\\
     0 & 0 & 0 & \beta
     \end{pmatrix}
\end{equation}
is Hurwitz-unstable if and only if $\beta>1$ and Hurwitz-stable if and only if $\beta<1$. Corollary \ref{cor:dhopfoscillationsmain} concludes that any consistent network containing the following four reactions admits periodic oscillations for a certain choice of parameters if endowed with parameter-rich kinetics:
\begin{equation}
\begin{cases}
\ce{X}+... \rightarrow \quad \ce{X_1}+...\rightarrow \ce{Y}+\ce{Z}+...\\ 
\ce{X}+\ce{Y}+... \quad \rightarrow \quad \ce{Z}+...\\
\ce{X}+\ce{Z}+...\quad \rightarrow \quad \ce{Y}...
\end{cases}
\end{equation}
where `...' refers to any combination of the species, possibly empty, not including $\ce{X},\ce{X_1},\ce{Y},\ce{Z}$.

\section{Recipe 0}\label{sec:recipe0}
\addtocontents{toc}{\protect\setcounter{tocdepth}{-1}}

\subsection{An essential construction based on autocatalytic cores}\label{sec:recipe0basicconstr}
\addtocontents{toc}{\protect\setcounter{tocdepth}{-1}}

We outline a particularly simple construction of a chemical oscillator. We explain the construction in detail based on an autocatalytic core of type II:
\begin{equation}
    \begin{cases}
        \ce{X}\quad\underset{1}{\rightarrow}\quad \ce{Y}+\ce{Z}\\
        \ce{Y} \quad\underset{2}{\rightarrow} \quad\ce{Z}\\
         \ce{Z} \quad\underset{3}{\rightarrow} \quad\ce{X}\\
    \end{cases},
\end{equation}
with associated stoichiometric matrix:
\begin{equation}
    \mathbb{S}=\begin{pmatrix}
        -1 & 0 & 1\\
        1 & -1 & 0\\
        1 & 1 & -1
    \end{pmatrix}.
\end{equation}

The construction focuses on the species $\ce{X}$ and modifies the autocatalytic core by adding one species $\ce{W}$ and two reactions 4 and 5 as follows. The species $\ce{W}$ co-participates - together with $\ce{X}$ - as reactant in reaction 1. Reaction 4 is just an outflow  reaction from $\ce{X}$, and reaction 5 is just an inflow reaction to $\ce{W}$. We then obtain the following enlarged network.

\begin{equation}\label{eq"recipe0ex}
    \begin{cases}
        \ce{X} +\ce{W} \quad&\underset{1}{\rightarrow}\quad \ce{Y}+\ce{Z}\\
        \ce{Y} \quad&\underset{2}{\rightarrow} \quad\ce{Z}\\
         \ce{Z} \quad&\underset{3}{\rightarrow} \quad\ce{X}\\
         \ce{X} \quad&\underset{4}{\rightarrow}\\
         &\underset{5}{\rightarrow}\quad \ce{W}
    \end{cases},
\end{equation}
with associated stoichiometric matrix:
\begin{equation}
    \mathbb{S}=\begin{pmatrix}
        -1 & 0 & 1 & -1& 0\\
        1 & -1 & 0 & 0&0\\
        1 & 1 & -1 & 0&0\\
        -1 & 0 & 0 & 0 &1\\
    \end{pmatrix}.
\end{equation}
We note that such network is consistent as it admits a steady-state: $\mathbb{S}$ has a positive kernel vector $v$ of the form
\begin{equation}\label{eq:recipe0ss}
    v=(c,c,2c,c,c)^T, 
\end{equation}
for $c\in\mathbb{R}_{>0}.$ It is also relevant to note that the only reaction in which $\ce{W}$ participates as a reactant is reaction 1. As a consequence, such network, made of four species and four reactions, possesses a unique (!) $4$-CS $\pmb{\kappa}$ with CS-bijection defined as:
$$J(\ce{X},\ce{Y},\ce{Z},\ce{W})=(4,2,3,1),$$
and CS-matrix
\begin{equation}\label{eq:recipe0CS}
    \mathbb{S}[\pmb{\kappa}]=\begin{pmatrix}
        -1 & 0 & 1 & -1\\
        0 & -1 & 0 & 1\\
        0 & 1 & -1 & 1\\
        0 & 0 & 0 & -1\\
    \end{pmatrix},
\end{equation}
which is invertible, and thus the Jacobian of the system is invertible for all choices of parameters. Secondly, $\mathbb{S}[\pmb{\kappa}]$ is Hurwitz-stable with eigenvalues $(-1,-1,-1,-1)$ and thus via Prop.~\ref{prop:stability}, the network admits stability. Thirdly, since the network is autocatalytic as it contains an autocatalytic core, it admits instability via Prop.~\ref{prop:thm711}. Summarizing, we have the following three features:
\begin{enumerate}
    \item The Jacobian of the network is invertible for all choices of parameters.
    \item The network admits stability.
    \item The network admits instability.
\end{enumerate}
Recipe \ref{recipe:0main} applies and we conclude that the network admits periodic orbits for a certain choice of parameters. The same (sub)network is at the basis of a literature oscillator in an activator/inhibitor model \cite{Nguyen18}, which we address in detail in Sec.~\ref{sec:activatorinhibitor}.\\

We include as well an explicit numerical analysis of the resulting associated dynamical system associated to the network \eqref{eq"recipe0ex}. To do so, we firstly inspect the symbolic Jacobian:
\begin{equation}
\begin{split}
    G=\mathbb{S}R=&\begin{pmatrix}
        -1 & 0 & 1 & -1& 0\\
        1 & -1 & 0 & 0&0\\
        1 & 1 & -1 & 0&0\\
        -1 & 0 & 0 & 0 &1\\
    \end{pmatrix}\begin{pmatrix}
        R_{1X} & 0 & 0 & R_{1W}\\
        0 & R_{2Y} & 0 & 0\\
        0 & 0 & R_{3Z} & 0\\
        R_{4X} & 0 & 0 & 0\\
        0 & 0 & 0 & 0 & 0
    \end{pmatrix}\\
    &=\begin{pmatrix}
        -R_{4X}-R_{1X} & 0 & R_{3Z} & R_{1W}\\
        R_{1X} & -R_{2Y} & 0 & R_{1W}\\
        R_{1X}& R_{2Y} & -R_{3Z} & R_{1W}\\
        -R_{1X} & 0 & 0 & -R_{1W}.
    \end{pmatrix}
    \end{split}
\end{equation}
with invertible determinant
\begin{equation}
\operatorname{det}G=R_{4X}R_{2Y}R_{3Z}R_{1W},
\end{equation}
corresponding to \eqref{eq:recipe0CS}. In turn, the leading principal matrix
\begin{equation}
G[k]=
\begin{pmatrix}
        -R_{4X}-R_{1X} & 0 & R_{3Z} \\
        R_{1X} & -R_{2Y} & 0 \\
        R_{1X}& R_{2Y} & -R_{3Z}
    \end{pmatrix}
\end{equation}
has determinant
\begin{equation}
    \operatorname{det}G[k]=(R_{1X}-R_{4X})R_{2Y}R_{3Z}
\end{equation}
and it thus shows that the instability of such principal matrix necessarily requires
\begin{equation}\label{eq:recipe0cond}
    R_{1X}>R_{4X},
\end{equation}
i.e., the rate derivative along the autocatalytic cycle must be greater than the derivative along the outflow reaction, which we can interpret as the dominance of the autocatalytic cycle. Eq.~\ref{eq:recipe0cond} is sufficient to obtain periodic orbits: in view of the steady-state constraints \eqref{eq:recipe0ss} we can then utilize mass action kinetics for the entire network with the exception of the outflow reaction 4, where we need a sublinear rate. A natural choice is Michaelis-Menten, obtaining the following system of ODEs:
\begin{equation}\label{eq:recipe0eqmm}
\begin{cases}
\dot{x}=-k_1xw+k_3z-k_4\dfrac{x}{1+b_4x}\\
\dot{y}=k_1xw-k_2y\\
\dot{z}=k_1xw+k_2y-k_3z\\
\dot{w}=F_5-k_1xw
\end{cases}
\end{equation}
To spot the periodic orbits, we fix the steady state fluxes \eqref{eq:recipe0ss} arbitrarily for $c=1$. Then, recipe \ref{recipe:0main} requires making the unstable principal matrix dominant, via Lemma \ref{lem:mainGDmain}. To do so, the steady state concentration of species $\ce{W}$ must be large enough: this implies that its inverse is small enough, and thus the autocatalytic core, which does not include species $\ce{W}$, becomes dominant. We then fix the steady state concentration to be 
\begin{equation}\label{recipe0:eqmmss}
(\bar{x},\bar{y},\bar{z},\bar{w})=(1,1,1,\bar{w}),
\end{equation}
which, jointly with the fixed steady-state fluxes, implies a choice of rates:
\begin{equation}
\begin{cases}
    k_1=\dfrac{1}{\bar{w}}\\
    F_5=k_2=b_4=1\\
    k_3=k_4=2\\
\end{cases}.
\end{equation}
We underline that the Michaelis-Menten parameters $(k_4,a_4)$ could also be chosen differently, in principle. Now, for all choices of $\bar{w}$, the concentrations \eqref{recipe0:eqmmss} are steady-state concentrations. However, such steady state loses stability for large value of $\bar{w}$, generating a stable periodic orbit appear. See Fig.~\ref{fig:recipe0MM}.

\begin{figure}
    \centering
     \includegraphics[width=0.75\linewidth]{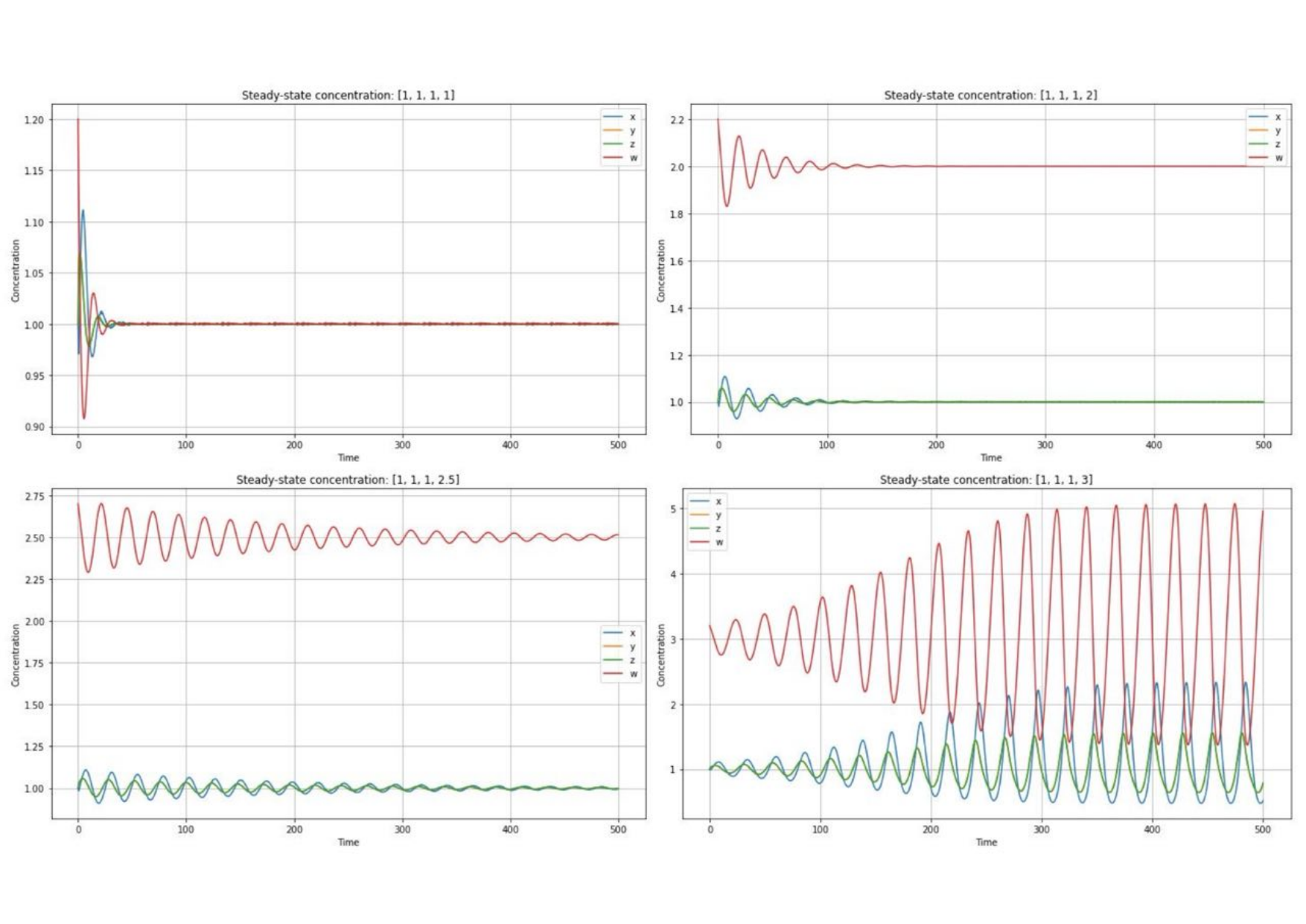}
    \caption{Numerical simulations for the system \eqref{eq:recipe0eqmm}, with parameter values: $k_1=\dfrac{1}{\bar{w}};$ $
    F_5=k_2=b_4=1;$
    $k_3=k_4=2$. This choice of parameters fixes the steady-state value at $(\bar{x},\bar{y},\bar{z},\bar{w})=(1,1,1,\bar{w})$ for any choice of $\bar{w}$.     
    The initial conditions are set accordingly near the steady-state value at $(x_0,y_0,z_0,w_0)=(\bar{x},\bar{y},\bar{z},\bar{w}+0.2)=(1,1,1,\bar{w}+0.2)$. The figure shows the plots for four different values of $\bar{w}$. At $\bar{w}=1$ the steady-state is stable and strongly attracts the dynamics. At $\bar{w}=\{2,2.5\}$ the stability gradually decreases and damped oscillations can be seen. At $\bar{w}=3$ the steady-state is already unstable and a stable periodic orbit has appeared.}
    \label{fig:recipe0MM}
\end{figure}

Moreover, inspired by the fact that any Michaelis-Menten step 
\begin{equation}
    \ce{A}\quad \rightarrow \quad \ce{B}
\end{equation} 
can be seen as the singular limit of a mass action scheme
\begin{equation}
\ce{A}+\ce{E} \quad \rightleftharpoons \quad \ce{I} \quad \rightarrow \quad \ce{B}+\ce{E},
\end{equation}
we can obtain an analogous oscillatory  behavior for the extended mass action network
\begin{equation}\label{eq:recipe0exma}
    \begin{cases}
        \ce{X} +\ce{W} \quad&\underset{1}{\rightarrow}\quad \ce{Y}+\ce{Z}\\
        \ce{Y} \quad&\underset{2}{\rightarrow} \quad\ce{Z}\\
         \ce{Z} \quad&\underset{3}{\rightarrow} \quad\ce{X}\\
         \ce{X} +\ce{E}\quad&\underset{4}{\rightarrow}\quad I\\
         \ce{I}\quad&\underset{5}{\rightarrow}\quad \ce{X} +\ce{E}\\
         \ce{I}\quad&\underset{6}{\rightarrow}\quad\ce{E}\\
         \emptyset &\underset{7}{\rightarrow}\quad \ce{W}
    \end{cases},
\end{equation}
where the Michaelis-Menten outflow 
\begin{equation}
    \ce{X} \quad \rightarrow \quad
\end{equation}
has been substituted by
\begin{equation}
    \ce{X} + \ce{E} \quad \rightleftharpoons \quad \ce{I} \quad \rightarrow \quad \ce{E}.
    \end{equation}
Following identical intuition as for system \eqref{eq:recipe0eqmm}, we can spot periodic oscillations in the associated mass action system:
\begin{equation}\label{eq:recipe0eqma}
\begin{cases}
\dot{x}=-k_1xw+k_3z-k_4xe\\
\dot{y}=k_1xw-k_2y\\
\dot{z}=k_1xw+k_2y-k_3z\\
\dot{w}=F_7-k_1xw\\
\dot{e}=-k_4xe+k_5i+k_6i\\
i'=k_4xe-k_5i-k_6i,
\end{cases}
\end{equation}
for parameter values: 
\begin{equation}
\begin{cases}
    k_1=\dfrac{1}{\bar{w}}\\
    k_2=k_5=k_6=F_7=1\\
    k_3=k_4=2
 \end{cases},
\end{equation}
see Fig.~\ref{fig:recipe0MA}. We stress that Michaelis--Menten represents a limit case of the mass action scheme. As a consequence, it is not a priori guaranteed that periodic orbits observed in the Michaelis-Menten system will persist for a large parameter area, as it is the case in the presented example: this must be checked case-by-case.


\begin{figure}
    \centering
\includegraphics[width=0.75\linewidth]{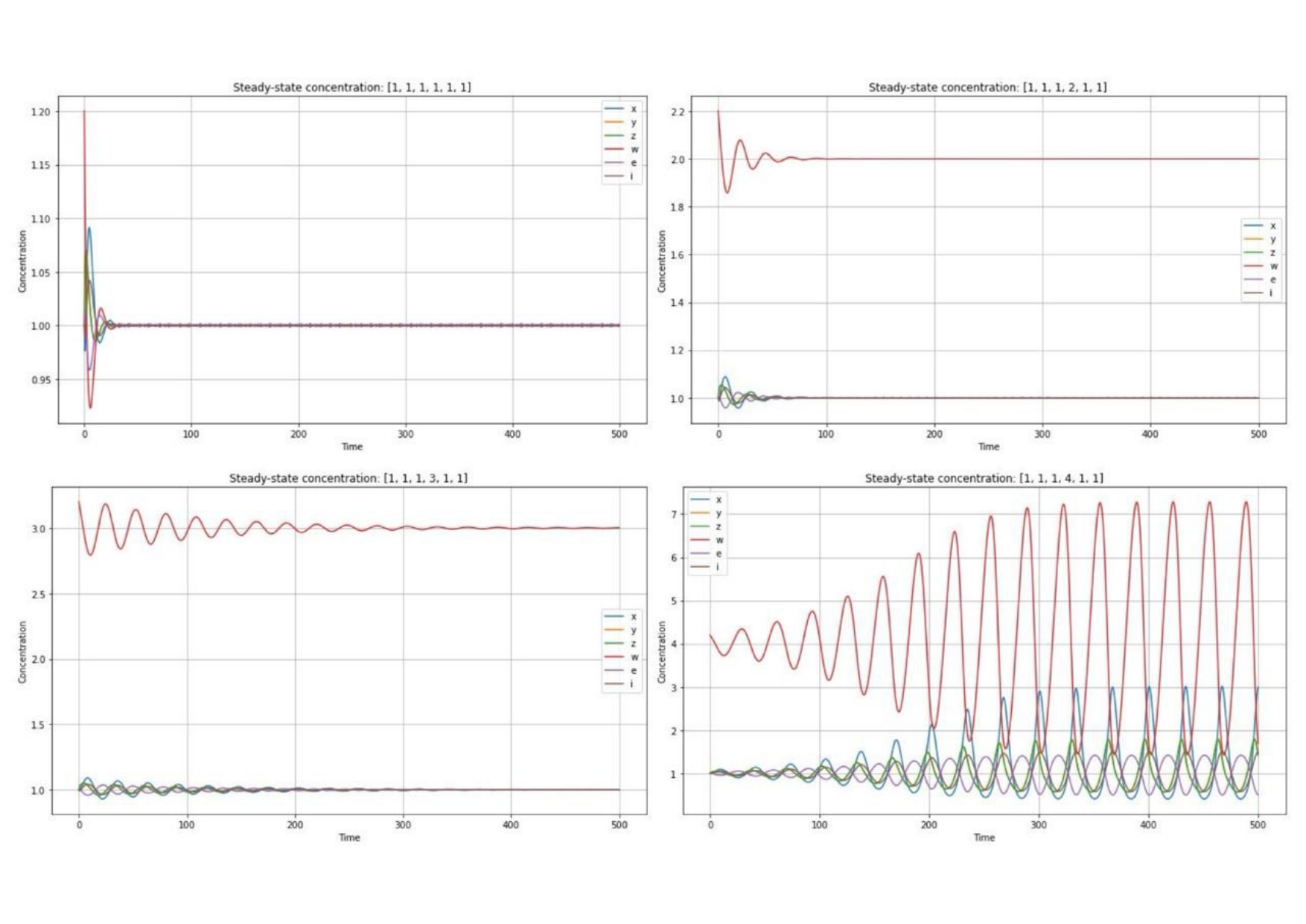}
    \caption{Numerical simulations for the system \eqref{eq:recipe0eqma}, with parameter values: $k_1=\dfrac{1}{\bar{w}};$ $
    k_2=k_5=k_6=F_7=1;$
    $k_3=k_4=2$. This choice of parameters fixes the steady-state value at $(\bar{x},\bar{y},\bar{z},\bar{w},\bar{e},\bar{i})=(1,1,1,\bar{w},1,1)$ for any choice of $\bar{w}$.     
    The initial conditions are set accordingly near the steady-state value at $(x_0,y_0,z_0,w_0,e_0,i_0)=(\bar{x},\bar{y},\bar{z},\bar{w}+0.2,\bar{e},\bar{i})=(1,1,1,\bar{w}+0.2,1,1)$. The figure shows the plots for four different values of $\bar{w}$. At $\bar{w}=1$ the steady-state is stable and strongly attracts the dynamics. At $\bar{w}=\{2,3\}$ the stability gradually decreases and damped oscillations can be seen. At $\bar{w}=4$ the steady-state is already unstable and a stable periodic orbit has appeared.}
    \label{fig:recipe0MA}
\end{figure}

\subsubsection{The same construction for other autocatalytic cores}\addtocontents{toc}{\protect\setcounter{tocdepth}{-1}}

An analogous construction can be carried out identically for the autocatalytic cores of type I, III, IV, obtaining the oscillatory networks listed below. We omit the detailed analysis, as it follows identically the construction done for type II above.

\paragraph{Type I.} Based on an Autocatalytic core of type I:
\begin{equation}
    \begin{cases}
        \ce{X} \quad &\underset{1}\rightarrow \quad \ce{Y}\\
\ce{Y} \quad &\underset{2}\rightarrow \quad 2\ce{X}\\
    \end{cases}\quad \quad \mathbb{S}=\begin{pmatrix}
        -1 & 2\\
        1 & -1
    \end{pmatrix}
\end{equation}
we get the oscillatory network
\begin{equation}
\begin{cases}
\ce{X}+\ce{W} \quad &\underset{1}\rightarrow \quad \ce{Y}\\
\ce{Y} \quad &\underset{2}\rightarrow \quad 2\ce{X}\\
\ce{X}\quad &\underset{3}\rightarrow\\
 &\underset{4}\rightarrow \quad \ce{W}
\end{cases}
\end{equation}
and its stoichiometric matrix
\begin{equation}
    \mathbb{S}=\begin{pmatrix}
        -1 & 2  & -1& 0\\
        1 & -1  & 0&0\\
        -1 & 0  & 0 &1\\
    \end{pmatrix},
\end{equation}
with steady-state flux vector
\begin{equation}
    v=(c,c,c,c),\quad c\in \mathbb{R}_{>0}.
\end{equation}

\paragraph{Type III.} Based on an Autocatalytic core of type III:
\begin{equation}
\begin{cases}
\ce{X} \quad &\underset{1}\rightarrow \quad \ce{Y}+\ce{Z}\\
\ce{Y} \quad &\underset{2}\rightarrow \quad \ce{X}\\
\ce{Z}\quad &\underset{3}\rightarrow\quad\ce{X}\\
\end{cases} \quad \quad
    \mathbb{S}=\begin{pmatrix}
        -1 & 1 & 1\\
        1 & -1 & 0\\
        1 & 0 & -1
    \end{pmatrix}
\end{equation}

we get the oscillatory network
\begin{equation}
\begin{cases}
\ce{X}+\ce{W} \quad &\underset{1}\rightarrow \quad \ce{Y}+\ce{Z}\\
\ce{Y} \quad &\underset{2}\rightarrow \quad \ce{X}\\
\ce{Z}\quad &\underset{3}\rightarrow\quad\ce{X}\\
 \ce{X}\quad&\underset{4}\rightarrow \\
 &\underset{5}\rightarrow \quad \ce{W}
\end{cases}
\end{equation}
and its stoichiometric matrix
\begin{equation}
    \mathbb{S}=\begin{pmatrix}
        -1 & 1 & 1 & -1& 0\\
        1 & -1 & 0 & 0&0\\
        1 & 0 & -1 & 0&0\\
        -1 & 0 & 0 & 0 &1\\
    \end{pmatrix}.
\end{equation}
with steady-state flux vector
\begin{equation}
    v=(c,c,c,c,c),\quad c\in \mathbb{R}_{>0}.
\end{equation}

\paragraph{Type IV.} Based on an Autocatalytic core of type IV:
\begin{equation}
\begin{cases}
\ce{X} \quad &\underset{1}\rightarrow \quad \ce{Y}+\ce{Z}\\
\ce{Y} \quad &\underset{2}\rightarrow \quad \ce{X} +\ce{Z}\\
\ce{Z}\quad &\underset{3}\rightarrow\quad\ce{X}\\
 
\end{cases}
\quad \quad
    \mathbb{S}=\begin{pmatrix}
        -1 & 1 & 1 \\
        1 & -1 & 0 \\
        1 & 1 & -1 
    \end{pmatrix}
\end{equation}
we get the oscillatory network
\begin{equation}
\begin{cases}
\ce{X}+\ce{W} \quad &\underset{1}\rightarrow \quad \ce{Y}+\ce{Z}\\
\ce{Y} \quad &\underset{2}\rightarrow \quad \ce{X} +\ce{Z}\\
\ce{Z}\quad &\underset{3}\rightarrow\quad\ce{X}\\
 \ce{X}\quad&\underset{4}\rightarrow \\
 &\underset{5}\rightarrow \quad \ce{W}
\end{cases}
\end{equation}
and its stoichiometric matrix
\begin{equation}
    \mathbb{S}=\begin{pmatrix}
        -1 & 1 & 1 & -1& 0\\
        1 & -1 & 0 & 0&0\\
        1 & 1 & -1 & 0&0\\
        -1 & 0 & 0 & 0 &1\\
    \end{pmatrix}.
\end{equation}
with steady-state flux vector
\begin{equation}
    v=(c,c,2c,2c,c),\quad c\in \mathbb{R}_{>0}.
\end{equation}

\paragraph{Type V.} In contrast, the construction does not work the same way for an Autocatalytic core of type V, 
\begin{equation}
\begin{cases}
\ce{X}\quad &\underset{1}\rightarrow \quad \ce{Y}+\ce{Z}\\
\ce{Y} \quad &\underset{2}\rightarrow \quad \ce{X} +\ce{Z}\\
\ce{Z}\quad &\underset{3}\rightarrow\quad\ce{X}+\ce{Y}
\end{cases}
 \quad\quad \mathbb{S}=\begin{pmatrix}
        -1 & 1 & 1 \\
        1 & -1 & 1 \\
        1 & 1 & -1 
    \end{pmatrix}
\end{equation}
In fact, the resulting enlarged network
\begin{equation}
\begin{cases}
\ce{X}+\ce{W} \quad &\underset{1}\rightarrow \quad \ce{Y}+\ce{Z}\\
\ce{Y} \quad &\underset{2}\rightarrow \quad \ce{X} +\ce{Z}\\
\ce{Z}\quad &\underset{3}\rightarrow\quad\ce{X}+\ce{Y}\\
 \ce{X}\quad&\underset{4}\rightarrow \\
 &\underset{5}\rightarrow \quad \ce{W}
\end{cases}
\end{equation}
possesses a stoichiometric matrix 
\begin{equation}
    \mathbb{S}=\begin{pmatrix}
        -1 & 1 & 1 & -1& 0\\
        1 & -1 & 1 & 0&0\\
        1 & 1 & -1 & 0&0\\
        -1 & 0 & 0 & 0 &1\\
    \end{pmatrix}
\end{equation}
that does not admit a positive right kernel vector. In particular, the network is not consistent, as it does not admit any steady-state and thus it shall be modified further, e.g. by adding further outflow reactions. We do not pursue this here.

\subsection{Recipe 0 in Literature Oscillators}\label{sec:literature}
\addtocontents{toc}{\protect\setcounter{tocdepth}{-1}}

We analyze few examples from the literature: the classics  \emph{Brusselator} \cite{Prigogine1968} and \emph{Oregonator} \cite{Field1974}; a more recent `\emph{Groningenator}' \cite{Harmsel2023}; an \emph{Activator-Inhibitor} model \cite{Nguyen18}, and a pattern commonly found in Lotka-Volterra systems and Mathematical Epidemiology. We will show how such examples contain the ingredients of Recipe \ref{recipe:0main}. In these examples, we will proceed as follows. We will take a look at the stoichiometry and find autocatalysis in form of autocatalytic cores. As autocatalytic cores are in particular unstable cores (unstable-positive feedbacks), the presence of autocatalysis always implies that the system admits instability \cite{VasStad23}, i.e., there is a parameter choice $R_u$ such that the symbolic Jacobian matrix $G=\mathbb{S}R_u$ is Hurwitz-unstable. We will further show that the determinant of the symbolic Jacobian $\mathbb{S}R$ is nonzero for any choice of parameters in $R$, i.e. the Jacobian is invertible for any choice of parameters. Finally, mostly leveraging Prop.~\ref{prop:stability}, we will show that the system admits stability, i.e., there exists a choice of $R$ there exists a choice $R_s$ of parameters such that the symbolic Jacobian $\mathbb{S}R_s$ is Hurwitz-stable. A network that admits both stability and instability under a Jacobian that is invertible irrespective of the parameters is an instance of Recipe \ref{recipe:0main} and we will thus conclude for periodic orbits.

\subsubsection{A Brusselator with nonambiguity property}\label{sec:bruss}\addtocontents{toc}{\protect\setcounter{tocdepth}{-1}}

We here revisit the Brusselator model \cite{Prigogine1968}, where we introduce a slight modification by adding one intermediate, to make the network nonambiguous.

Under this construction, a typical Brusselator can indeed be captured by
\be
\begin{cases}
2 \ce{X} + \ce{Y} \rightleftharpoons \ce{Z} \\
\ce{Z} \rightleftharpoons 3 \ce{X} \\
\ce{X} + \ce{B}_{res} \rightleftharpoons \ce{Y} + \ce{D}_{res} \\
\ce{X} \rightleftharpoons \ce{X}_{res}
\end{cases}
\ee
where the suffix `res' denotes reservoir species, which may appear in the form of buffer species or rapidly exchange with chemostats  (e.g. phase equilibria, coupled compartments). Upon assuming their constancy, these species can be removed from the mathematical description, and in terms of internal species we obtain the internal subnetwork 
\be
\begin{cases}
2 \ce{X} + \ce{Y} \underset{1}{\overset{2}{\rightleftharpoons}} \ce{Z} \\
\ce{Z} \underset{3}{\overset{4}{\rightleftharpoons}} 3 \ce{X} \\
\ce{X} \underset{5}{\overset{6}{\rightleftharpoons}} \ce{Y} \\
\ce{X} \underset{7}{\overset{8}{\rightleftharpoons}} \emptyset
\end{cases}
\ee
which has no conservation laws. 
We now write the stoichiometric matrix $\mathbb{S}$,
\be
\mathbb{S} = 
\begin{pmatrix} 
-2 &  2 & 3 & -3 & -1 & 1& -1  & 1\\
-1 &  1 & 0 &  0 & 1 & -1 & 0 & 0 \\
1  & -1 &-1 &  1 & 0 & 0  & 0 & 0
\end{pmatrix},
\ee
where we spot an \emph{autocatalytic core} of type I \cite{blokhuis20} corresponding to species $(\ce{X,Z})$ and reactions $(1,3)$,
\begin{equation}
\begin{pmatrix}
    -2 & 3\\
    1 & -1
\end{pmatrix},
\end{equation}
 which readily implies that the network admits instability. We then compute the symbolic reactivity matrix $R$
\be
R = 
\begin{pmatrix} 
R_{1X} &  0 & 0 & R_{4X} & R_{5X} & 0& R_{7X}  & 0\\
R_{1Y} &  0 & 0 &  0 & 0 & R_{6Y} & 0 & 0 \\
0  & R_{2Z} &R_{3Z} &  0 & 0 & 0  & 0 & 0
\end{pmatrix}^T,
\ee
with positive parameters $R_{jm}>0$. Note that for nonambiguous networks, the reactivity matrix has a nonzero entry $R_{jm}$ if and only if $\mathbb{S}_{mj}=s_m^j<0$.
One can readily compute the symbolic Jacobian $\mathbb{S}R$.
\begin{equation}
    G=\mathbb{S}R=
    \begin{pmatrix}
        - 2R_{1X} - 3R_{4X} - R_{5X} - R_{7X} & R_{6Y} - 2R_{1Y} & 2R_{2Z} + 3R_{3Z}\\
               R_{5X} - R_{1X} & - R_{1Y} - R_{6Y} &           R_{2Z}\\
               R_{1X} + R_{4X} &         R_{1Y}&   - R_{2Z} - R_{3Z}
    \end{pmatrix}
\end{equation}
with determinant
\begin{equation}
  - R_{7X}(R_{1Y} R_{3Z} + R_{6Y}R_{2Z} + R_{6Y}R_{3Z}),
\end{equation}
which is nonzero for all choices of parameters in $R$. To conclude for periodic orbits, we just need to find values in $R$ such that $G$ is Hurwitz-stable, since the autocatalytic core already guarantees the existence of values for which $G$ is Hurwitz-unstable. To do so, we focus on the 3-CS $\pmb{\kappa}$ defined as follows:
\begin{equation}
    \pmb{\kappa}=(\{ \ce{X,Y,Z} \}, \{1,3,7\}, \{J(\ce{X})=7,J(\ce{Y})=1,J(\ce{Z})=3),
\end{equation}
which gives rise to the associated CS-matrix
\begin{equation}
\mathbb{S}[\pmb{\kappa}]=
    \begin{pmatrix}
        -1 & -2 & 3\\
        0 & -1 & 0 \\
        0 & 1 & -1
    \end{pmatrix},
\end{equation}
with eigenvalues $(-1,-1,-1)$ and thus Hurwitz-stable. We apply Prop.~\ref{prop:stability} to conclude that the network admits stability. In conclusion, the network has an invertible Jacobian for all parameter choices, admits stability and admits instability: Recipe \ref{recipe:0main} applies and we can conclude that the system admits periodic orbits.







\subsubsection{An Oregonator with nonambiguity property}\label{sec:oreg}\addtocontents{toc}{\protect\setcounter{tocdepth}{-1}}

In the context of a reduced model for oscillations in Belousov–Zhabotinsky reaction, a nonambiguous \emph{Oregonator} model \cite{Field1974} is given by
\be
\begin{cases}
\ce{A} + \ce{Y} \rightleftharpoons \ce{X} \\
\ce{X} + \ce{Y} \rightleftharpoons  \ce{P} \\
\ce{B} + \ce{X} \rightleftharpoons \ce{W} \\
\ce{W} \rightleftharpoons 2 \ce{X} +  \ce{Z} \\
2 \ce{X} \rightleftharpoons \ce{Q} \\
\ce{Z} \rightarrow f \ce{Y} \ \ \ (0 < f < 1).
\end{cases}
\ee
For $f$, the authors consider $f \leq 0.25$\footnote{The value $1/4$ is due to it taking 4 electrons from Cerium(III) species to reduce one bromomalonic acid and yield a single bromide ion. Formally, side reactions with collinear rates result in smaller $f$ (co-production).}. As a model in its own right, one may just leave $f$ as a variable, and we will proceed this way. To simplify the analysis of the dynamical system, we remove species $\ce{P}$ and $\ce{Q}$ since they both only appear as single-species complex in one reversible reaction and we focus on species $\ce{A},\ce{B},\ce{X}, \ce{Y}, \ce{Z}, \ce{W}$\footnote{The intermediate $\ce{W}$ is not part of the original Oregonator, but added to obtain a nonambiguous network that facilitates the presentation.}. Furthermore, reactions are here considered as irreversible, so that we obtain the system
\be
\begin{cases}
\ce{A}+\ce{Y} \underset{1}\rightarrow \ce{X} \\
\ce{X} + \ce{Y} \underset{2}\rightarrow  \emptyset \\
\ce{B}+\ce{X} \underset{3}\rightarrow \ce{W} \\
\ce{W} \underset{4}\rightarrow 2 \ce{X} +  \ce{Z} \\
2 \ce{X} \underset{5}\rightarrow \emptyset \\
\ce{Z} \underset{6}\rightarrow  f\ce{Y} 
\end{cases}
\ee
 We can then write the full stoichiometric matrix
$\mathbb{S}$:
\begin{equation}
    \mathbb{S}=\begin{pmatrix}
        -1 & 0 & 0 & 0 & 0 & 0\\
        0 & 0 & -1 & 0 & 0 & 0\\
        1 & -1 & -1 & 2 & -2 & 0\\
        -1 & -1 & 0 &0 & 0 & 1\\
         0 & 0 &  0 & 1 & 0 & -1\\
        0 & 0 &  1 & -1 & 0 & 0
    \end{pmatrix}.
\end{equation}
Focusing on species $(\ce{X,W})$ and reactions (3,4), we clearly see the presence of an autocatalytic core of type I \cite{blokhuis20}:
\begin{equation}
\begin{pmatrix}
    -1 & 2\\
    1 & -1,
\end{pmatrix}
\end{equation}
which directly implies that the system admits instability. We further write the reactivity matrix:
\begin{equation}
    R=\begin{pmatrix}
        R_{1A} & 0 & 0 & 0 & 0 & 0\\
        0 & 0 & R_{3B} & 0 & 0 & 0\\
        0 & R_{2X} & R_{3X} & 0 & R_{5X} & 0\\
        R_{1Y} & R_{2Y} & 0 &0 & 0 & 0\\
         0 & 0 &  0 & 0 & 0 & R_{6Z}\\
        0 & 0 &  0 & R_{4W} & 0 & 0
    \end{pmatrix}^T.
\end{equation}
The symbolic Jacobian $G$ is thus obtained
\begin{equation}G=\mathbb{S}R=
\begin{pmatrix}
- R_{1A} & 0 & 0 & - R_{1Y} & 0 & 0 \\
0 & - R_{3B} & - R_{3X} & 0 & 0 & 0 \\
R_{1A} & - R_{3B} & - R_{2X} - R_{3X} - 2R_{5X} & R_{1Y} - R_{2Y} & 0 & 2R_{4W} \\
- R_{1A} & 0 & - R_{2X} & - R_{1Y} - R_{2Y} & fR_{6Z} & 0 \\
0 & 0 & 0 & 0 & - R_{6Z} & R_{4W} \\
0 & R_{3B} & R_{3X} & 0 & 0 & - R_{4W}
\end{pmatrix}
\end{equation}
whose determinant is
\begin{equation}   \operatorname{det}G=2R_{1A}R_{3B}R_{4W}R_{5X}R_{2Y}R_{6Z},
\end{equation}
and thus $G$ is invertible irrespective of the parameter choices. Note moreover that the variable $f$ does not play a role in this analysis. Moreover, the fact that the determinant has only one term in its expansion implies that there is only one $6\times 6$ invertible CS-matrix $\mathbb{S}[\pmb{\kappa}]$, 
\begin{equation}\label{eq:oregSk}
\mathbb{S}[\pmb{\kappa}] = \begin{pmatrix} 
-1 &  0   &  0  &   0  &   0  & 0 \\
0  &  -1  &  0  &   0  &   0  & 0 \\
1  &  -1  &  -2 &   -1 &   0 & 2 \\
-1 &  0   &  0 &   -1  &   f & 0 \\
0  &  0   &  0  &   0  &   -1  & 1\\
0  &  1   &  0 &   0  &   0  & -1
\end{pmatrix}
\begin{matrix} 
\ce{A} \\
\ce{B} \\
\ce{X} \\
\ce{Y} \\
\ce{Z} \\
\ce{W} 
\end{matrix}
\end{equation}
associated to a 6-CS $\pmb{\kappa}$ defined on the entire set of species, whose image is the entire set of reactions (6 species for 6 reactions). The Child-Selection bijection $J$ is given by
\begin{equation}
 J:=\begin{cases}
     J(\ce{A})=1\\
     J(\ce{B})=3\\
     J(\ce{X})=5\\
     J(\ce{Y})=2\\
     J(\ce{Z})=6\\
     J(\ce{W})=4\\
 \end{cases}  ,
\end{equation}
and it determines the order of the columns in \eqref{eq:oregSk}. Since $\mathbb{S}[\pmb{\kappa}]$ is Hurwitz-stable with eigenvalues $(-1,-1,-1,-1,-1,-2)$, the network admits stability via Prop.~\ref{prop:stability}. Again, $f$ does not appear. Recipe \ref{recipe:0main} again applies and concludes for periodic orbits, irrespective of the value of $f$.

\subsubsection{Groningenator}\label{sec:groning}\addtocontents{toc}{\protect\setcounter{tocdepth}{-1}}

A small-molecule oscillator appeared in Ref. \cite{Harmsel2023}. For consistency with oscillator naming conventions we refer to it here as the Groningenator, from Groningen, the Dutch city where the oscillator finds its origin. The reactions that have been observed and modeled for the Groningenator are given in Fig. \ref{fig:Groningenator}. The reaction is interpreted as being initiated by a trigger ($\ce{T}$), after which pyridine ($\ce{X}$) can release itself autocatalytically by cleaving FMOC. The reaction is inhibited by a fast and slow inhibitor. 

\begin{figure}[tbhp!]
\centering
\includegraphics[width=0.70\linewidth]{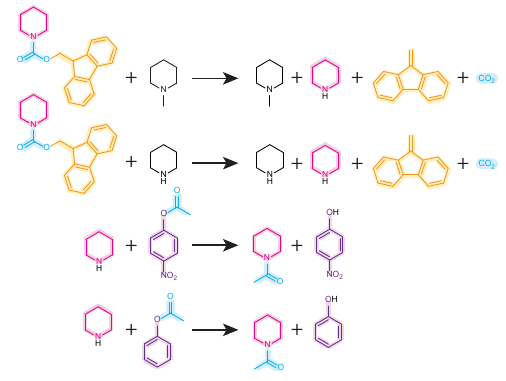}
\caption{Groningenator: chemical reactions used in minimal model in ref \cite{Harmsel2023}, consisting of triggered release, autocatalysis, fast inhibition, slow inhibition. Molecular fragments are colored for clarity.}
\label{fig:Groningenator}
\end{figure}
Mechanistically, these reactions can obviously be decomposed in further steps. For the first two reactions, adding at least one intermediate to the description makes the network nonambiguous, i.e. stoichiometry of reactants and products then coincide with the stoichiometry of overall reactions and ipso facto the stoichiometric matrix then adequately represents allo-and autocatalysis. To this end, we introduce intermediates $\ce{Y}, \ce{Z}$. to obtain
\bea
\ce{FMOC} + \ce{Trig} \rightarrow \ce{Y} \rightarrow \ce{Trig} + \ce{Pyr} + \ce{DBF} + \ce{CO2} \\
\ce{FMOC} + \ce{Pyr} \rightarrow \ce{Z} \rightarrow 2 \ce{Pyr} + \ce{DBF} + \ce{CO2} \\
\ce{Pyr} + \ce{Inh_{fast}} \rightarrow \ce{Pyr_{ac}} + \ce{Waste_{fast}} \\
\ce{Pyr} + \ce{Inh_{slow}} \rightarrow \ce{Pyr_{ac}} + \ce{Waste_{slow}} 
\eea

We can freely remove $\ce{DBF, CO2, Pyr_{ac}, Waste_{fast}, Waste_{slow}}$ from the description as they are formed irreversibly and not consumed, i.e. no other variables are dependent on their values. Upon removal of a sufficient number of species from the description, we are left with 

\be
\begin{cases}
\ce{W} + \ce{X} \underset{1}\rightarrow \ce{Y} \\
 \ce{Y} \underset{2}\rightarrow 2 \ce{X} \\ 
 \ce{X} + \ce{A} \underset{3}\rightarrow \emptyset \\
 \ce{X} \underset{4}\rightarrow \emptyset \\
\ce{W} + \ce{T} \underset{5}\rightarrow \ce{Z} \\
\ce{Z} \underset{6}\rightarrow \ce{X} + \ce{T} \\
 \ce{X} + \ce{B} \underset{7}\rightarrow\emptyset
\end{cases}
\ee
We consider here for simplicity irreversible reactions, as in the original model. A same conclusion can be reached by adding reversible direction. 
The principal oscillating species here are $\ce{X}, \ce{Y}$ corresponding to Pyridine and an  intermediate. $\ce{A}, \ce{B}$ are inhibitors, $\ce{T}$ a trigger. $\ce{W}$ is Pyridine-Fmoc which serves as a supply from which $\ce{X}$ can be liberated.

We obtain as stoichiometric matrix
\be
\mathbb{S} = \begin{pmatrix}
-1 & 0 & 0 & 0 & -1 & 0 & 0 \\
-1 & 2 &  -1 & -1 &  0 & 1 & -1 \\
 1 & -1 & 0 & 0 & 0 & 0 & 0 \\
0 & 0 & -1 & 0 & 0 & 0 & 0 \\
0 & 0 & 0 & 0 & -1 & 1 & 0 \\
0 & 0 & 0 & 0 & 1 & -1 & 0 \\
0 & 0 & 0 & 0 & 0 & 0 & -1 
\end{pmatrix} \begin{matrix} \ce{W} \\ \ce{X} \\ \ce{Y} \\ \ce{A} \\ \ce{T} \\ \ce{Z} \\ \ce{B}
\end{matrix}.
\ee
We directly see a conserved quantity $L$ composed of $\ce{T}, \ce{Z}$:
\begin{equation}
    \mathcal{L} = x_{\ce{T}} + x_{\ce{Z}},
\end{equation}
corresponding to the conservation law identified by the left kernel vector $w=(0,0,0,0,1,1,0)$. In particular, the Jacobian determinant is identically zero. The mechanism of this third example shares again strong similarities with the both Brusselator and Oregonator analyzed in Sec.~\ref{sec:bruss} and Sec.~\ref{sec:oreg}, respectively. As in both the previous cases we had explicitly performed the computation, we argue here more implicitly. 
To apply Recipe \ref{recipe:0main}, we need to identify a principal submatrix of the symbolic Jacobian $G$, invertible for all choices. Due to the conserved quantity $\mathcal{L}$ involving the trigger, an educated guess consists into looking into the $6\times 6$ principal submatrix $G[\kappa]$ where $\kappa=\{\ce{W,X,Y,A,Z,B}\}$, with the trigger $T$ removed. The corresponding stoichiometry reads:
\be
\mathbb{S}_{\vee T} = \begin{pmatrix}
-1 & 0 & 0 & 0 & -1 & 0 & 0 \\
-1 & 2 &  -1 & -1 &  0 & 1 & -1 \\
 1 & -1 & 0 & 0 & 0 & 0 & 0 \\
0 & 0 & -1 & 0 & 0 & 0 & 0 \\
0 & 0 & 0 & 0 & 1 & -1 & 0 \\
0 & 0 & 0 & 0 & 0 & 0 & -1 
\end{pmatrix} \begin{matrix} \ce{W} \\ \ce{X} \\ \ce{Y} \\ \ce{A} \\\ce{Z} \\ \ce{B}
\end{matrix}.
\ee
We note that this stoichiometry consists of six species for seven reactions. However, we note that the rows corresponding to species $B,Z,A,Y$ contains only one negative entry, i.e. these species participate as reactant in one reaction each, respectively $7,6,3,2$. This directly implies that 
\emph{any} Child-Selection bijection $J$ assigns univocally to $(\ce{Y,A,Z,B})$ reactions $(2,3,6,7)$. In turn, by focusing on the remaining species $(\ce{W,X})$ we conclude that there are only three 6-Child-Selections $\pmb{\kappa}_1,\pmb{\kappa}_2,\pmb{\kappa}_3$ defined on $\kappa=(\ce{W,X,Y,A,Z,B})$, namely:
\begin{equation}
 \pmb{\kappa}_1:=\{J_1(\kappa)=(1,4,2,3,6,7)\},
\end{equation}
with associated CS-matrix
\be
\mathbb{S}[\pmb{\kappa}_1] = \begin{pmatrix}
-1 &0 & 0 & 0  & 0 & 0 \\
-1 & -1 &2 &  -1  & 1 & -1 \\
 1 &0 & -1 & 0 & 0 & 0 \\
0 & 0& 0 & -1  & 0 & 0 \\
0 & 0& 0 & 0 & -1 & 0 \\
0 & 0 & 0 & 0  & 0 & -1 
\end{pmatrix} \begin{matrix} \ce{W} \\ \ce{X} \\ \ce{Y} \\ \ce{A} \\\ce{Z} \\ \ce{B}
\end{matrix},
\ee
which is Hurwitz-stable with eigenvalues $(-1,-1,-1,-1,-1,-1).$
;
\begin{equation}
 \pmb{\kappa}_2:=\{J_2(\kappa)=(5,4,2,3,6,7)\},
\end{equation}
with associated CS-matrix
\be
\mathbb{S}[\pmb{\kappa}_2] = \begin{pmatrix}
-1 &0 & 0 & 0  & 0 & 0 \\
0 & -1 &2 &  -1  & 1 & -1 \\
0 &0 & -1 & 0 & 0 & 0 \\
0 & 0& 0 & -1  & 0 & 0 \\
1 & 0& 0 & 0 & -1 & 0 \\
0 & 0 & 0 & 0  & 0 & -1 
\end{pmatrix} \begin{matrix} \ce{W} \\ \ce{X} \\ \ce{Y} \\ \ce{A} \\\ce{Z} \\ \ce{B}
\end{matrix},
\ee
which is Hurwitz-stable with eigenvalues $(-1,-1,-1,-1,-1,-1).$
;
\begin{equation}
 \pmb{\kappa}_3:=\{J_3(\kappa)=(5,1,2,3,6,7)\},
\end{equation}
with associated CS-matrix
\be
\mathbb{S}[\pmb{\kappa}_2] = \begin{pmatrix}
-1 &-1 & 0 & 0  & 0 & 0 \\
0 & -1 &2 &  -1  & 1 & -1 \\
0 &1 & -1 & 0 & 0 & 0 \\
0 & 0& 0 & -1  & 0 & 0 \\
1 & 0& 0 & 0 & -1 & 0 \\
0 & 0 & 0 & 0  & 0 & -1 
\end{pmatrix} \begin{matrix} \ce{W} \\ \ce{X} \\ \ce{Y} \\ \ce{A} \\\ce{Z} \\ \ce{B}
\end{matrix},
\ee
which is a singular matrix.

By this simple direct analysis we conclude three facts:
\begin{enumerate}
\item The determinant of the principal submatrix $G[\kappa]$ is
  non-zero 
  for all choices of parameters:
\begin{equation}\begin{split}
    \operatorname{det}G[\kappa]=\sum_\mathbf{\pmb{\kappa}}&
    \operatorname{det}\mathbb{S}[\pmb{\kappa}_1]R[\pmb{\kappa}_1]+\operatorname{det}\mathbb{S}[\pmb{\kappa}_2]R[\pmb{\kappa}_2]\\
    &=R_{1W}R_{4X}R_{2Y}R_{3A}R_{6Z}R_{7B}+R_{5W}R_{4X}R_{2Y}R_{3A}R_{6Z}R_{7B}
    >0;
\end{split}\end{equation}
\item There exists a choice $R_u$ for which $G[\kappa]$ is Hurwitz-unstable, due to the presence of an autocatalytic core of type I in $\mathbb{S}_{\vee T}$ for species $(X,Y)$ and reactions $(1,2)$:
$$\begin{pmatrix}
    -1 & 2\\
    1 & -1
\end{pmatrix};$$
\item There exists a choice $R_s$ for which $G[\kappa]$ is Hurwitz-stable, via Prop.~\ref{prop:stability},
  since both $\mathbb{S}[\pmb{\kappa}_1]$ and $\mathbb{S}[\pmb{\kappa}_2]$ are Hurwitz-stable. 
\end{enumerate}
Recipe \ref{recipe:0main} concludes then for periodic orbits in the associated dynamical system. 

\subsubsection{Activator-Inhibitor Model}\label{sec:activatorinhibitor}\addtocontents{toc}{\protect\setcounter{tocdepth}{-1}}
The activator-inhibitor model \cite{Nguyen18} represents a simplified CRN model to study regulatory motifs commonly found in biological oscillators. 
The model consists of a substrate $\ce{S}$, an activator $\ce{R}$, an inhibitor $\ce{X}$, phosphorylated enzyme $\ce{M_p}$, $\ce{ATP}$ (referred as $A_t$), $\ce{ADP}$ (referred as $A_d$), and inorganic phosphate $\ce{P}_i$. We have added intermediates $\ce{I_1},\ce{I_2},\ce{I_3}$ to have a nonambiguous model. The reactions read as follows
\begin{equation}
    \begin{cases}
    \ce{M_p}+\ce{S} \quad  &\overset{1}{\underset{2 }\rightleftharpoons} \quad \ce{I_1} \quad \overset{3}{\underset{4}\rightleftharpoons} \quad \ce{M_p} + \ce{R}\\
        \ce{S}\quad & \overset{5}{\underset{6}\rightleftharpoons} \quad \ce{R}\\
        \ce{X}+\ce{R}+\ce{A_T}\quad &\overset{7}{\underset{8}\rightleftharpoons}\quad \ce{I_2}\quad \overset{9}{\underset{10}\rightleftharpoons}\quad\ce{X}+\ce{S}+\ce{A_D}+\ce{P_i}\\
          \ce{M_p}+\ce{S} \quad & \overset{11}{\underset{12}\rightleftharpoons} \quad \ce{I_3} \quad \overset{13}{\underset{14}\rightleftharpoons} \quad \ce{M_p} + \ce{X}\\
\ce{X}+\ce{A_T}\quad&\overset{15}{\underset{16}\rightleftharpoons}\quad\ce{S}+\ce{A_D}+\ce{P_i}\\
 \end{cases},
\end{equation}

\begin{figure}
    \centering
    \includegraphics[width=0.75\linewidth]{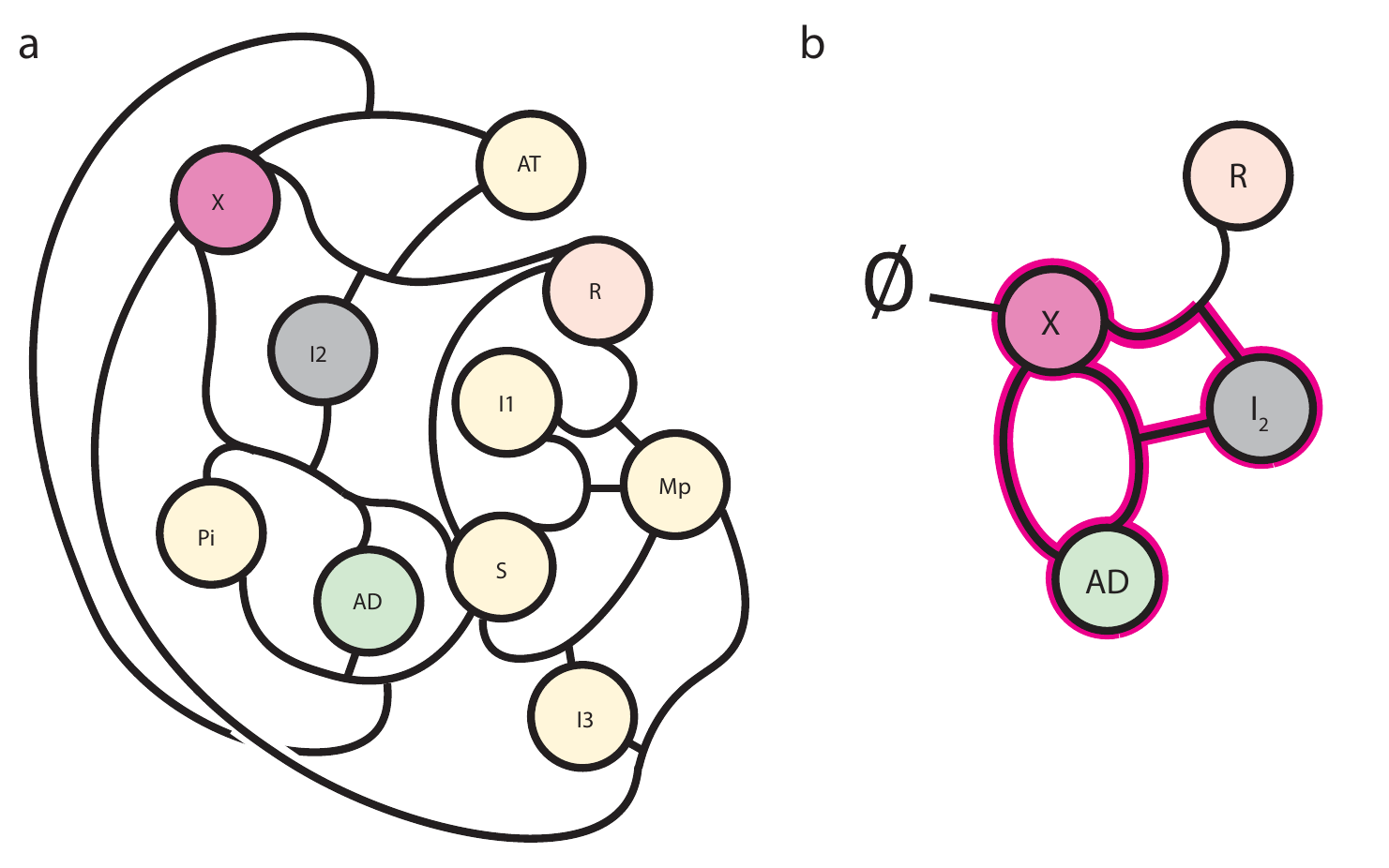}
    \caption{a) CRN for the activator-inhibitor model \cite{Nguyen18}, with added intermediates. b) subnetwork motif from which oscillations can be deduced via Recipe \ref{recipe:0main}.}
    \label{fig:actinh}
\end{figure}

For sake of example, we spot a possible motif for oscillatory behavior by focusing on the stoichiometry arising from species $\ce{R},\ce{X},\ce{A_d},\ce{I_2}$ and associated reactions $7,14,9,16$, obtaining the following stoichiometric matrix:
\begin{equation}
\mathbb{S}=\begin{pmatrix}
    -1 & 0 & 0 & 0\\
    -1 & -1 & 1 & 1\\
     1 & 0 & -1 & 0\\
     0 & 0 &  1 & -1
\end{pmatrix}
\begin{matrix}
    \ce{R}\\
    \ce{X}\\
    \ce{I_2}\\
    \ce{A_d}
\end{matrix},
\end{equation}
which is Hurwitz-stable with eigenvalues $(-1,-1,-1,-1)$. Since $\mathbb{S}$ possesses negative diagonal, we may  indeed interpret it as a CS-matrix $\mathbb{S}[\pmb{\kappa}]$, for the 4-CS
\begin{equation}
    \pmb{\kappa}:=\{\{\ce{R},\ce{X},\ce{A_d},\ce{I_2}\},\{7,9,14,16\},\{J(\ce{R},\ce{X},\ce{A_d},\ce{I_2})=(7,14,9,16)\}\}.
\end{equation} 
Moreover, no other 4-CS-bijection can be defined for such sets of species and reactions. However, besides the 3-principal minors of $\mathbb{S},$ we observe that species $(\ce{X},\ce{I_2},\ce{A_d})$ and reactions $(7,9,16)$ identifies an autocatalytic core of type II:
\begin{equation}\label{eq:ac_in_autocat}
   \mathbb{S}[\pmb{\kappa}']= \begin{pmatrix}
        -1 & 1 & 1\\
        1 & -1 & 0\\
        0 & 1 & -1
    \end{pmatrix},
\end{equation}
associated to the 3-CS
\begin{equation}
\pmb{\kappa}':=\{\{\ce{X},\ce{I_2},\ce{A_d})\}, \{7,9,16\},J'(\ce{X},\ce{I_2},\ce{A_d})=(7,9,16)\}.
\end{equation}
The above two observations guarantee the applicability of Recipe \ref{recipe:0main}, and thus the insurgence of periodic behavior. Note that since the autocatalytic core \eqref{eq:ac_in_autocat} does not appear as principal submatrix of the CS-matrix $\mathbb{S}[\pmb{\kappa}]$ such scheme does not follow Recipe \ref{recipe:1main}, albeit it shares with it a certain degree of chemical intuition.

To make our observations explicit, we consider the reactivity matrix associated to the stoichiometry of $\mathbb{S}$:
\begin{equation}
    R=\begin{pmatrix}
        R_{7R} & 0 & 0 & 0\\
        R_{7X} & R_{14X} & 0& 0\\
        0 & 0 & R_{9I_2}& 0\\
        0 & 0 & 0 & R_{16A_d}
    \end{pmatrix}^T,
\end{equation}
and the product matrix $G=\mathbb{S}R$
\begin{equation}
\begin{pmatrix}
-R_{7R} & -  R_{7X} & 0 & 0\\
-R_{7R} & -  R_{7X} -R_{14X} & R_{9I_2} & R_{16A_d} \\
R_{7R} &   R_{7X} &-R_{9I_2} & 0\\
0 & 0 & R_{9I_2} & -R_{16A_d}
\end{pmatrix}
\end{equation}
which is just the principal minor of the Jacobian matrix of the full system, associated to species $(\ce{R},\ce{X},\ce{I}_2,\ce{A}_d)$ and with any partial derivative $R_{jm}$ with $j\neq 7,14,9,16$ chosen to be at the limit $R_{jm}=0.$

By computing the determinant of $G$, 
\begin{equation}
\operatorname{det}G=R_{7R}R_{14X}R_{9I_2}R_{16A_d}
\end{equation}
we indeed confirm that $G$ is always invertible. On the other hand, the principal minor:
\begin{equation}
\operatorname{det}\begin{pmatrix}
    -  R_{7X} -R_{14X} & R_{9I_2} & R_{16A_d} \\
   R_{7X} &-R_{9I_2} & 0\\
 0 & R_{9I_2} & -R_{16A_d}
\end{pmatrix}=(R_{7X}-R_{14X})R_{9I_2}R_{16A_d}
\end{equation}
is of sign $(-1)^{3-1}$ for $R_{7X}>R_{14X}$, namely when autocatalysis becomes dominant, and thus the system admits instability. Recipe \ref{recipe:0main} applies and implies the occurrence of periodic orbits. 

Similar constructions can be also obtained, for instance, starting from species 
$(\ce{R},\ce{X},\ce{I_2},\ce{A_t}),$ 
$(\ce{R},\ce{X},\ce{I_2},\ce{P_i}),$
$(\ce{S},\ce{X},\ce{I_2},\ce{A_d}),$
$(\ce{S},\ce{X},\ce{I_2},\ce{A_t}),$
$(\ce{S},\ce{X},\ce{I_2},\ce{P_i}),$
whose analysis we omit here for brevity and for close analogy to the treated choice.

\subsubsection{Autocatalytic reaction, Mathematical Epidemiology, and Lotka-Volterra}\label{ex:ME}\addtocontents{toc}{\protect\setcounter{tocdepth}{-1}}
A particular simple autocatalytic scheme always give rise to oscillations under parameter-rich kinetics, following Recipe \ref{recipe:0main}. 
\begin{equation}
    \begin{cases}
        \ce{X} + \ce{Y} \quad &\underset{1}{\rightarrow} \quad \ce{2Y}\\
        \ce{Y} \quad &\underset{2}{\rightarrow} \quad ...
    \end{cases}
\end{equation}

The dots `...' in reaction 2 indicates any combination of species, possibly empty, that does not contain neither $\ce{X}$ nor $\ce{Y}$. Recipe \ref{recipe:0main} implies the following corollary.
\begin{cor}\label{cor:ME}
If a network possesses two reactions of the form $1$ and $2$, then
the associated system admits nonstationary periodic solutions
\end{cor}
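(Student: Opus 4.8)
The plan is to apply Recipe~\ref{recipe:0} to the two-dimensional principal submatrix of the symbolic Jacobian supported on the species $\ce{X}$ and $\ce{Y}$, using only reactions $1$ and $2$. First I would record the relevant stoichiometry: reaction $1$ gives $\mathbb{S}_{\ce{X}1}=-1$ and $\mathbb{S}_{\ce{Y}1}=+1$, while reaction $2$ gives $\mathbb{S}_{\ce{X}2}=0$ and $\mathbb{S}_{\ce{Y}2}=-1$ (the products of reaction $2$ contain neither $\ce{X}$ nor $\ce{Y}$). The only reactivities attached to these two reactions and these two species are $R_{1\ce{X}}$ and $R_{1\ce{Y}}$ (both $\ce{X}$ and $\ce{Y}$ are reactants of the bimolecular step $1$) together with $R_{2\ce{Y}}$. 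Invoking the licence granted by Recipe~\ref{recipe:0} and its accompanying remark, I would fix every other reactivity entry of $\bar{R}$ to zero, leaving $R_{1\ce{X}},R_{1\ce{Y}},R_{2\ce{Y}}>0$ as the free parameters.

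Then I would compute the principal submatrix $\bar{G}[\kappa]$ on $\kappa=\{\ce{X},\ce{Y}\}$,
\begin{equation*}
\bar{G}[\kappa]=\begin{pmatrix} -R_{1\ce{X}} & -R_{1\ce{Y}} \\ R_{1\ce{X}} & R_{1\ce{Y}}-R_{2\ce{Y}} \end{pmatrix},
\end{equation*}
and observe the decisive algebraic fact that $\det \bar{G}[\kappa]=R_{1\ce{X}}R_{2\ce{Y}}$, strictly positive for every admissible parameter choice, since the autocatalytic reactivity $R_{1\ce{Y}}$ cancels out of the determinant. Hence $\bar{G}[\kappa]$ is invertible throughout parameter space, meeting the invertibility hypothesis of Recipe~\ref{recipe:0}. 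Its trace, by contrast, equals $-R_{1\ce{X}}+R_{1\ce{Y}}-R_{2\ce{Y}}$ and \emph{does} depend on $R_{1\ce{Y}}$. Since for a $2\times 2$ matrix with positive determinant Hurwitz-stability is equivalent to negative trace, I would exhibit two choices: one with $R_{1\ce{Y}}$ small enough that the trace is negative ($\bar{G}[\kappa]$ Hurwitz-stable), and one with $R_{1\ce{Y}}$ large enough that the trace is positive ($\bar{G}[\kappa]$ Hurwitz-unstable).

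With an invertible principal submatrix that switches Hurwitz-stability across parameter space, all hypotheses of Recipe~\ref{recipe:0} hold, and it yields nonstationary periodic solutions for the ambient network, which we take to be nondegenerate and parameter-rich as throughout; the reactivities set to zero are restored to small positive values by the perturbation argument already contained in the proof of Recipe~\ref{recipe:0}. I do not expect a serious obstacle, as the computation is elementary. The only point requiring care is the cancellation that renders the determinant independent of the autocatalytic reactivity $R_{1\ce{Y}}$: this is exactly what decouples the invertibility condition from the stability-change condition and singles out Recipe~\ref{recipe:0}, rather than the oscillatory-core machinery of Recipes~\ref{recipe:1}--\ref{recipe:2}, as the correct tool. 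Geometrically this is the classical Lotka--Volterra predator--prey crossing: at zero trace the eigenvalues are $\pm i\sqrt{R_{1\ce{X}}R_{2\ce{Y}}}$, purely imaginary, which is precisely the Hopf transition underlying Recipe~\ref{recipe:0}.
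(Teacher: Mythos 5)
Your proposal is correct and follows essentially the same route as the paper's own proof: both isolate the $2\times 2$ principal submatrix of the symbolic Jacobian on $\{\ce{X},\ce{Y}\}$ (suppressing all other reactivities, by $\varepsilon$-smallness in the paper, by the zero-then-perturb licence of Recipe~\ref{recipe:0} in your version), note that $\det = R_{1\ce{X}}R_{2\ce{Y}}>0$ independently of $R_{1\ce{Y}}$, and toggle stability by varying $R_{1\ce{Y}}$ before invoking Recipe~\ref{recipe:0}. Your explicit trace--determinant criterion merely makes precise the paper's ``$R_{1Y}$ small enough / big enough'' step, so there is no substantive difference.
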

\proof
Consider the following rescale for the reactivity matrix $R$: Any entry $R_{jm}\neq R_{1X},R_{1Y},R_{2Y}$ is of order $\varepsilon$. Then consider the principal submatrix of the Jacobian $G=\mathbb{S}R$ associated to species $\ce{X,Y}$. We get
$$G[\{X,Y\}]=\begin{pmatrix}
    -R_{1X} +O(\varepsilon) & -R_{1Y} +O(\varepsilon) \\
    R_{1X}+ O(\varepsilon) & R_{1Y}-R_{2Y}+O(\varepsilon)
\end{pmatrix}$$
At $\varepsilon=0$ the determinant of the matrix is $\operatorname{det}R_{1X}R_{2Y}\neq 0$, for all choices of $R_{1X}, R_{1Y}, R_{2Y}$. On the other hand, fix $R_{1X}=R_{2Y}=1$. For $R_{1Y}$ small enough, $G[\{X,Y\}]$ is Hurwitz-stable. For $R_{1Y}$ big enough, $G[\{X,Y\}]$ is Hurwitz-unstable. Via the  same continuity argument as in proofs of Thm.~\ref{thm:mainmain} and Recipe \ref{recipe:0main}, the system admits nonstationary periodic.
\endproof

A weaker version of Corollary \ref{cor:ME} was proved independently in
\cite{VAA24ME} in the context of Mathematical Epidemiology. It is worth
noting that such a pattern is omnipresent 
where $X$ typically represents the susceptible individuals $\mathsf{S}$,
and $Y$ the infected ones $\mathsf{I}$. The autocatalytic reaction 1 then
represents the basic infection-interaction where one infected individual
meets a susceptible individual: The encounter results in two infected
individuals. We underline, however, that the same paper \cite{VAA24ME}
shows that the statement does not typically hold if the network is endowed
with mass action kinetics instead than with a parameter-rich kinetics. For
example, for the standard Lotka--Volterra system \be
\begin{cases}
\ce{X} + \ce{Y} \underset{1}\rightarrow 2 \ce{Y},\\
\ce{Y} \underset{2}\rightarrow \emptyset\\
\ce{X} \underset{3}\rightarrow 2 \ce{X}, 
\end{cases}
\ee contains reactions of the form 1 and 2 as in Cor.~\ref{cor:ME}, but
there is no choice of parameters for which the steady-state Jacobian is
Hurwitz-stable or Hurwitz-unstable. In this case it is well-known
\cite{Murray:2007} that the
spectrum of the Jacobian is purely imaginary for all choices of the
parameters and the Lotka-Volterra system possesses a continuum of periodic
solutions: Any choice of nonstationary initial condition lies on a periodic
trajectory, and all the periodic trajectories are non-hyperbolic. One way
to look at it is that the associated dynamical system can be written as
\begin{equation}
    \begin{cases}
    \dot{x}=k_3x-k_1xy=x(k_3-k_1y)\\
    \dot{y}=k_1xy-k_2y=y(k_1x-k_2),
    \end{cases}
\end{equation}
and thus steady-state constraints automatically imply an identically zero  trace for the steady-state Jacobian matrix, which points to purely-imaginary eigenvalues. Clearly, by considering a different nonlinearity as Michaelis-Menten, such feature does not necessarily happen and Cor.~\ref{cor:ME} applies. 

It is also worth noting that upon addition of one single intermediate in reaction 1,
\begin{equation}
    \ce{X} + \ce{Y} \underset{1'}\rightarrow \ce{Z} \underset{1''}\rightarrow 2 \ce{Y}
\end{equation}
we get that the steady-state Jacobian of the associated mass action system becomes hyperbolic, and thus the continuum of periodic orbits is lost if we consider such simple enlargement. We stress that this is not in contradiction with Banaji's inheritance results \cite{ba23splitting}, which are valid for hyperbolic steady-states and periodic orbits, only. 

We include here an explicit analysis of this fact:
\begin{equation}
    \begin{cases}
    \dot{x}=k_3x-k_{1'}xy\\
    \dot{y}=-k_{1'}xy+2k_{1''}z-k_2y\\
    \dot{z}=k_{1'}xy-k_{1''}z
    \end{cases}
\end{equation}
The positive steady-state parametrization reads:
\begin{equation}
    \bar{x}=\dfrac{k_2}{k_{1'}},
    \quad \bar{y}=\dfrac{k_3}{k_{1'}},\quad \bar{z}=\dfrac{k_2k_3}{k_{1'}k_{1''}},
\end{equation}
which leads to the Jacobian matrix, evaluated at $(\bar{x},\bar{y},\bar{z})$:

\begin{equation}
\begin{split}
    G(\bar{x},\bar{y},\bar{z})&=\begin{pmatrix}
        k_3-k_{1'}y & -k_{1
    '}x & 0\\
    -k_{1'}y & -k_2 -k_{1'}x& 2k_{1''}\\
    k_{1'}y & k_{1'}x & -k_{1''}
\end{pmatrix}\Bigg|_{(x,y,z)=(\bar{x},\bar{y},\bar{z})}\\
&=
\begin{pmatrix}
        0 & -k_{2} & 0\\
    -k_3 & -2k_2 & 2k_{1''}\\
    k_3 & k_2 & -k_{1''}
\end{pmatrix}=\begin{pmatrix}
        0 & -1 & 0\\
    -1 & -2 & 2\\
    1 & 1 & -1
\end{pmatrix}\begin{pmatrix}
 k_3 & 0 & 0\\
    0 & k_2 & 0\\
    0 & 0 & k_1
\end{pmatrix}.
\end{split}
\end{equation}
Therefore, the $D$-hyperbolicity of $$\begin{pmatrix}
        0 & -1 & 0\\
    -1 & -2 & 2\\
    1 & 1 & -1
\end{pmatrix},$$
is equivalent to the dynamical hyperbolicity of $\bar{x},\bar{y},\bar{z}$ for all choices of reaction rates. To do so, we exclude purely-imaginary eigenvalues of 
$$
\begin{pmatrix}
        0 & -k_{2} & 0\\
    -k_3 & -2k_2 & 2k_{1''}\\
    k_3 & k_2 & -k_{1''}.
\end{pmatrix}
$$
Computing $$\operatorname{det}G(\bar{x},\bar{y},\bar{z})=-k_{1''}k_2k_3\neq0$$
excludes real zero eigenvalues for all choices of parameters. Indirectly assume that $G(\bar{x},\bar{y},\bar{z})$ has purely-imaginary eigenvalues. This would imply the existence of real $\omega\in \mathbb{R}$ such that $\omega i$ is a root of the characteristic polynomial:
\begin{equation}
 g(\omega i):=\operatorname{det}(G(\bar{x},\bar{y},\bar{z})-\omega i\:Id)=
 \omega^3i+(2k_2 + k_{1''})\omega^2 + k_2k_3\omega i - k_2k_3k_{1''}.
\end{equation}
However, since real part of the characteristic polynomial
$$\Re (g(\omega i))=(2k_2 + k_{1''})\omega^2-k_2k_3k_{1''}$$
and imaginary part 
$$\Im (g(\omega i))=\omega^3+k_2k_3\omega$$
need both to be zero, the \emph{resultant} of such two polynomials in $\omega$ must be zero, as well, since they share a root. However:
\begin{equation}
\operatorname{resultant}(\Re (g(\omega i)),\Im (g(\omega i))= 
4k_2^4k_3^2k_{1''}+ k_2^3k_3^3k_{1''}^3 + 4k_2^3k_3^2k_{1''}^2 + k_2^2k_3^2k_{1''}^3 \neq 0,
\end{equation}
for all choices of parameters. Thus purely imaginary eigenvalues of the Jacobian $G(\bar{x},\bar{y},\bar{z})$ are excluded.

\section{Hidden explicit catalysis may trigger oscillations}\label{ex:hidcat}
\addtocontents{toc}{\protect\setcounter{tocdepth}{0}}

We consider the following stoichiometric matrix:
\begin{equation}
\mathbb{S}=
\begin{pmatrix}
-1 & 0 & 0 & 0 & 1 \\
1 & -1 & 0 & 0 & 0 \\
0 & 1 & -1  & 0 & 0\\
0 & 0 & 1& -1 & 0\\
0 & 0 & 0  & 1 & -1\\
\end{pmatrix}.
\end{equation}

If the network is assumed to be nonambiguous, that is, if we assume absence of explicit catalysts in each reaction, such stoichiometry corresponds to a network consisting of a simple monomolecular cycle.

\begin{equation}\label{eq:exprimum}
\ce{X_1} \quad\underset{1}{\rightarrow}\quad \ce{X_2} 
\quad\underset{2}{\rightarrow}\quad \ce{X_3}
\quad\underset{3}{\rightarrow}\quad \ce{X_4}
\quad\underset{4}{\rightarrow}\quad \ce{X_5}
\quad\underset{5}{\rightarrow}\quad \ce{X_1}
\end{equation}

It is easy to see that there is no oscillatory core, and more in general that the associated dynamical system does not admit any Hopf bifurcation. As a side comment, in the language of Feinberg's theory \cite{Fei19}, the network \eqref{eq:exprimum} has \emph{deficiency zero}, and it is weakly reversible. This implies that the associated mass-action system admits a unique positive complex-balanced steady-state in each stoichiometric compatibility class. Such steady-state is asymptotically stable, due to the presence of a global Lyapunov function in the positive orthant.

In this section, we show that - without the nonambiguity assumption -  it is possible to design a network with a net (!) stoichiometry as $\mathbb{S}$ above, which admits oscillations. To add explicit catalysis in a proper way, we assume that the rate of each reaction $j$ also depends on the concentration of the species $\ce{X_{j-1}}$.  This way the associated ODE system \eqref{eq:dynamics}  reads:
\begin{equation}\label{eq:explicit}
\begin{cases}
\dot{x}_1=-r_1(x_1,x_5)+r_5(x_4,x_5)\\
\dot{x}_2=-r_2(x_1,x_2)+r_1(x_1,x_5)\\
\dot{x}_3=-r_3(x_2,x_3)+r_2(x_1,x_2)\\
\dot{x}_4=-r_4(x_3,x_4)+r_3(x_2,x_3)\\
\dot{x}_5=-r_5(x_4,x_5)+r_4(x_3,x_4)\\
\end{cases},
\end{equation}
or -- in general --
\begin{equation}\label{eq:monotonecyclic}
    \dot{x}_i=-r_i(x_i, x_{i-1}) + r_{i-1}(x_{i-2},x_{i-1}).
\end{equation}
Note that the stoichiometric matrix of system \eqref{eq:explicit} coincides
with $\mathbb{S}$ above. Note, that this model violates the nonambiguity assumption, since the reactivities here depend on the catalysts
  that are not represented in $\mathbb{S}$, i.e., the reactivity matrix $R$ has non-zero entries that are 
  \emph{not implied} by the stoichiometric matrix $\mathbb{S}$. The steady-state constraints are
particularly easy:
\begin{equation}\label{eq:eqconstraints}
    r_i(x)=r_{i-1}(x)
\end{equation}
which means that the unique steady-state flux vector $v$ for the stoichiometric matrix $S$
is the right kernel vector $$v=(c,c,c,c,c,c)^T,$$
with $c>0$. The vector $v$ identifies also the unique conserved quantity, i.e. the left kernel vector. This implies 
$$x_1(t)+x_2(t)+x_3(t)+x_4(t)+x_5(t)=C.$$

For simplicity of presentation, let us consider explicit polynomial kinetics of the following form:
$$r_i=k_i x_i x_{i-1}^n, \quad \text{with $n\ge2$},$$
which gives rise to the system:
\begin{equation}\label{eq:explicitsystem}
\begin{cases}
\dot{x}_1=-k_1x_1x_5^n+k_5x_4^nx_5\\
\dot{x}_2=-k_2x_1^nx_2+k_1x_1x_5^n\\
\dot{x}_3=-k_3x_2^nx_3+k_2x_1^nx_2\\
\dot{x}_4=-k_4x_3^nx_4+k_3x_2^nx_3\\
\dot{x}_5=-k_5x_4^nx_5+k_4x_3^nx_4\\
\end{cases}\end{equation}
Such kinetics can be seen as generalized mass-action for the network with reactions 
$$\ce{X}_i + \ce{X}_{i-1} \quad \rightarrow \quad \ce{X}_{i+1} + \ce{X}_{i-1},$$
or as standard mass-action for reactions
$$\ce{X}_i + n \ce{X}_{i-1} \quad \rightarrow \quad \ce{X}_{i+1} + n \ce{X}_{i-1}.$$

In any case, we write the Jacobian of such systems evaluated at steady-states. To do so, we follow a standard parametrization procedure discussed in detail in Clarke's \emph{Stoichiometric Network Analysis} \cite{ClarkeSNA}. The Jacobian $G(\bar{x})$ evaluated at a steady-state $\bar{x}$ reads:
\begin{equation}\label{eq:Jac}
G(\bar{x})=\mathbb{S}\operatorname{diag}(c,c,c,c,c,c) K^T\operatorname{diag}(1/{\bar{x}_1},...,1/{\bar{x}_5}),
\end{equation}
where $K$ is the so-called $5\times 5$ \emph{kinetic matrix} defined in this case as:
$$K_{im}=\begin{cases}
n\quad\text{if $m={i-1};$}\\
    1\quad\text{if $m=i$;}\\
    0\quad\text{otherwise.}
\end{cases}.$$
Fix $c=1$, without loss of generalities. Calling $D:=\operatorname{diag}(1/\bar{x}_1,...,1/\bar{x}_5)$ we get:
\begin{equation}\label{eq:Jac2}
\begin{split}
G(\bar{x})&=\mathbb{S}K^T D\\
&=\begin{pmatrix}
    -1 & 0 & 0 & 0 & 1\\
    1 & -1 & 0 & 0 & 0\\
    0 & 1 & -1 & 0 & 0\\
    0 & 0 & 1 & -1 & 0\\
    0 & 0 & 0 & 1 & -1\\
\end{pmatrix} \begin{pmatrix}
1 & 0 & 0 & 0 & n\\
n & 1 & 0 & 0 & 0\\
0 & n & 1 & 0 & 0\\
0 & 0 & n & 1 & 0\\
0 & 0 & 0 & n & 0\\
\end{pmatrix} D\\
&= \begin{pmatrix}
    -1 & 0 & 0 & n & 1-n\\
    1-n & -1 & 0 & 0 & n\\
    n & 1-n & -1 & 0 & 0\\
    0 & n & 1-n & -1 & 0\\
    0 & 0 & n & 1-n & -1\\
\end{pmatrix}D\\
&=CD.
\end{split}
\end{equation}
We can think at the values of the positive steady-state itself as parameters. Thus $D$ can be thought as a positive diagonal matrix with parametric entries $1/\bar{x}$, directly bridging to $D$-stability concepts and Sec. \ref{sec:linearalgebra}. Detailed analysis of this type of procedure for mass action systems can be found in \cite{Vassena2025}. We further note that $C$ is also a $P^-_0$ matrix. However, due to the presence of a conserved quantity, $\operatorname{det}C=0$ and thus $C$ is not a $P^-_{FF}$ matrix. The Fisher-Fuller Theorem, \ref{thm:FF}, does not readily apply. However, we can consider the $4\times4$ principal submatrix:
$$C[1:4]=\begin{pmatrix}
    -1 & 0 & 0 & n\\
    1-n & -1 & 0 & 0\\
    n & 1-n & -1 & 0\\
    0 & n & 1-n & -1\\
\end{pmatrix}\text{ with $\operatorname{det}C[1:4]=n^4- n^3 + n^2 - 1$}.$$
It can be shown that for $n\ge2$ $C[1:4]$ is unstable. As this $C[1:4]$ is
further a $P^-$ matrix, the Fisher-Fuller Theorem applies and we can
conclude that $G$ is a $D$-Hopf matrix, and thus Thm.~\ref{thm:mainmain}
guarantees that the system admits nonstationary periodic orbits. See
Fig.~\ref{fig:hiddencatalysis} for a numerical simulations of the periodic
orbits. For simplicity, we have here focused on one single explicit
example. Similar features appear for variations on the same example, for
example, for larger $n$ or different kinetics.

\begin{figure}[ht]
    \centering
    \includegraphics[width=0.75\linewidth]{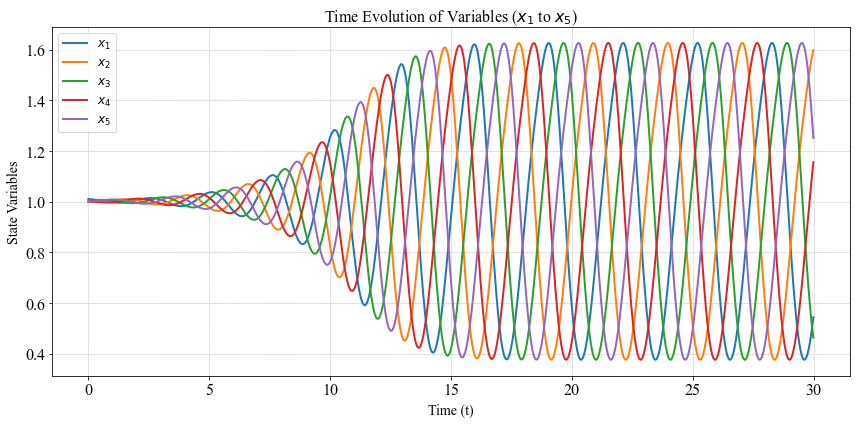}
    \caption{A numerical simulation of system \eqref{eq:explicitsystem}, with constants $k_1=k_2=k_3=k_4=k_5=1$, and $n=2$. The unique steady-state $\bar{x}=(1,1,1,1,1)$ is unstable. Nearby initial conditions $x_0=(1.01, 1, 1, 1, 1)$ shows convergence to a stable periodic orbit.}
\label{fig:hiddencatalysis}
\end{figure}

\paragraph{Consequences and conclusions.} We have thus shown that a
innocent-looking stoichiometry, even deficiency zero if interpreted
verbatim, can give rise to periodic oscillations if proper ``hidden''
catalysts are added. This observation highlights a more general issue regarding modelling
  assumptions. As noted above, ``hidden'' catalysts violate the nonambiguity assumption
  on non-zero reactivities. In a similar vein, \cite{MuFlSt:22}
  characterized chemically feasible networks in terms of thermodynamic
  soundness. Here, the underlying assumption was again that reactions are
  specified nonambiguously by the stoichiometric matrix, i.e., there are no hidden reactants or products.
  If reactants are considered to be ``buffered'', i.e., supplied by an
  infinite external reservoir, their concentrations become constant and
  thus their reactivities vanish. This eliminates them from the reaction
  rate model. For the purpose of investigating reaction kinetics, buffered
  reactants can also be dropped from the stoichiometric matrix
  $\mathbb{S}$. From a chemical point of view, however, we are then considering an
  open system, and hence the network can admit nonstationary periodic
  solutions, which are impossible in a closed chemical reaction system.
  The choice of reaction kinetics, therefore, may imply that there are
  hidden reactants and products. In fact, nothing in our definitions of the
  kinetic laws ensures that they are consistent with the constraints
  imposed by thermodynamics. It remains a topic for future research
  to disentangle in full generality the relationships between reaction
  kinetics, thermodynamics, and external in/out fluxes of reactants and
  energy, see e.g.\ \cite{polettini_irreversible_2014}.

To modify our example above to make it \emph{thermodynamically sound} sensu \cite{MuFlSt:22}, we add
reversible reactions. We then leverage standard \emph{inheritance} results
by Banaji \cite{Ba18} which guarantee that a nondegenerate periodic
solution in an irreversible system persists if reversible reactions are
added. This implies the existence of periodic solutions in systems with
reversible stoichiometry as:
\begin{equation}
\mathbb{S}=
\begin{pmatrix}
-1 & 1 & 0 & 0 & 0 & 0 & 0 & 0 & 1 & -1\\
1 & -1 & -1 & 1 & 0 & 0 & 0 & 0 & 0& 0 \\
0 & 0 & 1 & -1 & -1 & 1  & 0 & 0 & 0& 0\\
0 & 0 & 0 & 0 & 1& -1 & -1 & 1 & 0& 0\\
0 & 0 & 0 & 0 & 0 &0  & 1& -1 & -1& 1\\
\end{pmatrix},
\end{equation}
which is thermodynamically sound according to \cite{MuFlSt:22}: one can realize closed systems with this
stoichiometry. However, oscillations cannot persist in such a truly closed system. This
example illustrates that a deviation of only (effective) rate constants
from their closed-system constraints can suffice to produce
oscillations.

\vspace{3cm}



 \textbf{\Large \hypertarget{part3}{PART III}: Appendix. How to find oscillations in silico}

As an example, we explicitly construct an oscillatory system starting from an oscillatory core. This presentation remains abstract and serves solely to illustrate how our methods can identify oscillations in systems incorporating our framework. The realistic design of chemical oscillations involves far greater detail and complexity, which lies beyond the scope of this work. For simplicity, we consider the smallest of our selection, i.e. Oscillatory Core (I,a), which can be summarized in the following stoichiometric matrix:
\begin{equation}
\begin{pmatrix}
-1 & 2 & 1\\
1 & -1 & 1\\
0 & -1 & -1
\end{pmatrix}.
\end{equation} 
As remarked in Obs.~\ref{obs:invertible}, oscillatory cores are always invertible, and consequently do not possess any right kernel vector, i.e., a system whose stoichiometric matrix consists of the oscillatory core alone does not admit any steady-state, for any choice of reaction rates. The cheapest modification to the system that admits a steady-state, without adding new species or complexes, simply considers each of the reactions in the oscillatory core as reversible:
\begin{equation}
\begin{cases}
\ce{X}&\underset{4}{\overset{1}{\rightleftharpoons}} \quad \ce{Y}\\
\ce{Y}+\ce{Z} &\underset{5}{\overset{2}{\rightleftharpoons}} \quad  2\ce{X}\\
\ce{Z}&\underset{6}{\overset{3}{\rightleftharpoons}} \quad \ce{X}+\ce{Y}
\end{cases},
\end{equation}
with stoichiometric matrix:
\begin{equation}
\mathbb{S}=
\begin{pmatrix}
-1 & 2 & 1 & 1 & -2 & -1\\
1 & -1 & 1 & -1 & 1 & -1\\
0 & -1 & -1& 0 & 1 & 1
\end{pmatrix}.
\end{equation}
This way the associated dynamical systems,
\begin{equation}\label{eq:appendix}
\begin{cases}
\dot{x}=-r_1(x)+2r_2(y,z)+r_3(z)+r_4(y)-2r_5(x)-r_6(x,y)\\
\dot{y}=r_1(x)-r_2(y,z)+r_3(z)-r_4(y)+r_5(x)-r_6(x,y)\\
\dot{z}=-r_2(y,z)-r_3(z)+r_5(x)+r_6(x,y)
\end{cases}
\end{equation}
admits a steady-state for 
\begin{equation}
\begin{cases}
\bar{r}_1(\bar{x})=\bar{r}_4(\bar{y})\\
\bar{r}_2(\bar{y},\bar{z})=\bar{r}_5(\bar{x})\\
\bar{r}_3(\bar{z})=\bar{r}_6(\bar{x},\bar{y}).
\end{cases}
\end{equation}
Our framework requires using a kinetics that is parameter-rich, and we choose for this demonstration rate functions found for irreversible Michaelis--Menten kinetics, Eq.~\ref{MMeq}. For convenience, we consider the case where the rate functions can be factorized into rational functions with numerator and denominator being polynomials of degree 1. This entails no loss of generality while simplifying notation and leaving enough parameters for parameter-richness.\footnote{for instance, a rate like $r_5(x)$ for $\ce{X} + \ce{X} \rightarrow ..$ would usually have as its MM expression $a_5 \frac{x}{1+c_{5,1}x + c_{5,2}x^2}$ and ipso facto depend on three parameters. The factorization into $a_5 \left( \frac{x}{1+b_{5}x}\right)^2$ reduces the number of parameters to 2, effectively introducing a parametric constraint, but still leaving enough freedom for parameter-richness. }

Under such modeling assumption, the reaction rates become:
\begin{equation}
\begin{cases}
r_1(x)=a_1 \dfrac{x}{1+b_1x}\\
\\
r_2(y,z)=a_2 \dfrac{y}{1+b_2^y y} \dfrac{z}{1+b_2^z z}\\
\\
r_3(z)=a_3 \dfrac{z}{1+b_3z}\\
\\
r_4(y)=a_4 \dfrac{y}{1+b_4y}\\
\\
r_5(x)=a_5 \bigg(\dfrac{x}{1+b_5x}\bigg)^2\\
\\
r_6(x,y)=a_6\dfrac{x}{1+b_6^x x}  \dfrac{y}{1+b_6^y y} \\
\end{cases}
\end{equation}
We leverage the parameter-richness of the kinetics: we fix a priori, arbitrarily and to our convenience, reference values of the steady-state 
\begin{equation}
(\bar{x},\bar{y},\bar{z})=(1,1,1)
\end{equation}
and of its fluxes 
\begin{equation}
(\bar{r}_1(1),\bar{r}_2(1,1),\bar{r}_3(1),\bar{r}_4(1),\bar{r}_5(1),\bar{r}_6(1,1))=(2,2,2,2,2,2).
\end{equation}
In order to establish this value choice as a reference steady state, independent of the parameters, we express the parameters $\mathbf{a}$ in terms of the parameters $\mathbf{b}$ through the following parametrization $\mathbf{a}(\mathbf{b})$
\begin{equation}
\begin{cases}
a_1=2(1+b_1)\\
a_2=2(1+b_2^y)(1+b_2^z)\\
a_3=2(1+b_3)\\
a_4=2(1+b_4)\\
a_5=2(1+b_5)^2\\
a_6=2(1+b_6^x)(1+b_6^y).
\end{cases}
\end{equation}
It is straightforward to check that that such choice of the parameters $\mathbf{a}$ makes $(\bar{x},\bar{y},\bar{z})=(1,1,1)$ a steady-state with fluxes $(2,2,2,2,2,2)$ irrespective of the choice of $\mathbf{b}$. The parameters $\mathbf{b}$, however, still can decide the relative size of the derivatives that appear in the Jacobian, and can thus be used to harness the system dynamics towards the unstable region where periodic orbits arise. We recall that the linear stoichiometric argument for our result, based on the concept of $D$-Hopf matrix,  is essentially that the product
\begin{equation}
\begin{pmatrix}
-1 & 2 & 1\\
1 & -1 & 1\\
0 & -1 & -1
\end{pmatrix}\begin{pmatrix}
1 & 0 & 0\\
0 & 1 & 0\\
0 & 0 & \beta
\end{pmatrix}=\begin{pmatrix}
-1 & 2 & \beta\\
1 & -1 & \beta\\
0 & -1 & -\beta
\end{pmatrix}
\end{equation}
is Hurwitz-stable for $\beta=1$ and Hurwitz-unstable for $\beta>0$ small enough, while being invertible for any $\beta>0$. 

We want to mimic this idea for the symbolic Jacobian matrix $\mathbb{S}R$ of the system \eqref{eq:appendix}:
\begin{equation}
\begin{split}\mathbb{S}R=&
\begin{pmatrix}
-1 & 2 & 1 & 1 & -2 & -1\\
1 & -1 & 1 & -1 & 1 & -1\\
0 & -1 & -1& 0 & 1 & 1
\end{pmatrix}\begin{pmatrix}
R_{1X} & 0 & 0\\
0 & R_{2Y} & R_{2Z}\\
0 & 0 & R_{3Z}\\
0 & R_{4Y} & 0\\
R_{5X} & 0 & 0\\
R_{6X} & R_{6Y} & 0
    \end{pmatrix}\\
=&
    \begin{pmatrix}
        -R_{1X}-2R_{5X}-R_{6X} & 2R_{2Y}+R_{4Y}-R_{6Y} & 2R_{2Z}+R_{3Z}\\
R_{1X}+R_{5X}-R_{6X} & -R_{2Y}-R_{4Y}-R_{6Y} & -R_{2Z}+R_{3Z}\\
R_{5X}+R_{6X} & -R_{2Y}+R_{6Y} & -R_{2Z}-R_{3Z}
    \end{pmatrix}
    \end{split}
\end{equation}
To do so, we firstly pick the following choices for the derivatives $R_{jm}$:
\begin{equation}
    \begin{cases}
        R_{1X}=1\\
        R_{2Y}=1\\
        R_{3Z}=\beta\\
        R_{2Z}=R_{4Y}=R_{5X}=R_{6X}=R_{6Y}=0.1
    \end{cases}
\end{equation}
where we have highlighted precisely the derivatives in correspondence of the oscillatory core, as the symbolic Jacobian evaluated at this choice of parameters is
\begin{equation}
G^*=
\begin{pmatrix}
        -1-0.3 & 2 & 0.2+\beta\\
1 & -1-0.2& -0.1+\beta\\
0.2 & -1+0.1 & -0.1-\beta
    \end{pmatrix}
\end{equation}
Consider $\beta\in[0.08,1]$. At $\beta=1$ the eigenvalues of $G^*$ are approximately $(-2.66, -0.47 \pm 0.39i)$ and thus $G^*$ is Hurwitz-stable. In contrast, at $\beta=0.08$ the eigenvalues are   $(-2.72, \mathbf{0.02 \pm 0.18i})$: A Hopf bifurcation occurred on the way. 
By direct inversion of the Michaelis--Menten nonlinearity \eqref{MMeq}, we can then find the explicit values for the parameters $\mathbf{b}$:
\begin{equation}
    \begin{cases}
        b_1=1;\\
        b_{2}^y=1;\\
        b_2^z=19;\\
        b_3^z\in[1,24] \text{ which corresponds to }R_{3z}=\beta\in[1,0.08]\\
        b_{4}=19;\\
        b_{5}=19;\\
        b_6^x=19;\\
        b_6^y=19.
    \end{cases}
\end{equation}
and we can check numerically the dynamics for $\beta=0.08$ to find indeed a stable periodic orbits, see Fig.~\ref{fig:appendix}.
We see how the choice of parameters for the oscillatory regime precisely highlight the autocatalytic core 
\begin{equation}
    \begin{pmatrix}
    -1 & 2\\
    1 & -1
\end{pmatrix},
\end{equation}
i.e., the associated derivatives $R_{1X}$ and $R_{2Y}$ are the only of of order 1, while all the others are chosen small enough. 

\begin{figure}
    \centering
    \includegraphics[width=0.75\linewidth]{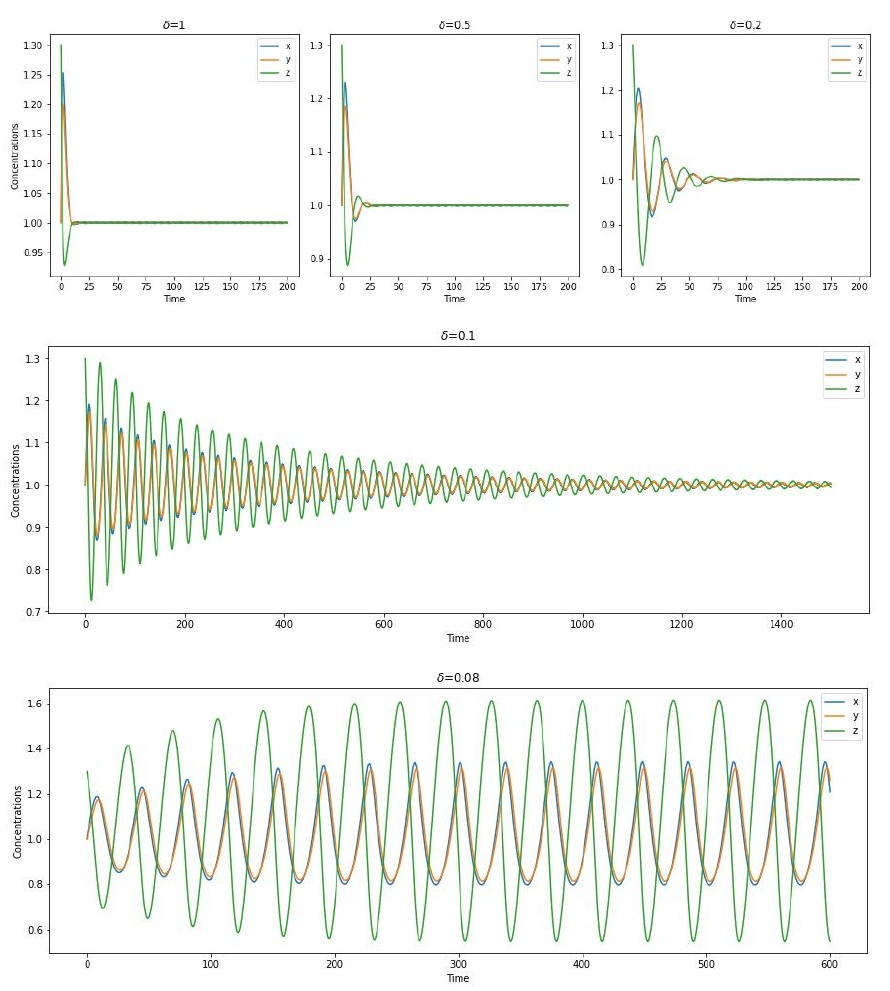}
    \caption{Numerical simulations for decreasing values of $\beta=1,0.5,0.2,0.1,1.$ Initial conditions are chosen always $(x_0,y_0,z_0)=(1,1,1.3)$ For $\beta=1$ the stable steady-state $(\bar{x},\bar{y},\bar{z})$ is strongly attracting. Its stability weakens with decreasing values of $\beta$. At $\beta=1.1$ we clearly see a damped oscillations that slowly converges to the steady-state. At $\beta=0.8$ the steady-state is already unstable, and a stable periodic orbits appears.}
    \label{fig:appendix}
\end{figure}

\end{document}